\newcounter{algono}
\newcommand{\algo}[1]{\refstepcounter{algono}\label{#1}}
\newtheorem{theorem}{Theorem}
\newtheorem{lemma}{Lemma}
\newtheorem{corollary}{Corollary}
\newtheorem{definition}{Definition}
\newlength\ubwidth
\newcommand{\Order}{\mathrm{O}}
\newcommand{\mypara}[1]{\smallskip\noindent\textbf{#1.}}  
\newcommand{\Exp}{\mathbb{E}}
\renewcommand{\Pr}{\mathbb{P}}
\newcommand{\dist}{\text{dist}}
\date{}
\title{Distributed Minimum Vertex Coloring and Maximum Independent Set in Chordal Graphs\thanks{C.~Konrad carried out most work on this paper while being at the University of Warwick. He was supported by the Centre for Discrete Mathematics and its Applications (DIMAP) at Warwick University and by EPSRC award EP/N011163/1. V.~Zamaraev is supported by EPSRC award EP/P020372/1.}}
\author{Christian Konrad\thanks{Department of Computer Science, University of Bristol, UK}
\and
Viktor Zamaraev\thanks{Department of Computer Science, Durham University, UK}}
\begin{document}

\maketitle

\begin{abstract}
We give deterministic distributed $(1+\epsilon)$-approximation algorithms for Minimum Vertex Coloring and Maximum Independent Set on chordal graphs in the {\sf LOCAL} model. Our coloring algorithm runs in $\Order(\frac{1}{\epsilon} \log n)$ rounds, and our independent set algorithm has a runtime of $\Order(\frac{1}{\epsilon}\log(\frac{1}{\epsilon})\log^* n)$ rounds. For coloring, existing lower bounds imply that the dependencies on $\frac{1}{\epsilon}$ and $\log n$ are best possible. For independent set, we prove that $\Omega(\frac{1}{\epsilon})$ rounds are necessary.
  
  Both our algorithms make use of a tree decomposition of the input chordal graph. They iteratively peel off interval subgraphs, which are identified via the tree decomposition of the input graph, thereby partitioning the vertex set into $\Order(\log n)$ layers. For coloring, each interval graph is colored independently, which results in various coloring conflicts between the layers. These conflicts are then resolved in a separate phase, using the particular structure of our partitioning. For independent set, only the first $\Order( \log \frac{1}{\epsilon})$ layers are required as they already contain a large enough independent set. We develop a $(1+\epsilon)$-approximation maximum independent set algorithm for interval graphs, which we then apply to those layers. 

This work raises the question as to how useful tree decompositions are for distributed computing. 
\end{abstract}

 \section{Introduction} 
 \mypara{The {\sf LOCAL} Model}
 In the {\sf LOCAL} model of distributed computation, the input graph $G=(V, E)$ represents a communication network, where every 
 network node hosts a computational entity. Nodes have unique IDs. A distributed algorithm is executed on all network nodes simultaneously 
 and proceeds in discrete rounds. Initially, besides their own IDs, nodes only know their neighbors. Each round consists of a computation and a communication 
 phase. In the computation phase, nodes are allowed to perform unlimited computations. In the communication phase, nodes can send individual 
 messages of unbounded sizes to all their neighbors (and receive messages from them as well). The runtime of the algorithm is the total number of communication
 rounds, and the objective is to design algorithms that run in as few rounds as possible. The output is typically distributed: 
 For vertex colorings, it is required that upon termination of the algorithm, every node knows its own color, and for independent sets,
 every node knows whether it participates in the independent set.
 
 \mypara{Distributed Vertex Coloring}
 Vertex coloring problems have been studied in distributed computational models since more than 30 years (e.g. \cite{cv86,gps88}).  
 Given a graph $G=(V, E)$, a {\em (legal) $c$-coloring} of $G$ is an assignment $\gamma: V \rightarrow \{1, 2, \dots, c \}$ of at most $c$ colors to the 
 nodes of $G$ such that every pair of adjacent nodes receives different colors. The algorithmic challenge lies in computing colorings
 with few colors. The {\em chromatic number} $\chi(G)$ is the smallest $c$ such that there is a $c$-coloring of $G$. The \textsc{Minimum
 Vertex Coloring} problem (\textsc{MVC}) consists of finding a $\chi(G)$-coloring and is one of the problems studied in this paper. This is a difficult task, 
 even in the centralized setting: In general graphs, \textsc{MVC} is NP-complete \cite{k72} and hard to approximate within a factor of $n^{1-\epsilon}$, for every $\epsilon > 0$ \cite{z07}.
 
 Most research papers on distributed vertex coloring address the problem of 
 computing a $(\Delta+1)$-coloring, where $\Delta$ is the maximum degree of the input graph. Sequentially, a simple greedy algorithm that traverses 
 the nodes in arbitrary order and assigns the smallest color possible solves this problem. Distributively, this is a non-trivial task, and a long line 
 of research has culminated in the randomized algorithm of Harris et al. \cite{hss16}, which runs in 
 $\Order(\sqrt{\log \Delta} + 2^{\Order( \sqrt{ \log \log n } ) } )$ rounds,
 and the deterministic algorithm of Fraigniaud et al. \cite{fhk16}, which runs in $\Order(\sqrt{\Delta} \log^{2.5} \Delta + \log^* n)$ rounds.
 
 Only very few research papers address the \textsc{MVC} problem in a distributed model itself. On general graphs, the best 
 distributed algorithm computes a $\Order(\log n)$-approximation in $\Order(\log^2 n)$ rounds \cite{b12} and is based on
 the network decomposition of Linial and Saks \cite{ls93}. This algorithm uses exponential time computations, which due to the computational hardness of \textsc{MVC} is necessary unless $P=NP$. Barenboim et al. \cite{beg15} gave a 
 $\Order(n^{\epsilon})$-approximation algorithm that runs in $\exp \Order(1/\epsilon)$ rounds. Both the exponential time computations
 and the relatively large best known approximation factor of $\Order(\log n)$ on general graphs motivate the study of special graph classes. 
 Besides results on graph classes with bounded chromatic number (planar graphs \cite{gps88} and graphs of bounded arboricity 
 \cite{be10,gl17}), the only graph class with unbounded chromatic number that has been addressed in the literature are 
 {\em interval graphs}, which are the intersection graphs of intervals on the line. Halld\'{o}rsson and Konrad gave a $(1+\epsilon)$-approximation algorithm for 
 \textsc{MVC} on interval graphs that runs in $\Order(\frac{1}{\epsilon} \log^* n)$ rounds \cite{hk17} (see also their previous work \cite{hk14}). This work is the 
 most relevant related work to our results.

 \mypara{Distributed Independent Sets} An {\em independent set} in a graph $G=(V, E)$ is a subset of non-adjacent nodes 
 $I \subseteq V$. 
 Algorithms for independent sets are usually designed with one of the following two objectives in mind: 
 (1) Compute a {\em maximal independent set}, i.e., an independent set $I$ that cannot be enlarged by adding a node outside $I$ to it, or 
 (2) Compute a \textsc{Maximum Independent Set} (\textsc{MIS}) (or an approximation thereof), i.e., an independent set of maximum size, 
 which is the variant studied in this paper. Similar to \textsc{MVC}, the \textsc{MIS} problem is NP-complete \cite{k72} and hard to 
 approximate within a factor of $n^{1-\epsilon}$, for every $\epsilon > 0$ \cite{z07}. In the distributed setting, Luby \cite{l86} 
 and independently Alon et al. \cite{abi86} gave distributed $\Order(\log n)$ rounds maximal independent set algorithms more than 30 
 years ago. Improved results are possible for graphs with bounded maximum degree (\cite{beps16,g16}) or on specific graph 
 classes (e.g. \cite{cv86,sw10}). Using exponential time computations, a $(1+\epsilon)$-approximation to \textsc{MIS} can be computed 
 in general graphs in $\Order(\frac{1}{\epsilon} \log n)$ rounds \cite{bhkk16} (see also \cite{gkm17}). 
 Deterministic distributed \textsc{MIS} algorithms may be inferior to randomized ones: It is known that every deterministic \textsc{MIS}
 $\Order(1)$-approximation algorithm on a path requires $\Omega(\log^* n)$ rounds \cite{lw08,chw08}, while a simple 
 randomized $\Order(1)$ rounds $\Order(1)$-approximation algorithm exists \cite{chw08}.

 \mypara{Chordal Graphs}
 In this paper, we study \textsc{MVC} and \textsc{MIS} on {\em chordal graphs}. A graph is chordal, if every cycle on at least four 
 nodes contains a {\em chord}, i.e., an edge different from the edges of the cycle connecting two nodes of the cycle. Chordal 
 graphs play an important role in graph theory and have many applications, for example in belief propagation in machine learning 
 (e.g. the Junction Tree algorithm). They constitute a superclass of interval graphs and trees and an inportant subclass of perfect graphs. 
 The key motivations for our work are as follows: 
 
  \textbf{1. Minimum Vertex Colorings.} 
  Since the best known distributed \textsc{MVC} algorithm only gives a $\Order(\log n)$-approximation, we are interested in pinpointing graph
  structures that are difficult to handle. In this paper, we show that \textsc{MVC} and \textsc{MIS} can both be well solved on chordal graphs. A defining
  feature of a chordal graph is that it does not contain any induced cycles of lengths at least $4$. This in turn implies that difficult instances
  for distributed coloring necessarily contain induced cycles of length at least 4.
  
  Furthermore, as previously mentioned, \textsc{MVC} can be solved well on interval graphs in the distributed setting \cite{hk17}. 
  We are therefore interested in identifying more general graph classes that admit distributed $(1+\epsilon)$-approximation algorithms 
  for \textsc{MVC}. Since trees are chordal, Linial's lower bound for coloring trees applies \cite{l92}. Linial proved 
  that coloring trees with a constant number of colors requires 
  $\Omega(\log n)$ rounds, which gives a $\Omega(\log n)$ lower bound for any constant factor approximation to $\textsc{MVC}$ on chordal graphs.
  This separates the difficulties of \textsc{MVC} on chordal and interval graphs. Furthermore, Halld\'{o}rsson
  and Konrad proved that a $(1+\epsilon)$-approximation to \textsc{MVC} on interval graphs requires $\Omega(\frac{1}{\epsilon})$ rounds \cite{hk17}. 
  Combined, we obtain a $\Omega(\frac{1}{\epsilon} + \log n)$ lower bound on the round complexity for approximating \textsc{MVC} on chordal graphs within a factor of $1 + \epsilon$.
  
  
  \textbf{2. Tree Decompositions.} Tree decompositions are a powerful algorithmic tool that allow for the design of (centralized) linear time algorithms 
  for NP-hard problems on graphs of bounded tree-width (see below for precise definitions) \cite{arnborg1989linear}. They have played however only a minor role in 
  the design of distributed algorithms (few exceptions are \cite{gw10,asd12,nm16}). A {\em tree decomposition} of a graph $G = (V, E)$ is identified by 
  a set of bags $S_1, S_2, \dots \subseteq V$ that are arranged in a tree $\mathcal{T}$ such that every adjacent pair of nodes $uv \in E$ is contained in at 
  least one bag, and, for any $v \in V$, the set of bags that contain $v$ induces a subtree in $\mathcal{T}$. One potential reason for the limited 
  success of tree decompositions as a tool for distributed computing is that even simple graphs, such 
  as a ring on $n$ nodes, require that many bags of their tree decompositions consist of nodes that are at distance $\Omega(n)$ in the original 
  graph. For these graphs, it is thus impossible that nodes obtain coherent local views of a global tree decomposition in $o(n)$ rounds.
  
  Chordal graphs are well-suited for studying distributed algorithms that exploit the input graph's tree decomposition, since in a chordal graph $G$, each bag consists of a clique in $G$. Thus, every bag that contains a node $v \in V$ further only contains nodes that lie in $v$'s neighborhood. 
We exploit this locality property and show that in the {\sf LOCAL} model, nodes can indeed obtain coherent local views of a 
global tree decomposition. 
   

 \mypara{Results} In this paper, we give deterministic distributed $(1+\epsilon)$-approximation algorithms for 
 \textsc{MVC} and \textsc{MIS} on chordal graphs in the {\sf LOCAL} model. Our algorithm for \textsc{MVC} runs 
 in $\Order(\frac{1}{\epsilon} \log n)$ rounds (\textbf{Theorem~\ref{thm:dist-alg}}), and our algorithm for \textsc{MIS}
 has a runtime of $\Order(\frac{1}{\epsilon} \log(\frac{1}{\epsilon})  \log^* n )$ rounds (\textbf{Theorem~\ref{thm:dist-alg-mis}}). 
 For \textsc{MVC}, as mentioned above, the dependencies of the runtime on $\log n$ and $\frac{1}{\epsilon}$ are best 
 possible (though the existance of an algorithm with runtine $\Order(\frac{1}{\epsilon} + \log n)$ is not ruled out). 
 For \textsc{MIS}, we prove that every possibly randomized $(1+\epsilon)$-approximation algorithm requires $\Omega(\frac{1}{\epsilon})$ rounds, even on paths (\textbf{Theorem~\ref{thm:lb-mis}}). 
 
 \mypara{Techniques} Our algorithms for \textsc{MVC} and \textsc{MIS} first compute the tree decomposition of the input 
 chordal graph. Every chordal graph $G=(V, E)$ can be represented as the intersection graph of subtrees of a tree 
 $\mathcal{T}$. In this representation, every node $v \in V$ is identified with a subtree 
 $\mathcal{T}(v) \subseteq \mathcal{T}$ such that for nodes $u,v \in V$, subtrees $\mathcal{T}(u)$ and $\mathcal{T}(v)$ intersect if 
 and only if $uv \in E$. 
 
 We then distributively run the following peeling process: In each iteration $i$, we first identify the family 
 $\mathcal{L}_i$ of all pendant (i.e., incident to at least one leaf) and all {\em long enough} paths in the tree decomposition 
 (for an appropriate notion of 'long enough'), such that every path consists of vertices of degree at most $2$. We then peel 
 off those nodes from the current graph whose corresponding subtrees in the tree 
 decomposition are subpaths of the paths in $\mathcal{L}_i$. The removed nodes $V_i$ define layer $i$. We prove that after 
 $\Order(\log n)$ iterations of the peeling process, all nodes are assigned into layers. This partitioning has the useful 
 property that every layer $V_i$ induces an interval graph in $G$. We can thus use distributed algorithms designed for interval 
 graphs on those layers.
 
 For \textsc{MVC}, we use the algorithm of Halld\'{o}rsson and Konrad \cite{hk17} to color these layers individually and independently in 
 $\Order(\frac{1}{\epsilon} \log^* n)$ rounds, which results in various coloring conflicts between the layers. These are 
 then resolved in a separate phase, using the particular structure of the layers.
 
 For \textsc{MIS}, we only compute the first $\Order(\log \frac{1}{\epsilon})$ layers of the partitioning, since they already contain
 a large enough independent set for a $(1+\epsilon)$-approximation. In order to compute a large independent set in each layer, we first
 design a $(1+\epsilon)$-approximation $\Order(\frac{1}{\epsilon} \log^* n)$ rounds algorithm for $\textsc{MIS}$ on interval graphs, 
 which is then executed on these layers.

 \mypara{Outline} We proceed as follows. In Section~\ref{sec:prelim}, we give notation and definitions. We 
 then discuss in Section~\ref{sec:local-views} how network nodes can obtain coherent local views of the tree decomposition. A centralized 
 $(1+\epsilon)$-approximation \textsc{MVC} algorithm is then presented in Section~\ref{sec:coloring-centralized}, and a distributed implementation of this algorithm is given in Section~\ref{sec:coloring-distributed}. 
In Section \ref{sec:MISinterval}, we provide a distributed $(1+\epsilon)$-approximation \textsc{MIS} algorithm for interval graphs,
which is then employed in Section \ref{sec:mis-chordal} to design a distributed $(1+\epsilon)$-approximation \textsc{MIS} algorithm for chordal graphs. We complement the latter result in Section \ref{app:lowerMIS} by proving that $\Omega(\frac{1}{\epsilon})$ rounds are required for computing a $(1+\epsilon)$-approximation to \textsc{MIS}, even on paths.
Finally, we conclude in Section~\ref{sec:conclusion}.
 

\section{Preliminaries} \label{sec:prelim}
\mypara{Notation and Definitions for Graphs}
Let $G=(V, E)$ be a graph. For a node $v \in V$, we denote by $\Gamma_G(v)$ the \textit{neighborhood} of $v$ in $G$. 
The \textit{degree} of $v$ in $G$ is defined as $\deg_G(v) := |\Gamma_G(v)|$. 
By $\Gamma_G[v]$ we denote the set $\Gamma_G(v) \cup \{ v \}$. Similarly, for a set of nodes
$W \subseteq V$ we write $\Gamma_G(W) := (\bigcup_{v \in W} \Gamma_G(v) ) \setminus W$, and $\Gamma_G[W] := \Gamma_G(W) \cup W$. 
The \textit{distance-$k$ neighborhood} of $v$ in $G$, i.e., the set of nodes at distance at most $k$ from $v$ in $G$,
is denoted $\Gamma^{k}_G(v)$. The subgraph of $G$ induced by a set of nodes $U$ is denoted by $G[U]$.
A set of pairwise adjacent (resp., non-adjacent) nodes in $G$ is called a {\em clique} (resp., an {\em independent set}). 
A clique (resp., an indepedent set) $S$ is {\em maximal} if $S \cup \{v\}$ is not a clique (resp., an indepedent set), 
for every $v \in V \setminus S$.
A {\em maximum} independent set in $G$ is an independent set of maximum size. The cardinality of a maximum independent set is
called the {\em independence number} of $G$ and denoted by $\alpha(G)$.
When the context is clear, instead of $\alpha(G[S])$ we will simply write $\alpha(S)$.
A graph is \textit{chordal} if every cycle of length at least four contains a chord, i.e., an edge that connects two
non-consecutive nodes of the cycle.
It is a well-known fact that an $n$-node chordal graph has at most $n$ maximal cliques.


\mypara{Tree Decomposition}
A \textit{tree decomposition} of an $n$-node graph $G=(V, E)$ is a forest $\mathcal{T} = (\mathcal{S}, \mathcal{E})$ whose
vertex set $\mathcal{S} = \{ S_1, S_2, \ldots, S_n \}$ is a family of subsets of $V$, and:
\begin{enumerate}
	\item every node\footnote{For convenience, throughout the paper, we say \textit{node} when referring to a vertex of an 
	underlying graph, and we say \textit{vertex} when referring to a vertex of its tree decomposition.} 
	$v \in V$ belongs to at least one subset in $\mathcal{S}$;
	\item for every edge $uv \in E$, there is a subset $S_i \in \mathcal{S}$ containing both nodes $u$ and $v$;
	\item for every node $v \in V$ the family  $\phi(\mathcal{T}, v) \subseteq \mathcal{S}$ of subsets containing $v$ induces 
	a tree in $\mathcal{T}$, which we denote $\mathcal{T}(v)$, i.e., $\mathcal{T}(v) := \mathcal{T}[\phi(\mathcal{T}, v)]$.
\end{enumerate}

\noindent
When the tree decomposition is clear from the context we will write $\phi(v)$ instead of $\phi(\mathcal{T}, v)$. 
It follows from the definition that $G$ is a subgraph of the intersection graph of the trees $\mathcal{T}(v)$.

\mypara{Tree Decomposition of Chordal Graphs}
It is well known (see, e.g., \cite{blair1993introduction}) that a graph $G$ is chordal if and only if it has a tree decomposition 
$\mathcal{T} = (\mathcal{C}, \mathcal{E})$ whose vertex set $\mathcal{C}$ is the family of maximal cliques of $G$.
We call such a tree decomposition a \textit{clique forest} of chordal graph $G$.
Since every vertex of the clique forest is a clique, $G$ coincides with the intersection graph of the subtrees $\mathcal{T}(v)$ of 
clique forest $\mathcal{T}$. In other words, a clique forest of a chordal graph $G$ is a forest $\mathcal{T} = (\mathcal{C}, \mathcal{E})$ 
whose vertex set $\mathcal{C}$ is the family of maximal cliques of $G$, such that $\mathcal{T}[\phi(v)]$ is a tree for every $v$.
If a clique forest of a chordal graph is linear, i.e., a forest with every component being a path, then
the graph is interval.

\begin{theorem}[\cite{fishburn1985interval}]\label{th:interval-char}
	A chordal graph $G$ is interval if and only if its clique forest is a linear forest.
\end{theorem}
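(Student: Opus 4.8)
The plan is to prove the two implications separately: the implication from a linear clique forest to intervality is essentially immediate from an observation already made, whereas the converse requires building a suitable clique forest out of an interval representation. Throughout I read ``its clique forest is a linear forest'' as ``$G$ admits a clique forest that is a linear forest'', since clique forests are not unique (e.g.\ $K_{1,4}$ has both path-shaped and star-shaped clique trees).

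For the direction ``$G$ admits a linear clique forest $\Rightarrow$ $G$ is interval'': it has already been noted that $G$ coincides with the intersection graph of the subtrees $\mathcal{T}(v)$ of a clique forest $\mathcal{T}$. Assuming $\mathcal{T}$ is a linear forest, I would place its path-components on pairwise disjoint segments of the real line, with the vertices of each path at distinct points in left-to-right order. Each subtree $\mathcal{T}(v)$ is a connected subgraph of a path, hence a contiguous subpath, and I associate to it the closed interval $I_v$ spanned by its points. Then $\mathcal{T}(u)$ and $\mathcal{T}(v)$ share a vertex if and only if $I_u \cap I_v \neq \emptyset$ (points in different components never coincide and those segments are disjoint), so $G$ is the intersection graph of $\{I_v : v \in V\}$ and is therefore interval.

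For the direction ``$G$ interval $\Rightarrow$ $G$ admits a linear clique forest'': recall first that every interval graph is chordal. Fix an interval representation $v \mapsto I_v$. For a maximal clique $C$, the intervals $\{I_v : v \in C\}$ pairwise intersect, so by the Helly property of intervals on the line they have a common point $p_C$ (equivalently, take $p_C$ to be the largest left endpoint among them and check directly that it lies in every such interval). Maximality forces $C = \{v \in V : p_C \in I_v\}$, and hence $p_C \neq p_{C'}$ whenever $C \neq C'$. I would then list the maximal cliques as $C_1, \dots, C_k$ with $p_{C_1} < \dots < p_{C_k}$ and take $\mathcal{T}$ to be the path $C_1 - C_2 - \dots - C_k$. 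It remains to verify the tree-decomposition axioms: (1) every node lies in a maximal clique; (2) for $uv \in E$ the clique $\{u,v\}$ extends to some maximal clique $C_i$ containing both; (3) for fixed $v$, the cliques containing $v$ are exactly those $C_i$ with $p_{C_i} \in I_v$, and since the $p_{C_i}$ are increasing and $I_v$ is an interval this is a set of consecutive indices, so $\phi(v)$ induces a subpath of $\mathcal{T}$. Thus $\mathcal{T}$ is a clique forest of $G$ (in fact a single path), as required.

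None of the individual steps is hard; the points that deserve care are the three verifications above — in particular axiom (3), which is where the total order supplied by the clique points is really used — together with the (routine) facts that interval graphs are chordal and that the $p_C$ can be chosen so that $C \mapsto p_C$ is injective. The construction in the forward direction also makes clear that an interval graph, connected or not, always has a clique forest consisting of a single path, so there is no separate case analysis for disconnected graphs.
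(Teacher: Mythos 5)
The paper never proves this statement: it is imported from \cite{fishburn1985interval} as a black box, so there is no in-paper argument to compare against, and your write-up is in effect a self-contained proof of the classical consecutive-clique-arrangement characterization (Gilmore--Hoffman/Fulkerson--Gross/Fishburn). Both directions are sound: the ``if'' direction correctly uses the fact, stated in the preliminaries, that a chordal graph equals the intersection graph of the subtrees $\mathcal{T}(v)$ of a clique forest, and your index-interval argument for ``intervals meet iff subpaths share a vertex'' is valid; the ``only if'' direction correctly extracts a point $p_C$ per maximal clique via the Helly property, shows $C=\{v: p_C\in I_v\}$ and injectivity of $C\mapsto p_C$, and verifies all three tree-decomposition axioms, including the consecutiveness needed for axiom (3). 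Two caveats are worth recording. First, your existential reading (``admits a linear clique forest'') is the standard and intended content, and it is all the paper ever uses: Lemma~\ref{lem:helpful-1} only needs the implication ``clique forest is a path $\Rightarrow$ interval'', which holds for the paper's canonical clique forest as well. Read literally against the paper's canonical (unique maximum weight spanning forest) clique forest, the ``only if'' direction could actually fail -- your own example $K_{1,4}$ shows the tie-breaking rule may select a star-shaped clique tree -- so flagging the reading, as you did, is the right call. Second, your closing remark that a disconnected interval graph has a clique forest that is a \emph{single} path conflicts with the paper's definition: clique-forest edges are edges of $\mathcal{W}_G$, hence join cliques with nonempty intersection, so cliques from different components cannot be adjacent. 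The correct conclusion, which your construction gives immediately by applying it per connected component, is a disjoint union of paths, i.e., a linear forest exactly as in the theorem statement; with that one-line adjustment your proof is complete.
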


\mypara{Binary Paths} 
We say that a path $v_1, \dots, v_k$ in $G$ is {\em binary}, if $\deg_G(v_i) \le 2$,
 for every $i \in [k]$ (note that this implies that $\deg_G(v_i) = 2$, for every $i \in \{ 2, 3, \ldots, k-1 \}$). 
We say that a binary path $v_1, \dots, v_k$ is a {\em pendant path}, if either $\deg_G(v_1) = 1$ or $\deg_G(v_k) = 1$ (or both). 
For convenience, we consider an isolated vertex as a pendant path.
A binary path $v_1, \dots, v_k$ is an {\em internal path}, if $\deg_G(v_i) = 2$, for every $i \in [k]$. 
A binary/pendant/internal path is \textit{maximal} if it cannot be enlarged by adding a vertex outside the path to it.

Let $G=(V,E)$ be a chordal graph with clique forest $\mathcal{T} = (\mathcal{C}, \mathcal{E})$. Let $\mathcal{P} = C_1, \ldots, C_k$
be a binary path in $\mathcal{T}$. We define the \textit{diameter} of $\mathcal{P}$ to be the maximum distance in $G$ between nodes in $C_1 \cup \ldots \cup C_k$, that is,
$diam(\mathcal{P}) = \max\limits_{u \in C_i, v \in C_j, i,j \in [k]} dist_G(u,v).$
Similarly, we define the \textit{independence number} of $\mathcal{P}$ as $\alpha(G[C_1 \cup \ldots \cup C_k])$.

\mypara{Distributed Algorithms for Interval Graph} 
Halld\'{o}rsson and Konrad \cite{hk17} gave a deterministic distributed algorithm for coloring interval graphs. For every $\epsilon \ge \frac{2}{\chi(G)}$, 
their algorithm computes a $(1+\epsilon)$-approximation to $\textsc{MVC}$ in $\Order(\frac{1}{\epsilon} \log^* n)$ rounds. We will reuse this algorithm and denote it by $\textsc{ColIntGraph}(\epsilon)$.

 \section{Computing Local Views of the Clique Forest} \label{sec:local-views}
Our algorithms make use of the clique forest of the input chordal graph. 
For network nodes to obtain a coherent view of the clique forest, we make use of the following 
\textit{maximum weight spanning forest} characterization: With a chordal graph $G$ we associate the 
\textit{weighted clique intersection graph} $\mathcal{W}_G$ whose vertex set is the family $\mathcal{C}$ of maximal 
cliques of $G$, and any two cliques $C_1, C_2 \in \mathcal{C}$ with a nonempty intersection are connected by an edge 
with weight $|C_1 \cap C_2|$. Then:

\begin{theorem}[\cite{bernstein1981power}]
	A forest $\mathcal{T} = (\mathcal{C}, \mathcal{E})$ is a clique forest of a chordal graph $G$ if and only if it 
	is a maximum weight spanning forest of $\mathcal{W}_G$.
\end{theorem}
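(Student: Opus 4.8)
The plan is to prove both implications simultaneously by pinning down the maximum possible weight of a spanning forest of $\mathcal{W}_G$ and then characterising exactly which spanning forests attain it. It is convenient to reduce first to the case that $G$ is connected and to argue about spanning \emph{trees} of $\mathcal{W}_G$: the connected components of $\mathcal{W}_G$ correspond bijectively to those of $G$ (a $u$--$v$ path in $G$ induces a walk through pairwise intersecting maximal cliques, and conversely every edge of $\mathcal{W}_G$ joins cliques that meet and hence lie in the same component of $G$), so a clique forest and a maximum weight spanning forest of $\mathcal{W}_G$ both decompose as disjoint unions over the components.

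The core is a weight identity. For a spanning tree $\mathcal{T}$ of $\mathcal{W}_G$ with edge set $\mathcal{E}(\mathcal{T})$, write $w(\mathcal{T})$ for its total weight. Expanding $|C \cap C'| = \sum_{v \in V} \mathbf{1}[\,v \in C \cap C'\,]$ and swapping the order of summation gives
\[
  w(\mathcal{T}) \;=\; \sum_{CC' \in \mathcal{E}(\mathcal{T})} |C \cap C'| \;=\; \sum_{v \in V} m_v,
\]
where $m_v$ is the number of edges of $\mathcal{T}$ both of whose endpoints lie in $\phi(v)$, the (nonempty) family of maximal cliques containing $v$. These $m_v$ edges are exactly the edges of $\mathcal{T}[\phi(v)]$, which, being a subgraph of a tree, is a forest on $|\phi(v)|$ vertices; hence $m_v \le |\phi(v)| - 1$, with equality if and only if $\mathcal{T}[\phi(v)]$ is connected. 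Summing over all nodes,
\[
  w(\mathcal{T}) \;\le\; \sum_{v \in V}\bigl(|\phi(v)| - 1\bigr) \;=:\; W^{\star},
\]
and this holds with equality precisely when $\mathcal{T}[\phi(v)]$ is connected for every $v \in V$.

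Next I would match the equality condition with the definition of a clique forest. Axioms (1) and (2) of a tree decomposition hold for \emph{every} spanning tree of $\mathcal{W}_G$: each node lies in a maximal clique, and each edge $uv \in E$ extends to a maximal clique containing both $u$ and $v$. Axiom (3) requires $\phi(v)$ to induce a tree in $\mathcal{T}$, and since $\mathcal{T}[\phi(v)]$ is always acyclic this is equivalent to its being connected. In the other direction, a clique tree of the connected chordal graph $G$ is a spanning subgraph of $\mathcal{W}_G$ --- if some tree-edge $C_1C_2$ had $C_1 \cap C_2 = \emptyset$, deleting it would partition $V$ into two nonempty parts (the unions of the bags on the two sides) with no $G$-edge between them, contradicting connectedness --- and, being a tree on all $|\mathcal{C}|$ maximal cliques, it is in fact a spanning tree of $\mathcal{W}_G$. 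Thus a spanning tree of $\mathcal{W}_G$ has weight exactly $W^{\star}$ if and only if it is a clique tree of $G$.

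To finish, recall that $G$, being chordal, admits at least one clique tree (the characterisation of chordal graphs recalled just before Theorem~\ref{th:interval-char}); hence the bound $W^{\star}$ is attained, so $W^{\star}$ is the maximum weight of a spanning tree of $\mathcal{W}_G$, and the maximum weight spanning trees are exactly the clique trees --- and, undoing the reduction, the maximum weight spanning forests are exactly the clique forests. I do not anticipate a genuine obstacle here; the points that need a little care are the elementary counting step ``a forest on $k$ vertices has at most $k-1$ edges, with equality iff it is connected'', recognising that this connectivity statement is literally Axiom (3), and the observation that $\mathcal{W}_G$ inherits the component structure of $G$, which is what makes the reduction to the connected case and the comparison of the two sides of the equivalence legitimate.
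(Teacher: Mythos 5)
Your proof is correct. Note that the paper itself does not prove this statement: it is imported as a black box from Bernstein and Goodman, so there is no internal argument to compare against. What you give is essentially the standard proof of that classical result, and it is sound: the double-counting identity $w(\mathcal{T})=\sum_{v\in V} m_v$, the bound $m_v\le|\phi(v)|-1$ with equality iff $\mathcal{T}[\phi(v)]$ is connected, the observation that this connectivity is precisely axiom (3) (axioms (1) and (2) being automatic for any spanning forest of $\mathcal{W}_G$), and the attainment of the bound $\sum_{v}(|\phi(v)|-1)$ via the existence of some clique tree, which is the chordality characterisation the paper recalls just before Theorem~\ref{th:interval-char}, so there is no circularity. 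Two small points deserve to be written out rather than asserted. First, in showing that a clique tree of a connected $G$ is a subgraph of $\mathcal{W}_G$, the claim that the bags on the two sides of a deleted edge $C_1C_2$ with $C_1\cap C_2=\emptyset$ partition $V$ uses axiom (3): a node lying in bags on both sides would have its induced subtree pass through the deleted edge, forcing it into $C_1\cap C_2$. Second, in the reduction to components, you should note that for a connected component $G'$ the restriction of a clique forest to the maximal cliques of $G'$ is necessarily a single tree (again by axiom (3) together with the fact that every edge of $G'$ lies inside one bag), since the paper's definition of clique forest does not say this explicitly; this is what legitimises treating both sides of the equivalence componentwise.
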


Observe that while the vertex set of a clique forest is unique, i.e., the family of maximal cliques of $G$, the edge 
set is not necessarily unique as there may be multiple different maximum weight spanning forests in $\mathcal{W}_G$.
To obtain coherent local views of a clique forest, it is thus necessary that nodes agree on a 
unique maximum weight spanning forest in $\mathcal{W}_G$. This can be achieved by defining a linear order $<$ on the edges 
of $\mathcal{W}_G$ that respects the partial order given by the edge weights, 
and preferring edges that are larger with respect to $<$. 
To this end, we first assign to every maximal clique $C \in \mathcal{C}$ a word $\sigma(C)$ over the 
alphabet of the identifiers of nodes, where $\sigma(C)$ consists of the identifiers of the nodes in $C$ listed in increasing order. 
Further, we associate with every edge $e = C_i C_j$ a triple $(w_e, l_e, h_e)$, 
where $w_e$ is the weight of $e$, i.e., $w_e = |C_i \cap C_j|$, $l_e = \text{lexmin} \{ \sigma(C_i), \sigma(C_j) \}$, and 
$h_e = \text{lexmax} \{ \sigma(C_i), \sigma(C_j) \}$. Now for two edges $e$ and $f$ we define $e < f$ if and only if
either $w_e < w_f$, or $w_e = w_f$ and $l_e \prec l_f$, or $w_e = w_f$, $l_e = l_f$ and $h_e \prec h_f$, where $\prec$ is the lexicographical order. 
Clearly, $<$ orders the edges of $\mathcal{W}_G$ linearly while preserving the weight order.

In what follows, when we say that $\mathcal{T}$ is \textit{the} clique forest of a chordal graph $G$, we implicitly assume that 
it is the clique forest uniquely specified by the mechanism. Figure \ref{fig:W_G_CliqueForest} demonstrates the weighted clique 
intersection graph and the clique forest of the chordal graph presented in Figure \ref{fig:graphG}.

\begin{figure}
	\centering
         \includegraphics[scale=1.3]{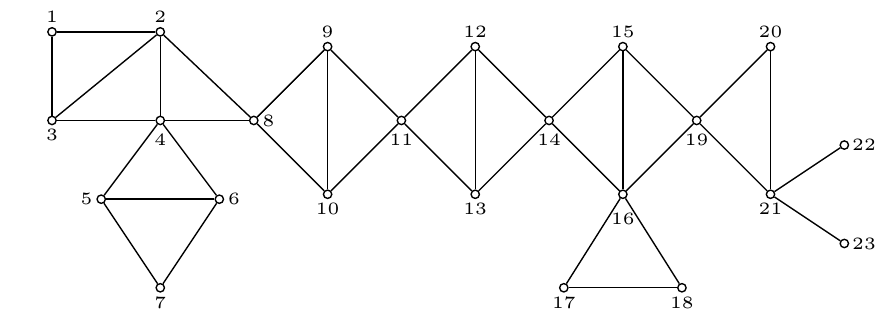}
         \caption{\small{A chordal graph $G$.}}
         \label{fig:graphG}

\vspace{4ex}

	\centering
         \includegraphics[scale=0.79]{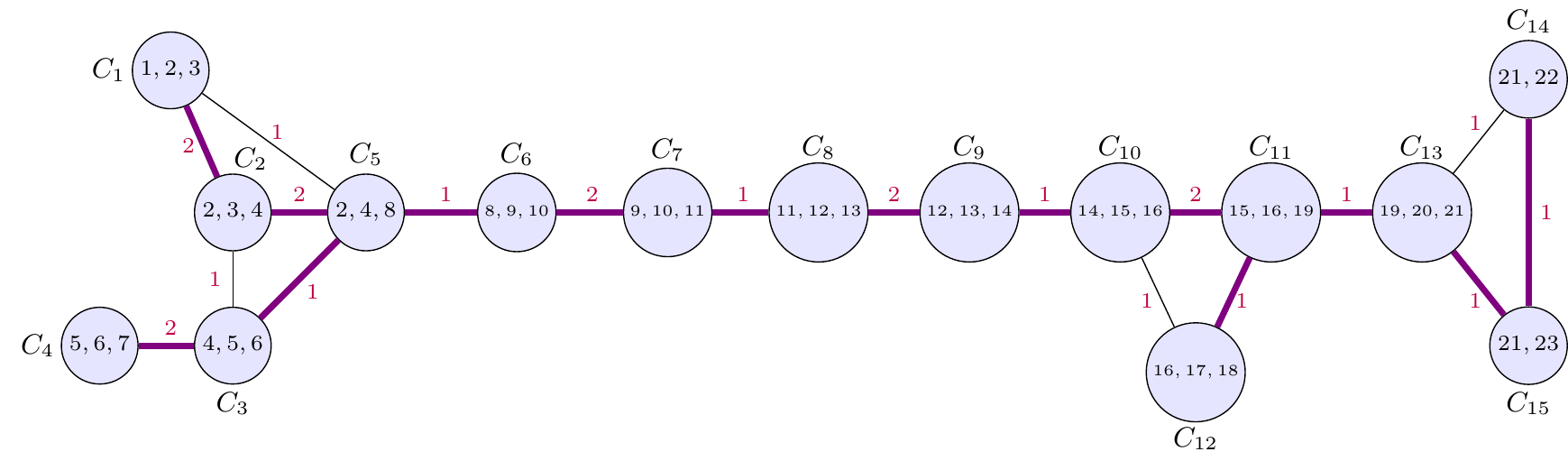}
         \caption{\small{The weighted clique intersection graph $\mathcal{W}_G$ of chordal graph $G$ presented in Fig. \ref{fig:graphG}.
         The vertices of $\mathcal{W}_G$ are the maximal cliques of $G$, 
         and two vertices $C_i, C_j$ of $\mathcal{W}_G$ with a nonempty intersection are connected 
         by an edge with weight $|C_1 \cap C_2|$.
         The bold edges are the edges of the clique forest $\mathcal{T}$ of $G$., i.e., the edges of the unique
         maximum weight spanning forest of $\mathcal{W}_G$ corresponding to the linear order of edges $<$.}}
         \label{fig:W_G_CliqueForest}

\vspace{4ex}

	\centering
         \includegraphics[scale=1.3]{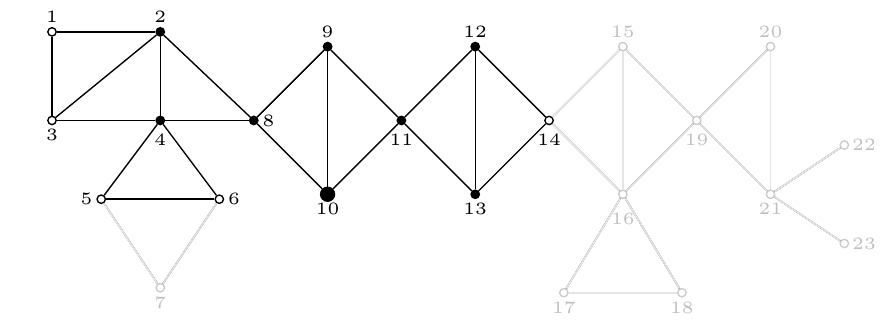}
         \caption{\small{Local view of graph $G$ from node $10$. The non-gray nodes are the nodes in $\Gamma_G^3[10]$, 
         and the black nodes are the nodes in $\Gamma_G^2[10]$.}}
         \label{fig:localView}

\vspace{4ex}

	\centering
         \includegraphics[scale=0.9]{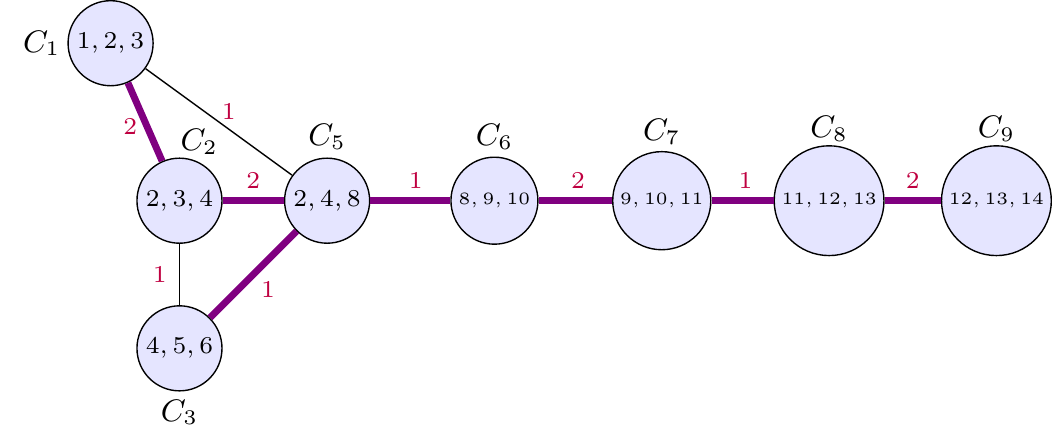}
         \caption{\small{Local view of the graph $\mathcal{W}_G$ from node $10$. 
         The cliques in $\mathcal{C}' = \{ C_1, C_2, C_3, C_5, C_6, C_7, C_8, C_9 \}$ are exactly the maximal cliques of $G$ that contain
         at least one node from $\Gamma_G^2[10]$.
         The bold edges are the edges of the unique maximum weight spanning forest of $\mathcal{W}_G[\mathcal{C}']$,
         which coincides with the subtree of $\mathcal{T}$ induced by $\mathcal{C}'$.}}
         \label{fig:W_G_CliqueForest_local}
\end{figure}

A maximum weight spanning forest has the following local optimality property: 
\begin{lemma}\label{cl:unique-subtree}
	Let $G = (V,E)$ be a weighted graph with a unique maximum weight spanning forest $F$, and let $U \subseteq V$
	be a set of nodes inducing a tree $T$ in $F$. Then $G[U]$ has a unique maximum weight spanning tree, which
	coincides with $T$.
\end{lemma}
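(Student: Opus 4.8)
The plan is to show that every maximum weight spanning tree of $G[U]$ must coincide with $T$. Note first that $T = F[U]$ is, by hypothesis, a tree on the vertex set $U$, and since $F$ is a subgraph of $G$ we have $T \subseteq G[U]$; hence $G[U]$ is connected and $T$ is one of its spanning trees, so in particular $G[U]$ has at least one maximum weight spanning tree. I will fix any such tree $T'$ and aim to prove $T' = T$.

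The main step is a surgery on the global forest. I would set $F' := (E(F) \setminus E(T)) \cup E(T')$, i.e.\ replace the edges of $F$ lying inside $U$ --- which are exactly the edges of $T = F[U]$ --- by the edges of $T'$. Since $T$ and $T'$ are both spanning trees of the connected graph $G[U]$ they have the same number of edges, so $|E(F')| = |E(F)|$, and $F'$ is a spanning subgraph of $G$. The point I expect to require the most care is checking that $F'$ is acyclic: contracting the set $U$ in $F$ turns the forest $F$ into a forest (because $F[U] = T$ is connected) whose edge set is exactly $E(F) \setminus E(T)$, so any hypothetical cycle of $F'$ can neither survive this contraction (its edges outside $U$ would then form a cycle in that forest) nor lie entirely inside $U$ (where $F'$ coincides with the tree $T'$); hence $F'$ is acyclic. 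An acyclic spanning subgraph of $G$ with $|E(F)| = |V| - (\#\text{components of }G)$ edges has exactly one tree in each connected component of $G$, so $F'$ is a spanning forest of $G$.

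It then remains to compare weights. We have $w(F') = w(F) - w(T) + w(T') \ge w(F)$, since $T'$ is a maximum weight spanning tree of $G[U]$ whereas $T$ is merely some spanning tree of it. As $F$ is \emph{the} maximum weight spanning forest of $G$, this forces $w(F') = w(F)$ --- so $F'$ is itself a maximum weight spanning forest of $G$ --- and also $w(T') = w(T)$, i.e.\ $T$ too is a maximum weight spanning tree of $G[U]$. Invoking the uniqueness of the maximum weight spanning forest of $G$ gives $F' = F$, and intersecting both sides with the edges lying inside $U$ (which on the left are exactly $E(T')$ and on the right exactly $E(T)$) yields $T' = T$. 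Since $T'$ was an arbitrary maximum weight spanning tree of $G[U]$, this shows $T$ is the unique one.

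As an alternative that sidesteps the contraction bookkeeping, one could instead invoke the classical fundamental-cycle optimality condition: $T$ is the unique maximum weight spanning tree of a connected weighted graph iff every non-tree edge is the strict minimum-weight edge of the cycle it closes in $T$. To verify this for $T$ in $G[U]$, observe that a non-tree edge $e = uv$ of $G[U]$ is not an edge of $F$ (otherwise it would lie in $F[U] = T$), and that the $u$--$v$ path of $T$ stays inside $U$ and --- by uniqueness of paths in the forest $F$ --- is exactly the $u$--$v$ path of $F$; hence the cycle closed by $e$ in $T$ is precisely the fundamental cycle of $e$ with respect to $F$, on which $e$ is the strict minimum by uniqueness of $F$. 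I would present the surgery argument as the primary proof since it is fully self-contained.
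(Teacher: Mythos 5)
Your proof is correct. Note that the paper does not actually supply an argument for this lemma --- it is stated as a bare observation about maximum weight spanning forests --- so there is no authorial proof to compare against; your exchange (``surgery'') argument is exactly the kind of standard argument the authors implicitly rely on, and it fills the gap completely: replace $E(T)=E(F[U])$ by $E(T')$, check acyclicity via contraction of $U$, compare weights, and invoke uniqueness of $F$. Two small points worth tightening. First, in the acyclicity step, the image of a hypothetical cycle of $F'$ under the contraction of $U$ is a closed trail, not necessarily a single cycle (the cycle may enter and leave $U$ several times); you should say that its edges outside $U$ form a nonempty even-degree subgraph of the contracted graph and hence \emph{contain} a cycle, which still contradicts the fact that the contraction (whose edge set is $E(F)\setminus E(T)$) is a forest. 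Second, the edge count $|E(F')|=|V|-\#\{\text{components of }G\}$ implicitly assumes that the maximum weight spanning forest contains a spanning tree of every component; this is the standard reading of ``spanning forest'' and is harmless here (in the paper's application the weights of $\mathcal{W}_G$ are positive), but you could also bypass the bookkeeping entirely, since the weight comparison only needs $F'$ to be an acyclic spanning subgraph of $G$ and hence an admissible competitor to $F$. Your alternative route via the fundamental-cycle optimality characterization is also sound, and the observation that the fundamental cycle of a non-tree edge of $G[U]$ with respect to $T$ coincides with its fundamental cycle with respect to $F$ is the right reason it works.
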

Applied to a chordal graph and its clique forest, we thus obtain:
\begin{lemma} \label{lem:local-view-new}
	Let $G=(V, E)$ be a chordal graph and $\mathcal{T} = (\mathcal{C}, \mathcal{E})$ its clique forest. 
	Then for every node $v \in V$ the unique maximum weight spanning forest in $\mathcal{W}_G[\phi(v)]$ 
	equals to tree $\mathcal{T}(v) = \mathcal{T}[\phi(v)]$.
\end{lemma}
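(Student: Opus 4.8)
The plan is to obtain Lemma~\ref{lem:local-view-new} as an essentially immediate consequence of the local optimality property recorded in Lemma~\ref{cl:unique-subtree}, instantiated on the weighted clique intersection graph $\mathcal{W}_G$. First I collect three ingredients. (i) By construction, the linear order $<$ on the edges of $\mathcal{W}_G$ is a strict refinement of the order induced by the edge weights, so working with $<$ is equivalent to equipping $\mathcal{W}_G$ with pairwise distinct real weights inducing the same order; a weighted graph with pairwise distinct edge weights has a unique maximum weight spanning forest, hence $\mathcal{W}_G$ has a unique maximum weight spanning forest $F$. (ii) Since $<$ refines the weight order, $F$ is in particular a maximum weight spanning forest of $\mathcal{W}_G$ with respect to the original weights $w_e = |C_i \cap C_j|$; by the maximum weight spanning forest characterization of chordal graphs (\cite{bernstein1981power}), $F$ is therefore a clique forest of $G$, and by our convention $\mathcal{T}$ denotes precisely this forest, i.e.\ $F = \mathcal{T}$. (iii) By the third property of a tree decomposition, the family $\phi(v)$ of maximal cliques containing $v$ induces a tree $\mathcal{T}(v) = \mathcal{T}[\phi(v)]$ in $\mathcal{T}$.

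Next I apply Lemma~\ref{cl:unique-subtree} with the weighted graph taken to be $\mathcal{W}_G$ (under the refined weights of ingredient~(i)), its unique maximum weight spanning forest $F = \mathcal{T}$, the node set $U := \phi(v)$, and $T := \mathcal{T}(v)$; ingredient~(iii) supplies the required hypothesis that $U$ induces a tree in $F$. The lemma yields that $\mathcal{W}_G[\phi(v)]$ has a unique maximum weight spanning tree and that it coincides with $\mathcal{T}(v)$. Finally, since $\mathcal{T}$ is a spanning subgraph of $\mathcal{W}_G$, all edges of the connected graph $\mathcal{T}(v) = \mathcal{T}[\phi(v)]$ lie in $\mathcal{W}_G[\phi(v)]$, so $\mathcal{W}_G[\phi(v)]$ is itself connected; consequently its maximum weight spanning forest is a spanning tree, and the two notions coincide. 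This gives the statement.

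The only point requiring a little care is ingredient~(i): Lemma~\ref{cl:unique-subtree} presupposes genuine uniqueness of the maximum weight spanning forest, whereas with the integer weights alone $\mathcal{W}_G$ may admit several maximum weight spanning forests. This is handled exactly by the tie-breaking order $<$ defined before the lemma, and I would simply make explicit that one passes to the pairwise-distinct weights it induces; the rest is bookkeeping. In short, there is no substantial obstacle in this lemma — the mathematical content lives entirely in Lemma~\ref{cl:unique-subtree} (a standard matroid / cycle-exchange argument), which I am permitted to assume.
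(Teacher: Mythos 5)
Your proposal is correct and follows essentially the same route as the paper, which obtains the lemma directly by applying Lemma~\ref{cl:unique-subtree} to $\mathcal{W}_G$ (with $\mathcal{T}$ the unique maximum weight spanning forest singled out by the tie-breaking order $<$) and the vertex set $U=\phi(v)$, which induces the tree $\mathcal{T}(v)$ by the definition of a tree decomposition. Your extra remarks — that $<$ is equivalent to pairwise distinct weights refining the weight order, and that $\mathcal{W}_G[\phi(v)]$ is connected so forest and tree coincide — are exactly the bookkeeping the paper leaves implicit.
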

This suggests a method for a node $v \in V$ to compute a local view $\mathcal{T}'$ of clique forest 
$\mathcal{T}$: Suppose that $v$ knows its distance $d$-neighborhood $\Gamma_G^d[v]$.
For every $u \in \Gamma_G^{d-1}[v]$, $v$ computes the family $\phi(u)$ of maximal cliques containing $u$ 
(notice that a maximal clique that contains a node at distance $d-1$ from $v$ 
may include nodes at distance $d$). Then, $v$ computes the maximum weight spanning forest in every 
$\mathcal{W}_G[\phi(u)]$ and adds the edges of this forest to $\mathcal{T}'$. 
Figures \ref{fig:localView} and \ref{fig:W_G_CliqueForest_local} illustrate construction of a local view
of the clique forest of the chordal graph presented in Fig. \ref{fig:graphG}.

\section{Minimum Vertex Coloring: Centralized Algorithm}\label{sec:coloring-centralized}
 In this section, we give a centralized $(1+\epsilon)$-approximation algorithm for \textsc{MVC} on chordal graphs. This algorithm
 will later be implemented in the {\sf LOCAL} model in Section~\ref{sec:coloring-distributed}.
\subsection{Algorithm}
 
 \begin{algfloat}
 \normalsize
 \noindent\fbox{\parbox{\textwidth - 7pt}{

 \textbf{Input:} $G=(V, E)$ is an $n$-node chordal graph with clique forest $\mathcal{T} = (\mathcal{C}, \mathcal{E})$;
 a parameter $\epsilon > \frac{2}{\chi(G)}$.

 \vspace{-0.2cm}
 
 \begin{enumerate}
  \item[] Set $k = 2/\epsilon$.
  \item \textbf{Pruning Phase.} \\
Let $\mathcal{T}_1 = \mathcal{T}$, $U_1 = V$.\\
\textbf{for} $i=1,2, \dots, \lceil \log n \rceil$ \textbf{do}:
	\begin{enumerate}[topsep=0cm,leftmargin=1cm]
  		\item Let $\mathcal{L}_i$ be the set that contains all maximal pendant paths of $\mathcal{T}_i$, 
			and all maximal internal paths of $\mathcal{T}_i$ of diameter at least $3k$.
		\item Let $V_i \subseteq U_i$ be such that for each $v \in V_i$, $\mathcal{T}(v)$ is a 
			subpath of one of the paths in $\mathcal{L}_i$. 
		\item Let $U_{i+1} = U_i \setminus V_i$, and let $\mathcal{T}_{i+1}$ be the forest obtained 
			from $\mathcal{T}_i$ by removing all paths in $\mathcal{L}_i$. 
			As proved in Lemma~\ref{lem:tree-decomp}, $\mathcal{T}_{i+1}$ is the clique forest of $G[U_{i+1}]$.
	\end{enumerate}
  
  \vspace{0.2cm}
  
\item \textbf{Coloring Phase.} \\
\textbf{for} $i=1,2, \dots, \lceil \log n \rceil$ \textbf{do}: 
\begin{enumerate}[topsep=0cm,leftmargin=0.5cm]
	\item[] Color $G[V_i]$ with at most $(1+1/k) \chi(G[V_i]) + 1$ colors. 
\end{enumerate} 
 
  \vspace{0.2cm}  
  \item \textbf{Color Correction Phase.} \\
\textbf{for} $i = \lceil \log n \rceil - 1, \lceil \log n \rceil - 2, \dots, 1$ \textbf{do}:
\begin{enumerate}[topsep=0cm,leftmargin=0.5cm]
\item[]
\textbf{for} each path $\mathcal{P} \in \mathcal{L}_i$ \textbf{do}: 
\begin{enumerate}[leftmargin=1cm]
	\item[(a)] Let $W \subseteq V_i$ be the set of nodes $w$ such that $\mathcal{T}(w)$ is a subpath of $\mathcal{P}$. 
	\item[(b)] Let $W' \subseteq \bigcup_{l > i} V_l$ be the subset of nodes that have neighbors in $W$. 
	\item[(c)] By Lemma~\ref{lem:conflict-vertices} $G[W \cup W']$ is an interval graph. 
		Applying Lemma~\ref{lem:recoloring}, we recolor those nodes of $W$ 
		that are at distance at most $k+3$ from some node in $W'$ using at most $(1+1/k) \chi(G[V_i]) + 1$ 
		colors to resolve all coloring conflicts between $W$ and $W'$.
\end{enumerate}
 \end{enumerate}
\end{enumerate}
\vspace{-0.3cm}
}}
\begin{center}
 \algo{alg:coloringChordal}
 \textbf{Algorithm \ref*{alg:coloringChordal}.} A centralized $(1 + \epsilon)$-approximation coloring algorithm for chordal graphs.
\end{center}
\vspace{-0.5cm}
\end{algfloat}

Our algorithm (Algorithm~\ref{alg:coloringChordal}) consists of the pruning, the coloring, and the color correction phase:
 
 In the pruning phase, the node set $V$ is partitioned into at most $\lceil \log n \rceil$ \textit{layers} $V_1, \dots, V_{\lceil \log n \rceil}$ 
 such that, for every $i \in [\lceil \log n \rceil]$, $G[V_i]$ constitutes an interval graph.  
 In each step of the pruning phase, we remove every node $v \in U_i$ from the current graph $G[U_i]$ (we set $U_1 = V$ and 
 hence $G[U_1] = G$) whose 
 corresponding subtree $\mathcal{T}(v)$ in the clique forest $\mathcal{T}_i$ of $G[U_i]$ is a subpath of a pendant path 
 or an internal path of diameter at least $3k$. 
The set of removed nodes is denoted $V_i$, and $G[V_i]$
 forms an interval graph (which follows from Lemma~\ref{lem:helpful-1}). 
 We prove in Lemma~\ref{lem:tree-decomp} that the clique forest $\mathcal{T}_{i+1}$
 of the resulting graph $G[U_{i+1}]$, where $U_{i+1} = U_i \setminus V_i$, can be obtained by removing all pendant paths and all internal paths of diameter at least $3k$ from $\mathcal{T}_i$. We also show in Lemma~\ref{lem:pruning} that the pruning process ends after at most $\lceil \log n \rceil$ iterations and thus creates at most $\lceil \log n \rceil$ layers.
 
 In the coloring phase, each interval graph $G[V_i]$ is colored with at most $(1+1/k) \chi(G[V_i]) + 1$ colors. 
 In the centralized setting, it would be easy to color these interval graphs optimally. 
However, since we will implement this algorithm later in the distributed setting, and an optimal coloring on interval graphs cannot be
computed distributively in few rounds, we impose a weaker quality guarantee that can be achieved distributively. 
The colorings of different layers are 
 computed independently from each other and do not give a coherent coloring of the entire input graph.

 In the color correction phase, these incoherencies are corrected. To this end, the colors of $V_{\lceil \log n \rceil}$ remain unchanged and we correct
 the layers iteratively, starting with layer $\lceil \log n \rceil - 1$ and proceeding downwards to layer $V_1$. In a general step, for
 every path $\mathcal{P} \in \mathcal{L}_i$, we show that the subgraph induced by the nodes $W \subseteq V_i$ whose subtrees are subpaths of $\mathcal{P}$ forms 
 an interval graph together with those nodes in $\bigcup_{j \ge i+1} V_j$ that have coloring conflicts towards $W$ (Lemma~\ref{lem:conflict-vertices}). 
Notice that each path $\mathcal{P}$ connects to at most two maximal cliques in $\mathcal{T}_i$. The neighborhood of $W$ thus consists 
 of subsets of these (at most two) cliques, which further implies that all conflicting nodes in $\bigcup_{j \ge i+1} V_j$
 are included in these cliques as well. We then reuse a recoloring result previously proved by Halld\'{o}rsson and Konrad 
 \cite{hk17}, which shows that we can resolve all conflicts by changing the colors of those nodes in $W$ that are 
at distance at most $k+3$ from the (at most) two conflicting cliques.



\subsection{Analysis} In the first part of our analysis, we show that throughout the algorithm, 
$\mathcal{T}_{i+1}$ as computed in Step 1(c) is the (unique) tree decomposition of $G[U_{i+1}]$. 
Recall that $U_{i+1}$ is obtained from $U_i$ by removing nodes whose corresponding subtrees 
in the tree decomposition $\mathcal{T}_i$ are contained in pendant and internal paths. We prove
that it is enough to remove the pendant and internal paths from $\mathcal{T}_i$ in order to obtain
$\mathcal{T}_{i+1}$. This is an important property as it allows us to bound the number of iterations
required to partition all nodes into layers. We first address internal paths in 
Lemma~\ref{lem:internal-path-removal} (see Figures~\ref{fig:graphG-U} and \ref{fig:cliqueForestOfG-U}
for an illustration), and then state a similar result for pendant paths in Lemma~\ref{lem:pendant-path-removal}.

\begin{lemma}\label{lem:internal-path-removal}
	Let $\mathcal{T} = (\mathcal{C}, \mathcal{E})$ be the clique forest of a chordal graph $G=(V, E)$.
	Let $\mathcal{P} = C_1, C_2, \dots, C_k$ be an internal path in $\mathcal{T}$, which is connected to $\mathcal{T}$
	by edges $C_sC_1$ and $C_kC_e$.
	If $diam(\mathcal{P}) \geq 4$, then
	$\mathcal{T} - \mathcal{P}$ is the clique forest of $G[V \setminus U]$, where 
	$U = \{u \in V \, : \, \mathcal{T}(u) \mbox{ is a subpath of } \mathcal{P} \}$.
\end{lemma}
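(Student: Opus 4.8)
The plan is to verify directly that $\mathcal{T}' := \mathcal{T} - \mathcal{P}$, the forest whose vertex set is $\mathcal{C}' := \mathcal{C} \setminus \{C_1, \dots, C_k\}$, satisfies the definition of a clique forest of $G' := G[V \setminus U]$, and then to promote this to \emph{the} clique forest via the local optimality of Lemma~\ref{cl:unique-subtree}. Fix notation: $X := C_1 \cup \dots \cup C_k$; the two separators $S_0 := C_1 \cap C_s$ and $S_k := C_k \cap C_e$ attaching $\mathcal{P}$ to the rest of $\mathcal{T}$; and $A_s$, $A_e$ for the components of $\mathcal{T}'$ containing $C_s$ and $C_e$ respectively. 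Since $\mathcal{P}$ is an internal path, $C_s$ and $C_e$ are the only vertices outside $\mathcal{P}$ adjacent to $\mathcal{P}$, so deleting $\mathcal{P}$ splits the tree of $\mathcal{T}$ containing $\mathcal{P}$ into exactly $A_s$ and $A_e$. Moreover, because $\mathcal{T}$ is a forest and $\mathcal{P}$ meets $\mathcal{T} - \mathcal{P}$ only along the edges $C_s C_1$ and $C_k C_e$, a node $w \in X$ has $\phi(w) \not\subseteq \{C_1, \dots, C_k\}$ if and only if $C_s \in \phi(w)$ (so $w \in S_0$) or $C_e \in \phi(w)$ (so $w \in S_k$); hence $X \setminus U = S_0 \cup S_k$, and each $C \in \mathcal{C}'$ is disjoint from $U$ (every node of $U$ sits only in bags of $\mathcal{P}$), so $\mathcal{T}'$ is a forest on subsets of $V(G')$.

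The whole force of the hypothesis $\operatorname{diam}(\mathcal{P}) \ge 4$ is funnelled into a single structural claim, which I expect to be the main obstacle: \emph{no edge of $G$ joins $S_0$ to $S_k$}; in particular $S_0 \cap S_k = \emptyset$. To prove it, suppose $u \in S_0$, $v \in S_k$ and $uv \in E$, and let $C$ be a maximal clique of $G$ containing both. If $C = C_i$ for some $i \in [k]$, then connectedness of $\phi(u)$ and $\phi(v)$ gives $u \in C_1 \cap \dots \cap C_i$ and $v \in C_i \cap \dots \cap C_k$, so $\Gamma_G[u] \supseteq C_1 \cup \dots \cup C_i$ and $\Gamma_G[v] \supseteq C_i \cup \dots \cup C_k$; since $u \sim v$ and $X \subseteq \Gamma_G[u] \cup \Gamma_G[v]$, every two nodes of $X$ lie within distance $3$, i.e.\ $\operatorname{diam}(\mathcal{P}) \le 3$. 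If instead $C \notin \{C_1, \dots, C_k\}$, then $C$ lies in $A_s$ or in $A_e$ (it is in the same tree of $\mathcal{T}$ as $C_1$ because $\phi(u)$ is connected); assuming $C \in A_s$ (the other case is symmetric), the unique path in $\mathcal{T}$ from $C$ to $C_k$ passes through $C_s$ and then along $\mathcal{P}$, so connectedness of $\phi(v)$ forces $\{C_1, \dots, C_k\} \subseteq \phi(v)$, whence $v$ is adjacent to all of $X$ and $\operatorname{diam}(\mathcal{P}) \le 2$. Either way we contradict $\operatorname{diam}(\mathcal{P}) \ge 4$, proving the claim.

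Granting the claim, the remaining checks are routine. First, every node of $G'$ lies in a bag of $\mathcal{C}'$: nodes outside $X$ keep their whole $\phi$, and nodes of $X \cap V(G') = S_0 \cup S_k$ lie in $C_s$ or in $C_e$. Second, every edge $uv$ of $G'$ lies in a bag of $\mathcal{C}'$: if $u$ and $v$ had a common bag only among $\{C_1, \dots, C_k\}$, then $u, v \in X \cap V(G') = S_0 \cup S_k$, and the claim together with $S_0 \cap S_k = \emptyset$ would place both in $S_0$ (hence in $C_s$) or both in $S_k$ (hence in $C_e$). Third, $\mathcal{C}'$ is exactly the family of maximal cliques of $G'$: every $C \in \mathcal{C}'$ satisfies $C \subseteq V(G')$ and stays maximal because $G'$ is an induced subgraph of $G$; conversely a maximal clique $M$ of $G'$ lies in some maximal clique $C$ of $G$, and if $C \in \mathcal{C}'$ then $M = C$ by maximality of $M$, while if $C = C_i$ then $M \subseteq C_i \setminus U \subseteq S_0 \cup S_k$, so by the claim $M \subseteq S_0 \subseteq C_s$ or $M \subseteq S_k \subseteq C_e$, again forcing $M \in \mathcal{C}'$. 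Finally, for every $v \in V(G')$ the set $\phi'(v) := \phi(v) \setminus \{C_1, \dots, C_k\}$ induces a tree in $\mathcal{T}'$: if $v \notin X$ this set equals $\phi(v)$ and induces $\mathcal{T}(v)$; if $v \in S_0$ (the case $v \in S_k$ being symmetric, and $v$ cannot lie in both by the claim), then $v \notin S_k$ forces $C_e \notin \phi(v)$, so $\phi(v)$ leaves $\mathcal{P}$ only through $C_1$ and $\phi'(v) = \phi(v) \cap A_s$ is an intersection of two subtrees of $\mathcal{T}$ both containing $C_s$, hence a subtree.

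It remains to see that $\mathcal{T}'$ is \emph{the} clique forest of $G'$, in the sense fixed by the edge-ordering mechanism. Since the bags in $\mathcal{C}'$ are unchanged as sets, $\mathcal{W}_{G'} = \mathcal{W}_G[\mathcal{C}']$, and by the claim $\mathcal{W}_{G'}$ contains no edge between two distinct components of $\mathcal{T}'$ (such an edge would, exactly as above, place a node into $S_0 \cap S_k$ or into $C_1 \cap \dots \cap C_k$). As $\mathcal{T}$ is the unique maximum weight spanning forest of $\mathcal{W}_G$ and each component of $\mathcal{T}'$ is a tree of $\mathcal{T}$, applying Lemma~\ref{cl:unique-subtree} inside each component of $\mathcal{W}_{G'}$ shows that $\mathcal{T}'$ is the unique maximum weight spanning forest of $\mathcal{W}_{G'}$, i.e.\ the clique forest of $G'$.
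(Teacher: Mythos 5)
Your proof is correct and takes essentially the same route as the paper's: you verify directly that $\mathcal{T} - \mathcal{P}$ satisfies the clique-forest definition for $G[V \setminus U]$, with the hypothesis $diam(\mathcal{P}) \ge 4$ used to separate the two attachment cliques, and then obtain uniqueness from the uniqueness of $\mathcal{T}$ via Lemma~\ref{cl:unique-subtree}. The main difference is level of detail: you prove the separation claim (no edge between $C_1 \cap C_s$ and $C_k \cap C_e$) that the paper merely asserts in the form ``no node of $C_s$ is adjacent to a node of $C_e$,'' and you are more explicit about why $\mathcal{W}_{G'}$ has no edges between the components of $\mathcal{T} - \mathcal{P}$ before applying Lemma~\ref{cl:unique-subtree} componentwise.
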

\begin{proof}
	Notice that the condition $diam(\mathcal{P}) \geq 4$ implies that no node in $C_s$ is adjacent to a node in $C_e$,
	and, in particular, $C_s \cap C_e = \emptyset$.
	
	It is enough to show that 
	$\mathcal{T}' = \mathcal{T} - \mathcal{P}$ is a clique forest of $G[V \setminus U]$. The uniqueness of $\mathcal{T}'$
	then follows from the uniqueness of $\mathcal{T}$.
	First, we show that $\mathcal{C} \setminus \{ C_1, C_2, \dots, C_k \}$ is the family of maximal cliques of $G[V \setminus U]$.
	Indeed, by construction $U \subseteq (C_1 \cup \ldots \cup C_k) \setminus (C_s \cup C_e)$, 
	and since no node in $C_s$ is adjacent to a node in $C_e$,
	we have that for every $i \in[k]$ either $C_i \subseteq C_s \cup U$, or $C_i \subseteq C_e \cup U$. Therefore, by removing
	from $G$ the nodes in $U$ we destroy the maximal cliques $C_1, C_2, \dots, C_k$, and do not affect the others.
	It remains to show that $\mathcal{T}'(v)$ is a tree for every $v \in V \setminus U$. Assume to the contrary, that 
	$\mathcal{T}'(v)$ is disconnected for some $v \in V \setminus U$. Since $\mathcal{T}(v)$ is a tree, and 
	$\mathcal{T}' (v)= \mathcal{T}(v) - \mathcal{P}$, we conclude that $v$ belongs to both $C_s$ and $C_e$.
	But this contradicts the disjointness of $C_s$ and $C_e$. Hence the lemma.
\end{proof}


\begin{figure}
	\centering
         \includegraphics[scale=1.3]{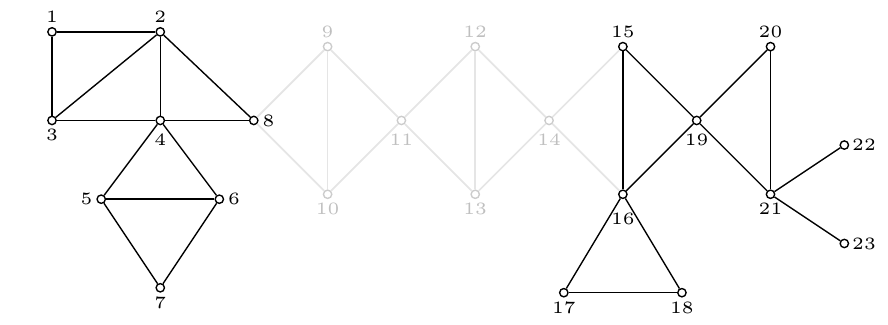}
         \caption{\small{The black nodes represent the subgraph $G[V \setminus U]$ of $G$, 
         where $U$ is the set of nodes $u$ whose corresponding trees $\mathcal{T}(u)$ are subpaths of
         path $\mathcal{P} = C_6,C_7,C_8,C_9,C_{10}$ in the clique forest $\mathcal{T}$ of graph $G$.}}
         \label{fig:graphG-U}

\vspace{4ex}

	\centering
         \includegraphics[scale=0.79]{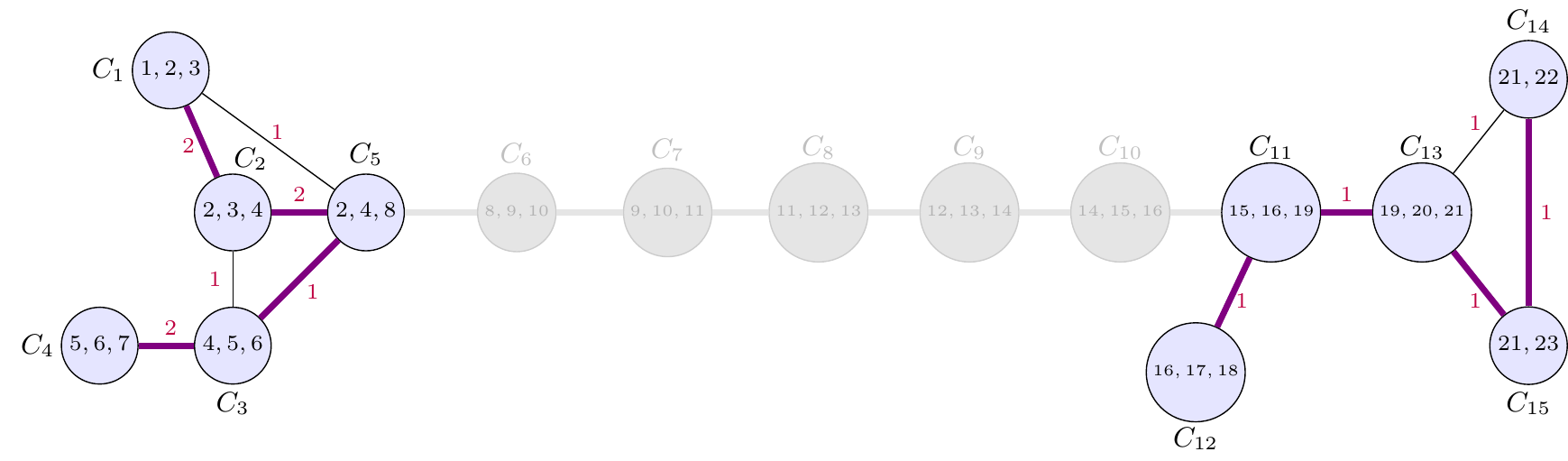}
         \caption{\small{The non-gray vertices represent the clique forest of $G[V \setminus U]$, which
         coincides with $\mathcal{T} - \mathcal{P}$.}}
         \label{fig:cliqueForestOfG-U}
\end{figure}

\noindent
Using essentially the same argument, a similar lemma for pendant paths can be obtained:

\begin{lemma}\label{lem:pendant-path-removal}
	Let $\mathcal{T} = (\mathcal{C}, \mathcal{E})$ be the clique forest of a chordal graph $G=(V, E)$.
	Let $\mathcal{P} = C_1, C_2, \dots, C_k$ be a pendant path in $\mathcal{T}$.
	Then $\mathcal{T} - \mathcal{P}$ is the clique forest of $G[V \setminus U]$, where 
	$U = \{u \in V \, : \, \mathcal{T}(u) \mbox{ is a subpath of } \mathcal{P} \}$.
\end{lemma}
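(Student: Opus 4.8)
The plan is to follow the template of the proof of Lemma~\ref{lem:internal-path-removal} almost verbatim, exploiting that a pendant path is even more localised than an internal one: it is attached to the rest of $\mathcal{T}$ through a \emph{single} edge (or not at all), which is exactly why no analogue of the hypothesis $diam(\mathcal{P}) \ge 4$ is needed. Concretely, I would first normalise the setup: since $\mathcal{P}$ is pendant, assume without loss of generality that $C_1$ is a leaf of $\mathcal{T}$; then $C_2, \dots, C_{k-1}$ have degree exactly $2$ in $\mathcal{T}$ with both neighbours on $\mathcal{P}$, and $C_1$ has no neighbour off $\mathcal{P}$. Hence either $\mathcal{P}$ is an entire connected component of $\mathcal{T}$ (when $C_k$ has no neighbour outside $\mathcal{P}$ — e.g. $C_k$ is a second leaf, or $k=1$ and $C_1$ is isolated), in which case removing $U$ simply deletes that whole component and the claim is trivial; or $C_k$ has a unique neighbour $C_e \notin \{C_1, \dots, C_k\}$ and $C_kC_e$ is the only edge of $\mathcal{T}$ leaving $\mathcal{P}$. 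I treat this second (main) case. As in Lemma~\ref{lem:internal-path-removal}, by uniqueness of the clique forest it suffices to show that $\mathcal{T}' := \mathcal{T} - \mathcal{P}$ is \emph{a} clique forest of $G[V \setminus U]$.

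The second step identifies the maximal cliques of $G[V\setminus U]$. The key observation — the counterpart of the sentence ``for every $i$, either $C_i \subseteq C_s \cup U$ or $C_i \subseteq C_e \cup U$'' in the internal case — is: every node $v \in C_i \setminus U$, for any $i \in [k]$, also lies in $C_e$. Indeed, $v \notin U$ forces $\phi(v)$ to contain a clique outside $\mathcal{P}$, and since $\phi(v)$ induces a subtree of $\mathcal{T}$ containing $C_i \in \mathcal{P}$, the path inside $\phi(v)$ from $C_i$ to that outside clique must leave $\mathcal{P}$ through its only exit $C_kC_e$, whence $C_k, C_e \in \phi(v)$. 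Thus $C_i \setminus U \subseteq C_e$ for every $i$, and $C_i \setminus U$ cannot equal $C_e$ (that would give $C_e \subseteq C_i$, forcing $C_i = C_e$, contradicting $C_e \notin \mathcal{P}$). So the trace of each $C_i$ in $G[V\setminus U]$ is either empty or a proper subclique of $C_e$, hence never a maximal clique there. Combined with the easy facts that every $C \in \mathcal{C}\setminus\{C_1,\dots,C_k\}$ is disjoint from $U$ and remains maximal in $G[V\setminus U]$, and that every maximal clique of $G[V\setminus U]$ is contained in some member of $\mathcal{C}$, we conclude that the maximal cliques of $G[V\setminus U]$ are exactly $\mathcal{C}\setminus\{C_1,\dots,C_k\}$ — i.e. the vertex set of $\mathcal{T}'$ is correct.

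The third step checks that $\mathcal{T}'(v) = \mathcal{T}(v) - \mathcal{P}$ is connected for every $v \in V\setminus U$. If $\phi(v)\cap\mathcal{P}=\emptyset$ this is immediate. Otherwise, by the observation above $C_k\in\phi(v)$, so $\phi(v)\cap\mathcal{P}$ — a connected subpath of $\mathcal{P}$ containing the endpoint $C_k$ — is an end-segment $\{C_a,C_{a+1},\dots,C_k\}$; and in $\mathcal{T}(v)$ this set is joined to $\phi(v)\setminus\mathcal{P}$ only through the edge $C_kC_e$ (each $C_i$ with $a\le i\le k-1$ has all its $\mathcal{T}$-neighbours on $\mathcal{P}$, and $C_{a-1}\notin\phi(v)$). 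Deleting a pendant subpath from the tree $\mathcal{T}(v)$ cannot disconnect it, and $\mathcal{T}'(v)$ is nonempty since $C_e$ remains, so $\mathcal{T}'(v)$ is a tree. This replaces the ``$v\in C_s\cap C_e$'' contradiction of the internal case, and is in fact cleaner: with a single attachment point, disconnection is impossible outright.

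I do not expect a genuine obstacle here — the lemma is strictly easier than Lemma~\ref{lem:internal-path-removal}, and once the ``single attachment point'' observation is in hand the rest is routine. The only points that deserve a moment's care are the bookkeeping in the degenerate configurations ($C_k$ a second leaf, $k=1$, isolated vertex), and the verification that $\phi(v)\cap\mathcal{P}$ really is an end-segment of $\mathcal{P}$ — which rests precisely on $C_k\in\phi(v)$, established in the second step.
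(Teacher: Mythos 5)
Your argument is correct and is exactly the route the paper intends: Lemma~\ref{lem:pendant-path-removal} is stated there without proof, with the remark that it follows by ``essentially the same argument'' as Lemma~\ref{lem:internal-path-removal}, and your write-up is a faithful (indeed more detailed) instantiation of that argument, using the single attachment edge $C_kC_e$ in place of the two edges $C_sC_1$, $C_kC_e$ and correctly observing that no diameter hypothesis is needed.
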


\begin{lemma}\label{lem:tree-decomp}
 For every $i$, $\mathcal{T}_i$ is the clique forest of $G[U_i]$.
\end{lemma}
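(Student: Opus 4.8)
The plan is to prove the statement by induction on $i$. The base case $i=1$ is immediate: $\mathcal{T}_1 = \mathcal{T}$ is the clique forest of $G = G[U_1]$ by definition. For the inductive step I assume $\mathcal{T}_i$ is the clique forest of $G[U_i]$ and must show the same for $\mathcal{T}_{i+1} = \mathcal{T}_i - \mathcal{L}_i$ and $U_{i+1} = U_i \setminus V_i$. The idea is to remove the paths of $\mathcal{L}_i$ from $\mathcal{T}_i$ one at a time, invoking Lemma~\ref{lem:pendant-path-removal} for each pendant path and Lemma~\ref{lem:internal-path-removal} for each internal path, while using that an induced subgraph of a chordal graph is chordal so the hypotheses of these lemmas stay satisfied at every step.

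Before the iteration I would establish the structural facts that make it well-defined. Every vertex of the forest $\mathcal{T}_i$ of degree at most $2$ lies on a unique maximal binary path, which is either pendant or internal; from this one deduces that distinct maximal pendant paths are pairwise vertex-disjoint, distinct maximal internal paths are pairwise vertex-disjoint, and a maximal internal path is either disjoint from every maximal pendant path or entirely contained in one (the latter happening precisely when it equals a maximal pendant path with its leaf end(s) deleted). Hence, dropping from $\mathcal{L}_i$ those internal paths contained in a pendant path yields a subfamily $\mathcal{L}_i'$ of pairwise vertex-disjoint paths with $\bigcup_{\mathcal{P} \in \mathcal{L}_i'} V(\mathcal{P}) = \bigcup_{\mathcal{P} \in \mathcal{L}_i} V(\mathcal{P})$, and therefore also $\bigcup_{\mathcal{P} \in \mathcal{L}_i'} U(\mathcal{P}) = V_i$, where $U(\mathcal{P})$ denotes the set of nodes whose subtree is a subpath of $\mathcal{P}$. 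Thus removing $\mathcal{L}_i'$ from $\mathcal{T}_i$ produces the same forest $\mathcal{T}_{i+1}$ and deletes the same node set $V_i$, so it suffices to argue about $\mathcal{L}_i'$.

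Now I would enumerate $\mathcal{L}_i' = \{\mathcal{P}_1, \dots, \mathcal{P}_m\}$, set $\mathcal{T}^{(0)} = \mathcal{T}_i$ and $\mathcal{T}^{(j)} = \mathcal{T}^{(j-1)} - \mathcal{P}_j$, and prove by a second (finite) induction that $\mathcal{T}^{(j)}$ is the clique forest of $G[U^{(j)}]$, where $U^{(j)} = U_i \setminus (U(\mathcal{P}_1) \cup \dots \cup U(\mathcal{P}_j))$ (the base case is the outer hypothesis). At step $j+1$ there are two things to check. First, $\mathcal{P}_{j+1}$ is still a pendant path, respectively an internal path of diameter at least $3k \ge 4$, in $\mathcal{T}^{(j)}$: since the $\mathcal{P}_r$ are pairwise disjoint, none of $\mathcal{P}_{j+1}$'s vertices or internal edges are affected, and the external attachment vertices of $\mathcal{P}_{j+1}$ have degree at least $3$ in $\mathcal{T}_i$ (by maximality of $\mathcal{P}_{j+1}$), hence lie on no binary path and survive all earlier removals, so every degree along $\mathcal{P}_{j+1}$ is unchanged; moreover deleting nodes does not decrease pairwise distances, so its diameter remains $\ge 3k$. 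Second, the node set Lemma~\ref{lem:pendant-path-removal}/\ref{lem:internal-path-removal} removes at this step, namely $\{u \in U^{(j)} : \mathcal{T}^{(j)}(u) \text{ is a subpath of } \mathcal{P}_{j+1}\}$, equals $U(\mathcal{P}_{j+1})$ as computed in $\mathcal{T}_i$ — again because $\mathcal{P}_{j+1}$ is disjoint from the removed paths, and any node whose $\mathcal{T}_i$-subtree would leave $\mathcal{P}_{j+1}$ must pass through a surviving attachment vertex, so it also leaves $\mathcal{P}_{j+1}$ in $\mathcal{T}^{(j)}$. Applying the relevant single-path lemma then advances the induction, and after $m$ steps $\mathcal{T}^{(m)} = \mathcal{T}_{i+1}$ is the clique forest of $G[U^{(m)}] = G[U_{i+1}]$, completing the outer induction. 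I expect the main obstacle to be exactly this bookkeeping: noticing that maximal pendant and maximal internal paths may overlap and passing to the disjoint subfamily $\mathcal{L}_i'$, and then verifying that maximality forces the attachment vertices to have degree at least $3$ — which is precisely what guarantees each $\mathcal{P}_{j+1}$ is still a legitimate pendant/internal path (with unchanged $U(\cdot)$) after the earlier removals.
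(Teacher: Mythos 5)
Your proof is correct and follows essentially the same route as the paper: induction on $i$, with the inductive step handled by applying Lemma~\ref{lem:pendant-path-removal} to each pendant path and Lemma~\ref{lem:internal-path-removal} to each internal path of $\mathcal{L}_i$. The extra bookkeeping you supply (passing to a disjoint subfamily of paths, checking that attachment vertices of degree at least $3$ survive so each path remains pendant/internal with unchanged removed node set and diameter when the paths are peeled off one at a time) is exactly the detail the paper leaves implicit, and it is argued correctly.
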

\begin{proof}
We prove this statement by induction on $i$. The base case $i=1$ follows by definition 
of $\mathcal{T}_1$ (recall that $\mathcal{T}_1 = \mathcal{T}$ and $U_1 = V$). For the induction step,
we apply Lemma~\ref{lem:internal-path-removal} for every internal path in $\mathcal{L}_i$ of diameter at least $3k$,
and Lemma~\ref{lem:pendant-path-removal} for every pendant path in $\mathcal{L}_i$.
\end{proof}

Using the fact that $\mathcal{T}_{i+1}$ is obtained from $\mathcal{T}_i$ by removing all pendant and some internal paths,
we show now that the first phase of our algorithm requires at most $\lceil \log n \rceil$ iterations. This is a consequence of
the following pruning lemma:

\begin{lemma}[Pruning Lemma]\label{lem:forestPruning}
	Let $T_1$ be an $n$-node forest, and for every $i \geq 2$, let $T_i$ be a forest obtained from $T_{i-1}$ by removing
	all its maximal pendant paths and some of its maximal internal paths. Then $T_i$ has less than $n/2^i$ nodes
	of degree at least $3$. In particular, $T_{\lceil \log n \rceil +1}$ has no nodes.
\end{lemma}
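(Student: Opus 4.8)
The plan is to track the number of degree-$\geq 3$ nodes as we pass from $T_{i-1}$ to $T_i$, and show it at least halves in each step. Write $b(T)$ for the number of nodes of degree at least $3$ ("branch nodes") in a forest $T$. The base case is immediate since $b(T_1) \le n$, and the "in particular" clause follows because once $b(T_i) < 1$ the forest $T_i$ has no branch nodes at all, hence every component is a path, hence a single maximal pendant path; the next removal step deletes all of it, so $T_{i+1}$ is empty — and $b(T_{\lceil \log n\rceil}) < n/2^{\lceil\log n\rceil} \le 1$.

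The core is the inductive step: I claim $b(T_i) < b(T_{i-1})/2$. The key combinatorial observation is that in any forest, the branch nodes, together with the leaves, are linked by maximal binary paths (paths whose internal vertices all have degree $2$), and deleting all maximal pendant paths strictly reduces the number of branch nodes unless there were none to begin with. More precisely, I would argue as follows. Let $T = T_{i-1}$ and suppose $b(T) \ge 1$. Root each component of $T$ arbitrarily, or better, reason directly: every leaf of $T$ sits at the end of a unique maximal binary path, whose other endpoint is either another leaf (the whole component is a path, contributing $0$ branch nodes) or a branch node. Consider the "reduced forest" obtained from $T$ by suppressing all degree-$2$ vertices; its vertex set is exactly the branch nodes plus the leaves of $T$, and removing all maximal pendant paths of $T$ corresponds exactly to deleting all leaves of this reduced forest. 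Now in any forest $F$ with at least one vertex of degree $\ge 3$, the number of leaves strictly exceeds the number of degree-$\ge 3$ vertices: this is the standard fact that $\sum_v (\deg_F(v) - 2) = -2\cdot(\#\text{components})< 0$, so $\sum_{\deg \ge 3}(\deg_F(v)-2) < \sum_{\text{leaves}} 1$, and since each branch node contributes at least $1$ to the left side, $b(F) < \ell(F)$ where $\ell$ counts leaves. Hence deleting all leaves of the reduced forest removes more vertices than $b(T)$... but that overshoots; what I actually need is that after deleting the leaves, every remaining vertex that was a branch node may have dropped in degree, and in fact each becomes a leaf or lower-degree internal vertex, so it is no longer a branch node — no, that is false for high-degree stars of stars.

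The correct accounting: after removing all maximal pendant paths from $T$, a former branch node $v$ survives only if at least two of the binary paths emanating from it lead to other branch nodes (not to leaves); if only one does, $v$ now has degree $\le 2$ and is no longer a branch node; if none does, $v$ is gone entirely. So the surviving branch nodes of $T_{i-1}$ form a subforest in which I have a valid injection argument, or more simply: build an auxiliary forest $H$ on the branch nodes of $T$ where two branch nodes are adjacent iff joined by a maximal binary path of $T$ containing no leaf and no other branch node. A branch node survives the pendant-path removal only if it has degree $\ge 2$ in $H$, so the survivors are a subset of the non-leaf vertices of $H$, whence $b(T_i) \le (\text{non-leaves of } H)$; but removing internal paths only deletes more, so $b(T_i) \le b(H) - \ell(H) = |V(H)| - (\text{leaves of }H)$, wait this still needs $\ell(H) > |V(H)|/2$. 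The clean finish: since $H$ is a forest, $\ell(H) \ge |V(H)| - (|V(H)|-1) + (\text{branch nodes of }H)\dots$ Let me instead just invoke that a forest with $m$ vertices, at least one of degree $\ge 3$, has strictly more than $m/2$ vertices of degree $\le 2$ — equivalently at least $\lceil m/2 \rceil + 1$... This too fails for a path. The honest statement I will prove and use is: \emph{in a forest $H$, the number of vertices of degree $\ge 3$ is strictly less than the number of leaves}, hence strictly less than $|V(H)|/2$; applying this with $H$ as above, and noting every surviving branch node of $T_i$ lies among the non-leaves of $H$ (hence is counted by $|V(H)| - \ell(H) < |V(H)|/2 = b(T_{i-1})/2$) gives $b(T_i) < b(T_{i-1})/2$, completing the induction.

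I expect the main obstacle to be exactly pinning down this reduction-forest argument cleanly — in particular verifying that a branch node of $T_{i-1}$ survives in $T_i$ only if it has degree $\ge 3$ in the auxiliary forest $H$ (so that it is a non-leaf of $H$), and that the degree-sum inequality $b(H) < \ell(H)$ for forests with a branch node is applied with the correct edge case (components that are isolated vertices or bare paths contribute no branch nodes and can be discarded first). Once these are in place, chaining $b(T_i) < b(T_{i-1})/2$ for all $i$ with $b(T_1) \le n$ yields $b(T_i) < n/2^i$, and the emptiness of $T_{\lceil\log n\rceil+1}$ follows as described.
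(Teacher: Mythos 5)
Your overall strategy --- show that the number of branch nodes (degree $\ge 3$) at least halves in each step, using the fact that in a forest the branch nodes are strictly outnumbered by the vertices of small degree --- is the right one, but the final assembled argument contains a step that is false as written. You bound the surviving branch nodes of $T_i$ by the \emph{non-leaves} of the auxiliary forest $H$ and then claim $|V(H)| - \ell(H) < |V(H)|/2$. That inequality does not hold for forests in general: if $H$ is a long path (which happens, e.g., when $T_{i-1}$ is a chain of degree-$3$ vertices each carrying one pendant leaf), then $\ell(H)=2$ and $|V(H)|-\ell(H)$ far exceeds $|V(H)|/2$. The fact you do prove, that the number of degree-$\ge 3$ vertices of a forest is less than its number of leaves, bounds the branch nodes of $H$, not its non-leaves, so the chain of inequalities breaks exactly there. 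Two smaller slips of the same kind: the base case needs $b(T_1)<n/2$, which does not follow from $b(T_1)\le n$ but does follow from the same forest fact applied to $T_1$; and $H$ must include an edge between two branch nodes of $T_{i-1}$ that are directly adjacent, otherwise the degree comparison below fails.

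The gap is fixable within your framework, and the fix is short: every neighbour of a branch node $v$ in $T_i$ is either itself a branch node of $T_{i-1}$ or the endpoint of a surviving maximal internal path whose far end attaches to another branch node, and distinct neighbours yield distinct edges of $H$ at $v$ (a coincidence would create a cycle). Hence $\deg_{T_i}(v) \le \deg_H(v)$, so a vertex of degree $\ge 3$ in $T_i$ has degree $\ge 3$ in $H$; therefore $b(T_i)\le b(H) < \ell(H)$, and since branch nodes and leaves of $H$ are disjoint subsets of $V(H)$, $b(T_i) < |V(H)|/2 = b(T_{i-1})/2$, which is exactly the recursion you wanted. For comparison, the paper obtains the same recursion without any contracted forest: by maximality of the removed paths, every vertex of degree $\le 1$ in $T_i$ had degree $\ge 3$ in $T_{i-1}$, so writing $L_i$ and $A_i$ for the vertices of degree $\le 1$, respectively $\ge 3$, of $T_i$, one gets $|L_i|+|A_i|\le |A_{i-1}|$, and the forest fact $|A_i|<|L_i|$ applied to $T_i$ itself gives $|A_i| < |A_{i-1}|/2$. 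Your ``in particular'' clause (once no branch nodes remain, every component is a pendant path and is deleted in the next step) is fine and matches the paper.
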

\begin{proof}
	The maximality of the removed paths implies that the nodes of degree at most $1$ in $T_{i}$ have degree at least 3 in
	$T_{i-1}$.
	Therefore denoting by $L_i$ and $A_i$ the set of nodes of degree at most $1$ and the set of nodes of degree at least $3$
	in $T_{i}$, respectively, we have $L_i \subseteq A_{i-1}$. Furthermore, it is obvious that $A_i \subseteq A_{i-1}$, and hence
	\begin{equation}\label{eq:AiB}	
		|L_{i}| + |A_{i}| \leq |A_{i-1}|.
	\end{equation}
	Since the number of nodes of degree at least $3$ in a forest is less than the number of nodes of degree at most 1,
	we have that $|A_1| < n/2$.
	For the same reason $|A_i| < |L_i|$, which together with inequality~(\ref{eq:AiB}) implies
	$|A_i| < |A_{i-1}|/2$ for every $i \geq 2$. This implies the lemma.
\end{proof}


\begin{corollary}\label{lem:pruning}
	Phase 1 in Algorithm \ref{alg:coloringChordal} requires at most $\lceil \log n \rceil$ iterations, i.e., 
	$\bigcup_{1 \le i \le \lceil \log n \rceil} V_i = V$.
\end{corollary}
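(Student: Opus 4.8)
The plan is to obtain the corollary as a direct consequence of Lemma~\ref{lem:tree-decomp} and the Pruning Lemma (Lemma~\ref{lem:forestPruning}). The first step is a translation: since the sets $V_i$ are pairwise disjoint subsets of $V$ with $U_1 = V$ and $U_{i+1} = U_i \setminus V_i$, the claim $\bigcup_{1 \le i \le \lceil \log n \rceil} V_i = V$ is equivalent to $U_{\lceil \log n \rceil + 1} = \emptyset$. By Lemma~\ref{lem:tree-decomp}, $\mathcal{T}_{\lceil \log n \rceil + 1}$ is the clique forest of $G[U_{\lceil \log n \rceil + 1}]$, and a non-empty chordal graph always has at least one maximal clique; hence it suffices to prove that the forest $\mathcal{T}_{\lceil \log n \rceil + 1}$ has no vertices at all.

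Next I would verify that the sequence $\mathcal{T}_1, \mathcal{T}_2, \ldots$ produced by Phase~1 meets the hypotheses of the Pruning Lemma. By Steps~1(a) and 1(c) of Algorithm~\ref{alg:coloringChordal}, $\mathcal{T}_{i+1}$ is obtained from $\mathcal{T}_i$ by deleting precisely the paths in $\mathcal{L}_i$, that is, \emph{all} maximal pendant paths of $\mathcal{T}_i$ together with \emph{some} of its maximal internal paths (exactly those of diameter at least $3k$); this is exactly the pruning operation the lemma is about. Moreover $\mathcal{T}_1 = \mathcal{T}$ is the clique forest of the $n$-node chordal graph $G$, and since such a graph has at most $n$ maximal cliques, $\mathcal{T}_1$ has at most $n$ vertices.

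Finally, applying Lemma~\ref{lem:forestPruning} to $\mathcal{T}_1, \mathcal{T}_2, \ldots$ yields that $\mathcal{T}_{\lceil \log n \rceil + 1}$ has no vertices, which by the first paragraph gives the corollary. I do not expect any genuine obstacle here, as all the combinatorial work is already encapsulated in the Pruning Lemma; the only bit of bookkeeping to watch is that Lemma~\ref{lem:forestPruning} is phrased for a forest on exactly $n$ nodes while $\mathcal{T}_1$ may have fewer, say $m \le n$, vertices. This is harmless: one may either pad $\mathcal{T}_1$ with $n - m$ isolated vertices (which are removed in the first pruning step and never affect $\mathcal{T}_i$ for $i \ge 2$), or invoke the lemma with $m$ in place of $n$, so that $\mathcal{T}_{\lceil \log m \rceil + 1}$ is empty, and then note that $\lceil \log m \rceil + 1 \le \lceil \log n \rceil + 1$ and an empty forest stays empty under further pruning, whence $\mathcal{T}_{\lceil \log n \rceil + 1}$ is empty as well.
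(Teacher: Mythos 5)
Your proof is correct and follows exactly the route the paper intends: reduce the claim to $U_{\lceil \log n \rceil + 1} = \emptyset$, use Lemma~\ref{lem:tree-decomp} to identify $\mathcal{T}_{\lceil \log n \rceil + 1}$ with the clique forest of $G[U_{\lceil \log n \rceil + 1}]$, and apply the Pruning Lemma (Lemma~\ref{lem:forestPruning}) to conclude that this forest, and hence $U_{\lceil \log n \rceil + 1}$, is empty. The extra bookkeeping about the clique forest having $m \le n$ vertices is handled correctly and is the only detail the paper leaves implicit.
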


Next, we address the color correction phase. In each iteration of this phase we consider every path 
$\mathcal{P} \in \mathcal{L}_i$ independently. The subgraph induced by the set of nodes $W \subseteq V_i$ whose corresponding 
trees are subpaths of $\mathcal{P}$ is legally colored in the coloring phase. This coloring may be inconsistent with the 
coloring of subgraph $G[U_{i+1}]$. We first prove in Lemma~\ref{lem:conflict-vertices} that the set $W' \subseteq \bigcup_{s > i} V_s = U_{i+1}$ 
of neighbors of $W$ in $G[U_{i+1}]$ (i.e., the nodes in $U_{i+1}$ that could potentially cause conflicts) 
is the union of at most two cliques, which are included in the end vertices of the clique forest of interval graph $G[W \cup W']$.
This lemma makes use of Lemma~\ref{lem:helpful-1}, which shows that the set of nodes whose corresponding subtrees in $\mathcal{T}_i$
are contained in an arbitrary path in $\mathcal{T}_i$ form an interval graph.
\begin{lemma}\label{lem:helpful-1}
	Let $\mathcal{T} = (\mathcal{C}, \mathcal{E})$ be the clique forest of a chordal graph $G=(V, E)$. 
	Let $\mathcal{P} = C_1, C_2, \dots, C_k$ be a path in $\mathcal{T}$, and let $V_{\mathcal{P}} = \bigcup_{i=1}^k C_i$ 
	be the set of nodes whose corresponding subtrees intersect with $\mathcal{P}$. 
	Then $\mathcal{P}$ is the clique forest of $G[V_{\mathcal{P}}]$, and $G[V_{\mathcal{P}}]$ is an interval graph.
\end{lemma}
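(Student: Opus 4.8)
The plan is to verify directly that $\mathcal{P}$, equipped with the bags $C_1, \dots, C_k$, is a clique forest of $G[V_{\mathcal{P}}]$; once this is established, $G[V_{\mathcal{P}}]$ is chordal (being an induced subgraph of a chordal graph) and has a \emph{linear} clique forest, so Theorem~\ref{th:interval-char} immediately yields that it is an interval graph. The one structural fact I will lean on repeatedly is the Helly property for subtrees of a tree: a family of subtrees of a tree that pairwise intersect has a common vertex.

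First I would record the equivalence $v \in \bigcup_{i} C_i \iff \phi(v) \cap \{C_1, \dots, C_k\} \neq \emptyset \iff \mathcal{T}(v) \text{ meets } \mathcal{P}$, so that the two descriptions of $V_{\mathcal{P}}$ in the statement agree. Then I would check edge coverage: given an edge $uv$ of $G[V_{\mathcal{P}}]$, the subtrees $\mathcal{T}(u)$ and $\mathcal{T}(v)$ intersect (because $uv \in E$ and $\mathcal{T}$ is a tree decomposition of $G$), and each of them meets $\mathcal{P}$ (because $u, v \in V_{\mathcal{P}}$); these three subtrees pairwise intersect, so by the Helly property they share a vertex $C_i$, whence $u, v \in C_i$. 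The connectivity condition is immediate: for $v \in V_{\mathcal{P}}$ the set of bags of $\mathcal{P}$ containing $v$ is the vertex set of $\mathcal{T}(v) \cap \mathcal{P}$, the intersection of a subtree of $\mathcal{T}$ with a path in $\mathcal{T}$, hence a subpath of $\mathcal{P}$.

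Next I would argue that $\{C_1, \dots, C_k\}$ is exactly the family of maximal cliques of $G[V_{\mathcal{P}}]$. Each $C_i$ is a clique of $G$ contained in $V_{\mathcal{P}}$, hence a clique of $G[V_{\mathcal{P}}]$, and it is maximal there since any strictly larger clique inside $V_{\mathcal{P}} \subseteq V$ would contradict maximality of $C_i$ in $G$. Conversely, if $K$ is a maximal clique of $G[V_{\mathcal{P}}]$, then $\{\mathcal{T}(v) : v \in K\} \cup \{\mathcal{P}\}$ is a family of subtrees of $\mathcal{T}$ that pairwise intersect (pairs inside $K$ because $K$ is a clique of $G$; each $\mathcal{T}(v)$ meets $\mathcal{P}$ because $v \in V_{\mathcal{P}}$), so by the Helly property it has a common vertex, which, lying on $\mathcal{P}$, is some $C_i$ with $K \subseteq C_i$; maximality of $K$ then forces $K = C_i$. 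Together with the previous paragraph this shows $\mathcal{P}$ is a clique forest of $G[V_{\mathcal{P}}]$; its uniqueness follows, as in the proof of Lemma~\ref{lem:internal-path-removal}, from the uniqueness of $\mathcal{T}$ (via Lemma~\ref{cl:unique-subtree} applied to $\mathcal{W}_G$ and the vertex subset $\{C_1, \dots, C_k\}$). Since $\mathcal{P}$ is a path, Theorem~\ref{th:interval-char} gives that $G[V_{\mathcal{P}}]$ is an interval graph.

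There is no real obstacle here beyond recognizing that the Helly property for subtrees of a tree is the tool that does the work, supplying both the edge-coverage property and the absence of ``extra'' maximal cliques; the remaining steps are routine manipulations of the tree-decomposition axioms together with the hereditary nature of chordality.
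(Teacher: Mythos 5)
Your proof is correct, but it runs along a different track than the paper's. The paper's own argument is essentially one line through the spanning-forest machinery of Section~\ref{sec:local-views}: by Lemma~\ref{cl:unique-subtree}, the unique maximum weight spanning forest of $\mathcal{W}_G[\{C_1,\dots,C_k\}]$ is the path $\mathcal{P}$ itself, and since $C_1,\dots,C_k$ are the maximal cliques of $G[V_{\mathcal{P}}]$ (a fact the paper asserts without proof), the characterization of clique forests as maximum weight spanning forests immediately identifies $\mathcal{P}$ as the clique forest of $G[V_{\mathcal{P}}]$; Theorem~\ref{th:interval-char} then finishes. You instead verify the definition of a clique forest from first principles, using the Helly property for subtrees of a tree to get both edge coverage and the fact that no maximal clique of $G[V_{\mathcal{P}}]$ other than the $C_i$ can exist, and you only call on Lemma~\ref{cl:unique-subtree} at the end to pin down uniqueness (i.e., that $\mathcal{P}$ is \emph{the} canonical clique forest in the paper's sense). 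Your route is more elementary and self-contained, and it has the added value of actually proving the step the paper glosses over (that $C_1,\dots,C_k$ are exactly the maximal cliques of $G[V_{\mathcal{P}}]$ -- your Helly argument for the converse inclusion is precisely the missing justification); the paper's route is shorter because it leverages machinery already built for computing local views and delivers canonicity for free. Two small remarks: the Helly property is stated for subtrees of a tree, while $\mathcal{T}$ is a forest, but this is harmless since every subtree in your families meets the connected path $\mathcal{P}$ and hence lies in a single component; and note that your direct verification alone only shows $\mathcal{P}$ is \emph{a} clique forest, so the appeal to Lemma~\ref{cl:unique-subtree} (or the maximum-weight characterization) is genuinely needed for the definite article, which you correctly include.
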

\begin{proof}
	By Lemma \ref{cl:unique-subtree}, the maximum weight spanning forest in  $\mathcal{W}_G[\{ C_1, \ldots, C_k \}]$ coincides 
	with path $\mathcal{P} = \mathcal{T}[\{ C_1, \ldots, C_k \}]$. 
	Since $ C_1, \dots, C_k$ are the maximal cliques of $G[V_{\mathcal{P}}]$, we conclude that $\mathcal{P}$ is 
	the clique forest of $G[V_{\mathcal{P}}]$,
	and by Theorem \ref{th:interval-char} $G[V_{\mathcal{P}}]$ is an interval graph.  
\end{proof}

\begin{lemma}\label{lem:conflict-vertices}
	Let $\mathcal{T} = (\mathcal{C}, \mathcal{E})$ be the clique forest of a chordal graph $G=(V, E)$.
	Let $\mathcal{P} = C_1, C_2, \dots, C_k$ be a path in $\mathcal{T}$.
	Let $W \subseteq V$ be the subset of nodes whose corresponding subtrees are subpaths of $\mathcal{P}$, and let
	$W' \subseteq V \setminus W$ be the set of nodes outside $W$ that have neighbors in $W$. 
	Then $G[W \cup W']$ is an interval graph, and
	\begin{enumerate}
		\item if $\mathcal{P}$ is an internal path connected to $\mathcal{T}$ by edges $C_sC_1$ and $C_kC_e$, then
		 $W' \subseteq C_s \cup C_e$, and the cliques $W' \cap C_s$ and $W' \cap C_e$ are contained in the end vertices 
		of the clique forest of $G[W \cup W']$;

		\item if $\mathcal{P}$ is a pendant path connected to $\mathcal{T}$ by edge $C_kC_e$, then
		$W' \subseteq C_e$, and $W' \cap C_e$ is contained in one of the end vertices of the clique forest of $G[W \cup W']$.
	\end{enumerate}
\end{lemma}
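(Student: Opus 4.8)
The plan is to reduce the two cases to what we already know about paths in clique forests, using Lemma~\ref{lem:helpful-1} as the main engine. Throughout, write $V_{\mathcal{P}} = C_1 \cup \dots \cup C_k$, so that by Lemma~\ref{lem:helpful-1}, $G[V_{\mathcal{P}}]$ is an interval graph whose clique forest is exactly the path $\mathcal{P}$. The first thing I would establish is the precise location of $W$ and $W'$ relative to this path. Every node $w \in W$ has $\mathcal{T}(w)$ a subpath of $\mathcal{P}$; in the internal-path case, since $\mathcal{P}$ is internal (every $C_i$ has degree $2$ in $\mathcal{T}$) and connected to the rest of $\mathcal{T}$ only through $C_s C_1$ and $C_k C_e$, a node $w$ whose subtree is a proper subpath of $\mathcal{P}$ cannot also belong to $C_s$ or $C_e$ — otherwise $\mathcal{T}(w)$ would extend outside $\mathcal{P}$. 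Hence $W \subseteq V_{\mathcal{P}} \setminus (C_s \cup C_e)$, and symmetrically in the pendant case $W \subseteq V_{\mathcal{P}} \setminus C_e$ (with the leaf end of $\mathcal{P}$ having no outside neighbor at all).

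Next I would pin down $W'$. Suppose $u \notin W$ has a neighbor $w \in W$. Then some maximal clique contains both $u$ and $w$; since $w \in W$, every maximal clique containing $w$ is one of $C_1, \dots, C_k$, so $u \in C_i$ for some $i$, i.e.\ $u \in V_{\mathcal{P}}$. Thus $W \cup W' \subseteq V_{\mathcal{P}}$, and so $G[W \cup W']$ is an induced subgraph of the interval graph $G[V_{\mathcal{P}}]$, hence itself an interval graph — this already gives the first assertion of the lemma. For the internal case, I then argue that $u \in C_s \cup C_e$: the cliques $C_i$ with $2 \le i \le k-1$ consist only of nodes $v$ with $\mathcal{T}(v)$ a subpath of $\mathcal{P}$ (their subtrees cannot leave $\mathcal{P}$), so such $C_i \subseteq W$; thus a node outside $W$ lying in some $C_i$ must lie in $C_1$ or $C_k$, and if it lies in $C_1 \setminus W$ then (its subtree leaving $\mathcal{P}$ through $C_1$) it lies in $C_s$, similarly $C_k \setminus W$ forces membership in $C_e$. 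This yields $W' \subseteq C_s \cup C_e$; the pendant case is the same argument with only the $C_e$ side present.

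The remaining point — that $W' \cap C_s$ and $W' \cap C_e$ sit inside the \emph{end vertices} of the clique forest of $G[W \cup W']$ — I would handle by applying Lemma~\ref{lem:helpful-1} once more, now to the graph $H := G[W \cup W']$. Since $W \cup W' \subseteq V_{\mathcal{P}}$, the maximal cliques of $H$ are exactly the nonempty sets among $C_i \cap (W \cup W')$, arranged along a subpath $\mathcal{P}'$ of $\mathcal{P}$; in particular the clique forest of $H$ is the path $C_1' , \dots, C_m'$ where $C_j'$ is $C_{i_j} \cap (W \cup W')$ in increasing order of index. The nodes of $W' \cap C_s$ appear only in cliques $C_i$ with small index (those adjacent to the $C_s$ side), so they land in the first clique of $\mathcal{P}'$; symmetrically $W' \cap C_e$ lands in the last. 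The one subtlety to be careful about is verifying that $W' \cap C_s$ really is confined to the first clique of $H$'s clique forest and not merely to $C_1$ of $\mathcal{P}$ — i.e.\ that intersecting with $W \cup W'$ and re-extracting maximal cliques does not reorder or merge the endpoints. This bookkeeping about how the clique forest of an induced subgraph of an interval graph relates to that of the ambient graph is the main thing to get right; everything else is a direct unwinding of the definitions of binary/internal/pendant paths and of $\mathcal{T}(v)$.
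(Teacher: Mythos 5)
Your overall plan follows the paper's: Lemma~\ref{lem:helpful-1} plus heredity of interval graphs gives that $G[W\cup W']$ is interval, and the content of the lemma is then to locate $W'$ in the boundary cliques. However, your intermediate claim that $C_i\subseteq W$ for all $2\le i\le k-1$ is false: nothing prevents a node $v$ with $\phi(v)=\{C_s,C_1,C_2\}$, which lies in the middle clique $C_2$ but whose subtree is not a subpath of $\mathcal{P}$, so $v\notin W$. The conclusion you want, $W'\subseteq C_s\cup C_e$, is still true, but the correct argument is the connectivity one that appears only in your parenthetical: if $u\in C_i\setminus W$ for \emph{any} $i$, then $\mathcal{T}(u)$ contains $C_i$ yet is not a subpath of $\mathcal{P}$, so it contains a vertex outside $\mathcal{P}$; since $\mathcal{P}$ is internal, the only edges leaving $\mathcal{P}$ are $C_sC_1$ and $C_kC_e$, and connectedness of $\mathcal{T}(u)$ forces $u\in C_s\cup C_e$. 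This repair is easy, so I regard it as a fixable slip rather than a wrong approach.

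The more substantial issue is that the endpoint containment --- that $W'\cap C_s$ and $W'\cap C_e$ sit in the \emph{end vertices} of the clique forest of $G[W\cup W']$, which is exactly what Lemma~\ref{lem:recoloring} later needs --- is left by you as acknowledged ``bookkeeping.'' Moreover, the shortcut you propose for it, namely that the maximal cliques of $G[W\cup W']$ are exactly the nonempty sets $C_i\cap(W\cup W')$ in path order, is itself unjustified: after restricting to $W\cup W'$ some of these intersections can be contained in others, so they need not all be maximal cliques; this is precisely the merging/reordering danger you name but do not resolve. The paper handles this step by a slightly different route: having shown $\bigcup_i C_i\setminus W\subseteq C_s\cup C_e$, it notes $W\cup C_s\cup C_e=C_s\cup C_1\cup\cdots\cup C_k\cup C_e$ and applies Lemma~\ref{lem:helpful-1} to the extended path $C_s,C_1,\ldots,C_k,C_e$, so that $G[W\cup C_s\cup C_e]$ is an interval graph whose clique forest is literally that path with $W'$ confined to its two end cliques; the statement for the induced subgraph $G[W\cup W']$ is then read off (the paper itself is terse here, saying ``it is not hard to see''). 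So your gap sits at the same spot where the paper is brief, but the paper's extended-path formulation reduces the remaining step to restricting an explicitly known clique path, whereas your version requires re-deriving the clique forest of $G[W\cup W']$ from scratch, which you have not done.
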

\begin{proof}
	First, notice that $W \cup W'$ is a subset of nodes $v$ whose subtrees $\mathcal{T}(v)$ intersect with $\mathcal{P}$.
	Hence, it follows from Lemma \ref{lem:helpful-1} that $G[W \cup W']$ is an interval graph.
	
	Let now $\mathcal{P}$ be an internal path.
	It follows from the definition of clique forest that $W'$ is a subset of $C_s \cup C_e$.
	Since $W \cup C_s \cup C_e = \bigcup_{i=1}^{k} C_i \cup C_s \cup C_e$, and
	$C_s, C_1, \ldots, C_k, C_e$ is a path in $\mathcal{T}$, by Lemma \ref{lem:helpful-1}
	graph $G[W \cup C_s \cup C_e]$ is an interval graph with the clique forest being path 
	$C_s, C_1, C_2, \dots, C_k, C_e$. As $W' \subseteq C_s \cup C_e$ we conclude that $G[W \cup W']$ is 
	an interval graph as well. Finally, it is not hard to see that every node of $W'$ is necessarily contained in one of the
	end vertices of the clique forest of $G[W \cup W']$. 
	The case of $\mathcal{P}$ being a pendant path is proved similarly. 
\end{proof}

In order to resolve these coloring conflicts we carry out a recoloring process on interval graph $G[W \cup W']$ with
fixed colorings of its 'boundary' cliques. To this end, we reuse a result by Halld\'{o}rsson and Konrad \cite{hk17}:\footnote{The original lemma proved by Halld\'{o}rsson 
and Konrad is formulated slightly differently. The version needed here can be proved following exactly the proof given in \cite{hk17}.}

\begin{lemma}[Halld\'{o}rsson and Konrad \cite{hk17}]\label{lem:col-completion}
Let $G=(V,E)$ be an interval graph with its clique forest $\mathcal{T} = (\mathcal{C}, \mathcal{E})$ being a path $\mathcal{P} = C_1, C_2, \ldots, C_k$
such that $\textup{dist}_G(u, v) \ge r$ for every pair of nodes $u \in C_1, v \in C_k$, for an integer $r \ge 5$.
Suppose that cliques $C_1$ and $C_k$ are legally colored using at most $c$ colors.
Then the coloring of $G[C_1 \cup C_k]$ can be extended to a legal coloring of $G$ with at most 
$\max \{ \lfloor (1+\frac{1}{r-3}) \chi(G) \rfloor + 1, c \}$ colors.
\end{lemma}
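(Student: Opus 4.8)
The plan is to exploit the linear structure guaranteed by the hypothesis that the clique forest of $G$ is the path $\mathcal{P}=C_1,\dots,C_k$. Fix an interval representation compatible with $\mathcal{P}$, so that each node $v$ occupies a contiguous block $[\ell(v),\rho(v)]$ of clique indices (those $j$ with $v\in C_j$), two nodes are adjacent iff their blocks overlap, and $\chi(G)=\omega(G)=\max_j|C_j|$ since interval graphs are perfect. The workhorse is a \emph{one-sided greedy}: if the nodes are processed in non-decreasing order of $\ell(\cdot)$, then at the moment $v$ is processed all of its already-coloured neighbours lie in the single clique $C_{\ell(v)}$, hence there are at most $\omega(G)-1$ of them; so, starting from a prescribed colouring of $C_1$ that uses colours from $\{1,\dots,c\}$ and always taking the smallest colour not blocked by a coloured neighbour, the process succeeds and uses only colours from $\{1,\dots,\max\{c,\chi(G)\}\}$. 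The mirror version, processed in non-increasing order of $\rho(\cdot)$, extends a prescribed colouring of $C_k$ with the same guarantee.

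The first real step is to read off the geometry forced by the distance hypothesis. Since $r\ge5$, no node of $C_1$ is adjacent to a node of $C_k$. Put $j_L:=\max_{u\in C_1}\rho(u)$ and $j_R:=\min_{w\in C_k}\ell(w)$. Because $\mathcal{P}$ is a spanning path of the (connected) weighted clique intersection graph, every consecutive pair $C_j,C_{j+1}$ shares a node; hence from a node of $C_1$ that reaches $C_{j_L}$ one can walk through one node of each intersection $C_j\cap C_{j+1}$ up to a node of $C_{j_R}$ and from there to a node of $C_k$ that starts at $C_{j_R}$, in $(j_R-j_L)+1$ steps, so $r\le(j_R-j_L)+1$ and therefore $j_R-j_L\ge r-1$. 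Consequently the at least $r-2$ cliques $C_{j_L+1},\dots,C_{j_R-1}$ --- the \emph{free zone} --- contain no node of $C_1\cup C_k$, so $\gamma$ imposes no constraint inside it.

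Next I would assemble the colouring in three parts. Run the left-to-right greedy from $\gamma|_{C_1}$ to colour $G[\{v:\rho(v)<m\}]$ and the right-to-left greedy from $\gamma|_{C_k}$ to colour $G[\{v:\ell(v)>m\}]$, where $m$ is the midpoint of the free zone; each uses only colours from $\{1,\dots,\max\{c,\chi(G)\}\}$ and, being run inside the free zone, neither touches $C_1$ or $C_k$ improperly. The two colourings agree everywhere except on the nodes that cross $C_m$, and the remaining task is to colour $C_m$ together with the rest of the free zone so that it meshes legally with both sides. Here I would reuse the colour-shifting technique of Halld\'{o}rsson and Konrad \cite{hk17}: since the free zone consists of at least $r-2$ consecutive cliques, one can interpolate between the colour pattern inherited from the left and the one inherited from the right by re-routing colour classes one step at a time, spreading the at most $\chi(G)$ re-routings over the $\ge r-3$ transitions available inside the free zone, so that no clique is ever forced to carry more than $\lceil\chi(G)/(r-3)\rceil$ surplus colours; together with a $+1$ for rounding this gives $\lfloor(1+\tfrac1{r-3})\chi(G)\rfloor+1$, and when $c$ already exceeds this quantity the larger palette of size $c$ absorbs the shifts for free, so the total is $\max\{\lfloor(1+\tfrac1{r-3})\chi(G)\rfloor+1,\,c\}$.

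I expect the reconciliation inside the free zone to be the main obstacle: one must exhibit the sequence of recolourings explicitly, verify legality after every step --- in particular for intervals that stretch across the entire free zone and hence see both boundary patterns at once --- and confirm that no surplus colour ever leaks out to a clique outside the free zone, so that on $C_1$ the colouring still equals $\gamma|_{C_1}$ and on $C_k$ it still equals $\gamma|_{C_k}$. The structural facts established above --- the width $\ge r-2$ of the free zone and the fact that $\gamma$ leaves it unconstrained --- are exactly what create room for this; everything else is routine bookkeeping about the interval representation and the two greedy passes.
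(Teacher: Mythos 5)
There is no paper proof to mirror here: the paper imports this lemma verbatim (modulo a slight reformulation) from Halld\'{o}rsson and Konrad, with a footnote saying that the needed version ``can be proved following exactly the proof given in \cite{hk17}.'' Your preliminaries are sound and consistent with that proof's setting: the consecutive-clique ordering along the path, $\chi(G)=\omega(G)=\max_j|C_j|$ by perfection, the one-sided greedy that extends a coloring of $C_1$ (or $C_k$) with at most $\max\{c,\chi(G)\}$ colors, and the deduction from the distance hypothesis that the ``free zone'' between $\max_{u\in C_1}\rho(u)$ and $\min_{w\in C_k}\ell(w)$ spans at least $r-2$ cliques containing no node of $C_1\cup C_k$ are all correct.

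The problem is that everything after that is the lemma itself, and you do not prove it. Gluing the left-to-right and right-to-left greedy passes at a midpoint clique $C_m$ can leave the nodes crossing $C_m$ facing up to roughly $2\chi(G)-2$ distinct forbidden colors, so the entire content of the statement is the interpolation step: an explicit recoloring scheme that morphs the left color pattern into the right one across the $\ge r-3$ transitions of the free zone, never exceeding $\lfloor(1+\frac{1}{r-3})\chi(G)\rfloor+1$ colors on any clique, preserving legality at every step (including for intervals that stretch across a large portion of the zone), and never altering the prescribed colors on $C_1$ and $C_k$. You defer precisely this step to ``the colour-shifting technique of Halld\'{o}rsson and Konrad'' and yourself flag it as the main obstacle, so as a standalone argument the proposal is circular: it reduces the cited lemma to the cited technique without exhibiting it, and the quantitative claim that at most $\lceil\chi(G)/(r-3)\rceil$ surplus colors per clique suffice is asserted rather than verified. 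To close the gap you would need to spell out the re-routing of color classes (e.g., how many classes are re-aligned per transition, and why a node whose interval covers several transitions is recolored consistently), or else simply cite \cite{hk17} for the lemma as the paper does.
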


Equipped with Lemma~\ref{lem:col-completion}, we now prove correctness of the color correction phase.
 
\begin{lemma}[Recoloring Lemma]\label{lem:recoloring}
 Consider the color correction phase (Step 3) of Algorithm \ref{alg:coloringChordal}. 
  Let $\mathcal{P} \in \mathcal{L}_i$ be a path and let $W \subseteq V_i$ 
  be the subset of nodes whose corresponding subtrees are included in $\mathcal{P}$. 
  Further, let $W' \subseteq \bigcup_{s > i} V_s = U_{i+1}$ be the nodes in $U_{i+1}$
  that have neighbors in $W$.
  Suppose that $W'$ is colored using colors from the set $[ \lfloor (1+1/k)\chi(G) + 1 \rfloor ]$.
  Then, we can recolor those nodes of $W$ that are at distance at most $k+4$ from $W'$ in $G$ with colors from the set 
$[ \lfloor (1+1/k)\chi(G) + 1 \rfloor ]$ so that $G[W \cup W']$ is legally colored.
\end{lemma}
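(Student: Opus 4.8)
The plan is to derive the statement from the colour–completion lemma of Halld\'orsson and Konrad (Lemma~\ref{lem:col-completion}), applied separately in a neighbourhood of each end of the interval graph $G[W\cup W']$. First I invoke Lemma~\ref{lem:conflict-vertices}: $G[W\cup W']$ is an interval graph whose clique forest is a path $\mathcal{Q}=D_1,\dots,D_m$, and every node of $W'$ sits in an end clique of $\mathcal{Q}$ — precisely $W'\cap C_s\subseteq D_1$ and $W'\cap C_e\subseteq D_m$ when $\mathcal{P}$ is internal, and $W'\cap C_e\subseteq D_m$ only (with $D_1$ free of $W'$) when $\mathcal{P}$ is pendant. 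Since $G[W]$ and $G[W']$ are each legally coloured in isolation, every monochromatic edge of $G[W\cup W']$ joins $W$ to $W'$ and therefore lies inside $D_1$ or inside $D_m$; one endpoint is in $W'$, the other in $W$ and adjacent to it, hence within distance $1$ of $W'$. I then use the hypothesis $diam(\mathcal{P})\ge 3k$ (needed only for internal $\mathcal{P}$; a pendant path carries conflicts at one end only) to argue that $dist_G(D_1,D_m)$ is large enough that the two ``repair regions'' — the $W$-nodes within distance $k+4$ of $W'\cap C_s$, resp.\ of $W'\cap C_e$ — are disjoint and have no edge between them, so it suffices to treat one end at a time. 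I describe the $D_1$-end and write $A:=W'\cap C_s\subseteq D_1$.

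For that end, let $q$ be smallest with $dist_G(D_1,D_q)\ge k+3$ (such $q$ exists because, for an internal path, the diameter bound makes the left part of $\mathcal{Q}$ long enough; if $\mathcal{Q}$ is too short — which by $diam(\mathcal{P})\ge 3k$ can only happen for a short pendant path — one runs the argument below on all of $G[W\cup W']$, which still has $\chi\le\chi(G)$ and can be completed within the palette $[\lfloor(1+1/k)\chi(G)+1\rfloor]$). Put $H:=G[D_1\cup\dots\cup D_q]$; by Lemma~\ref{lem:helpful-1} read inside $G[W\cup W']$, $H$ is an interval graph with clique forest the sub-path $D_1,\dots,D_q$ and $\chi(H)\le\chi(G)$. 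Using that clique-distances are monotone along an interval graph's clique path, every $v\in D_j$ with $j\le q-1$ satisfies $dist_G(v,A)\le dist_G(v,D_1)+1\le dist_G(D_1,D_{q-1})+2\le k+4$, so all of $(D_1\cup\dots\cup D_{q-1})\setminus A$ is eligible for recolouring; meanwhile $D_q$ is too far from $A$ (and, by the separation, from $W'\cap C_e$) to meet $W'$, so $D_q\subseteq W$, is legally coloured already by the colouring phase, and has no $W'$-neighbour, hence no conflict. The end clique $D_1$ may, however, contain $W$-nodes clashing with the fixed colours on $A$; since $D_1$ is a single clique with $|D_1|\le\chi(G)$ and the palette has $\ge\chi(G)+1$ colours, I first recolour $D_1\cap W$ by hand so that $D_1$ is legally coloured within the palette while the colours on $A$ are preserved.

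Now both end cliques of $H$ are legally coloured with colours from $[\lfloor(1+1/k)\chi(G)+1\rfloor]$, and $r:=dist_G(D_1,D_q)\ge k+3\ge 5$. Lemma~\ref{lem:col-completion} extends this to a legal colouring of $H$ with at most $\max\{\lfloor(1+\tfrac{1}{r-3})\chi(H)\rfloor+1,\ \lfloor(1+1/k)\chi(G)+1\rfloor\}\le\lfloor(1+1/k)\chi(G)+1\rfloor$ colours, using $r-3\ge k$, $\chi(H)\le\chi(G)$, and $\lfloor(1+\tfrac1k)\chi(G)\rfloor+1=\lfloor(1+\tfrac1k)\chi(G)+1\rfloor$. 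Only colours of $W$-nodes inside $D_1\cup\dots\cup D_{q-1}$ change, all within distance $k+4$ of $W'$, while $D_q$ and everything beyond is untouched, so the recolouring is consistent with the rest of $G[W\cup W']$ (every edge of $G[W\cup W']$ lies in $H$, or in $D_q\cup\dots\cup D_m$, or inside the shared clique $D_q$). Performing the symmetric recolouring near $D_m$ when $\mathcal{P}$ is internal — which by the separation does not interfere with the $D_1$-end — resolves all $W$–$W'$ conflicts.

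The hard part, I expect, is exactly the end clique: $D_1$ need not be contained in $W'$, so it may carry $W$-nodes that conflict with the fixed colours on $W'\cap C_s$, which forbids feeding $D_1$'s current colouring directly as the boundary of Lemma~\ref{lem:col-completion}. The two-step remedy — repair the single clique $D_1$ using the one spare palette slot, then complete with Lemma~\ref{lem:col-completion} — resolves this, but it must be paired with the clique-distance monotonicity bookkeeping that certifies that precisely the prefix $D_1,\dots,D_{q-1}$, and nothing farther, lies within distance $k+4$ of $W'$, and with the (routine but fiddly) interval-graph estimate relating $diam(\mathcal{P})$ to $dist_G(D_1,D_m)$ that yields the separation of the two repair regions.
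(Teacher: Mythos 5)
Your proof follows essentially the same route as the paper's: by Lemma~\ref{lem:conflict-vertices} the conflicts sit in the end cliques of the clique path of $G[W \cup W']$, and each end is repaired by one application of Lemma~\ref{lem:col-completion} to the prefix ending at the first clique at distance at least $k+3$ from the conflicting boundary, with $r = k+3$ yielding the palette bound $\lfloor (1+1/k)\chi(G) \rfloor + 1$. Your additional step of first recoloring the $W$-part of the end clique within the palette, and your bookkeeping that the two ends of an internal path are treated on disjoint prefixes, only make explicit details that the paper's proof leaves implicit.
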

\begin{proof}
By Lemma~\ref{lem:conflict-vertices}, $G[W \cup W']$ is an interval graph and its clique forest is a path. 
Let $\mathcal{P}' = C_1, C_2, \dots, C_r$ denote this path. 
Lemma~\ref{lem:conflict-vertices} also states that $W' \subseteq C_1 \cup C_r$. 
 
 Let $i$ be the minimum index such that $\dist(u,v) \ge k+3$, for every
 $u \in W' \cap C_1$ and $v \in C_i$. Then, by Lemma~\ref{lem:col-completion}, the nodes of the cliques $C_2, \dots, C_{i-1}$ 
 can be recolored using at most  $\lfloor (1+1/k) \chi(G) \rfloor + 1$ colors to resolve the coloring conflicts
 between $W' \cap C_1$ and $W$. Similarly, let $j$ be the maximum index such that $\dist(u,v) \ge k+3$, for every 
 $u \in W' \cap C_r$ and $v \in C_j$. Then, again by Lemma~\ref{lem:col-completion}, the nodes of the cliques 
$C_{j+1}, \dots, C_{r-1}$ can be recolored using at most  $\lfloor (1+1/k) \chi(G) \rfloor + 1$ colors to resolve the coloring conflicts
 between $W' \cap C_r$ and $W$. 
\end{proof}

\begin{theorem} \label{thm:sequential}
 For every $\epsilon > \frac{2}{\chi(G)}$, Algorithm~\ref{alg:coloringChordal} is a $(1+\epsilon)$-approximation \textsc{MVC} algorithm on chordal graphs.
\end{theorem}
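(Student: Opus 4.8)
The goal is to show two things: that Algorithm~\ref{alg:coloringChordal} outputs a \emph{legal} coloring of $G$, and that it uses at most $(1+\epsilon)\chi(G)$ colors. I assume throughout that $\epsilon \le 1$; this is without loss of generality, since a $(1+\epsilon)$-approximation for $\epsilon > 1$ follows from the one for $\epsilon = 1$, and it guarantees $k = 2/\epsilon \ge 2$, so that $k+3 \ge 5$ as required when Lemma~\ref{lem:col-completion} is invoked inside Lemma~\ref{lem:recoloring}.

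\textbf{Legality.} By Corollary~\ref{lem:pruning} the pruning phase terminates within $\lceil\log n\rceil$ iterations and the layers $V_1,\dots,V_{\lceil\log n\rceil}$ partition $V$; by Lemma~\ref{lem:tree-decomp} each $\mathcal{T}_i$ is the clique forest of $G[U_i]$, so Steps~1 and~3 are well defined. Every $v\in V_i$ has $\mathcal{T}(v)$ contained in a path of $\mathcal{L}_i$, so by Lemma~\ref{lem:helpful-1} the graph $G[V_i]$ is an interval graph and the coloring phase colors it legally; consequently, after the coloring phase every monochromatic edge of $G$ joins two distinct layers. I then argue by downward induction on $i$ that once iteration $i$ of Step~3 has been completed the coloring is legal on $\bigcup_{s\ge i}V_s$, the base case being the state just before Step~3, which is legal on $V_{\lceil\log n\rceil}$. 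For the inductive step, suppose the coloring is legal on $U_{i+1}=\bigcup_{s>i}V_s$. Iteration $i$ recolors only nodes of $V_i$, so $U_{i+1}$ stays legal; it remains to eliminate all monochromatic edges incident to $V_i$, and by the observation above these run between $V_i$ and $U_{i+1}$ only. For a fixed path $\mathcal{P}\in\mathcal{L}_i$ with associated $W\subseteq V_i$ and $W'\subseteq U_{i+1}$, Lemma~\ref{lem:conflict-vertices} gives that $G[W\cup W']$ is an interval graph with a path-shaped clique forest whose two end cliques contain $W'$, and by the induction hypothesis (together with the palette bound established below) $W'$ is colored from $[\lfloor(1+1/k)\chi(G)+1\rfloor]$; Lemma~\ref{lem:recoloring} then recolors only nodes of $W$, from the same palette, so that $G[W\cup W']$ becomes legal. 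The remaining point is to check that the recolorings performed for the various paths of $\mathcal{L}_i$ compose to a coloring that is simultaneously legal on all of $V_i$ and still on $U_{i+1}$; here one uses that each recoloring changes colors only inside $V_i$ and within the fixed palette, touches no node of $U_{i+1}$, and is governed by the interval structure of the relevant portion of $\mathcal{T}_i$ supplied by Lemmas~\ref{lem:helpful-1} and~\ref{lem:conflict-vertices}. Taking $i=1$ yields a legal coloring of $G$.

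\textbf{Number of colors.} In the coloring phase $G[V_i]$ is colored with at most $(1+1/k)\chi(G[V_i])+1$ colors, and $\chi(G[V_i])\le\chi(G)$ since $G[V_i]$ is an induced subgraph of $G$; hence every layer starts with a palette of size at most $\lfloor(1+1/k)\chi(G)+1\rfloor=\chi(G)+\lfloor\chi(G)/k\rfloor+1$, which we may take to be $[\lfloor(1+1/k)\chi(G)+1\rfloor]$. Lemma~\ref{lem:recoloring} keeps every recolored node in this palette, so the final coloring uses at most $\chi(G)+\lfloor\chi(G)/k\rfloor+1$ colors. With $k=2/\epsilon$ this equals $\chi(G)+\lfloor\epsilon\chi(G)/2\rfloor+1$, and since $\lfloor\epsilon\chi(G)/2\rfloor\le\epsilon\chi(G)/2$ and $\epsilon>2/\chi(G)$ gives $\epsilon\chi(G)\ge2$, we obtain $\lfloor\epsilon\chi(G)/2\rfloor+1\le\epsilon\chi(G)$, hence at most $(1+\epsilon)\chi(G)$ colors in total.

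\textbf{Main obstacle.} The routine part is the color count, a short arithmetic estimate that uses exactly the hypothesis $\epsilon>2/\chi(G)$. The delicate part is the correctness of the color correction phase: one must verify that after the coloring phase all conflicts are cross-layer, that processing layers from the top down never reintroduces conflicts among the already-finalized higher layers, and --- the most error-prone point --- that the per-path recolorings within a single iteration, which may be based on overlapping node sets $W$, genuinely combine into a coloring that is legal on the whole layer. All of this rests on the structural Lemmas~\ref{lem:helpful-1} and~\ref{lem:conflict-vertices} together with the invariant that every layer's palette remains $[\lfloor(1+1/k)\chi(G)+1\rfloor]$ throughout.
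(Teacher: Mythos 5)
Your proof follows essentially the same route as the paper's: color each layer as an interval graph within the palette $[\lfloor(1+1/k)\chi(G)+1\rfloor]$, maintain legality and this palette bound by downward induction over the layers via Lemma~\ref{lem:recoloring}, and use the hypothesis $\epsilon>\frac{2}{\chi(G)}$ to turn $(1+1/k)\chi(G)+1$ into $(1+\epsilon)\chi(G)$. You spell out the cross-layer conflict structure and the composition of per-path recolorings in more detail than the paper's very terse proof, but the underlying argument is the same.
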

\begin{proof}
 First, we show by induction that the algorithm uses at most $(1+1/k)\chi(G) + 1$ colors. 
This is clearly true for $G_{\lceil \log n \rceil}$. 
The induction step follows from Lemma~\ref{lem:recoloring}. Now, using the assumption $\epsilon > \frac{2}{\chi(G)}$, we obtain:
$(1+1/k)\chi(G) + 1 \le (1+\epsilon/2)\chi(G) + \epsilon \chi(G)/2 = (1+\epsilon) \chi(G)$,
which proves the approximation factor of the algorithm. 
\end{proof}

 \section{Minimum Vertex Coloring: Distributed Implementation}
 \label{sec:coloring-distributed}
 We now give an implementation of Algorithm~\ref{alg:coloringChordal} in the {\sf LOCAL} model.
 
 \subsection{Algorithm}
 The global behavior of our distributed algorithm, Algorithm~\ref{alg:distColoringChordal}, is identical to that of our centralized Algorithm~\ref{alg:coloringChordal}. 
 The main challenge lies in the coordination of the network nodes. One particular difficulty stems from the fact
 that network nodes are not aware of $n$, the total number of nodes, and thus do not know when the 
 $\lceil \log n \rceil$ iterations of the pruning phase have completed. For this reason, nodes execute the three phases of
 Algorithm~\ref{alg:coloringChordal} asynchronously.
 
 We will first present the pseudocode of our distributed algorithm, which is executed independently on every node $v$.
 Then we will describe each of the three phases in detail.

\begin{algfloat} 
 \normalsize
 \noindent \fbox{\parbox{\textwidth - 7pt}{
 
 \textbf{Input:} 
a parameter $\epsilon$.

\vskip1ex

Let $k = \lceil 2/\epsilon \rceil$.

\vspace{-0.2cm}

 \begin{enumerate}
\item \textbf{Pruning Phase.} 
$(l_v, parent_v, children_v) \gets \textsc{pruneTree}()$. 


\item \textbf{Coloring Phase.} 
Run $\textsc{ColIntGraph}(\frac{1}{k})$ on layer $l_v$ and store color in $c_v$.


\item \textbf{Color Correction Phase.} 
\begin{enumerate}[topsep=0pt,leftmargin=0cm]
 \item[] \textbf{if} $parent_v \neq \bot$ \textbf{then} 
 
\begin{enumerate}[topsep=0pt,leftmargin=0.5cm]
	\item[] Wait until message $\textsc{SetColor}(c)$ received from $parent_v$.
	\item[] Set $c_v \gets c$.
 \end{enumerate}

 \textbf{end if}
\item[] $\textsc{CorrectChildren}(children_v,k)$.
\end{enumerate}
 \end{enumerate}
 \vspace{-0.3cm}
}}
\begin{center}
\algo{alg:distColoringChordal}
 \textbf{Algorithm \ref*{alg:distColoringChordal}.} A distributed $(1+\epsilon)$-approximation algorithm, code for node $v$.
\end{center}
\vspace{-0.5cm}
\end{algfloat}


\vskip1ex
\mypara{The Pruning Phase}
In the pruning phase, the subroutine $\textsc{PruneTree}$ is invoked and returns parameters $l_v$, $parent_v$, and $children_v$,
where $l_v$ is the layer of node $v$, and $parent_v$ and $children_v$ are variables necessary for the coordination of the
color correction phase and are defined and explained further below. The pseudocode of \textsc{PruneTree} is given in 
Algorithm~\ref{alg:pruneTree}. 

\begin{algfloat}
 \normalsize
 \noindent \fbox{\parbox{\textwidth - 7pt}{
 
 \textbf{Initialization: } \\
 Let $i = 1$, $l_v = -1, children_v = \{ \}$, and $parent_v = \bot$.
 
 \vspace{0.1cm}
 \textbf{while} $l_v = -1$ \textbf{do}:
 \begin{enumerate}[topsep=4pt,leftmargin=0.8cm]
  \item Collect $\Gamma^{10k}_G(v)$ together with variables $l_u$ and $ID_u$, for every $u \in \Gamma^{10k}_G(v)$.
  \item Compute local view of the clique forest $\mathcal{T}_i = (\mathcal{C}_i, \mathcal{E}_i)$ of the subgraph of $G$
	induced by the nodes $u \in \Gamma^{10k}_G(v)$ with $l_u = -1$.
  
  \item \textbf{if} $\mathcal{T}_i(v)$ is a subpath of a pendant path in $\mathcal{T}_i$, or
			  $\mathcal{T}_i(v)$ is a subpath of a binary path in $\mathcal{T}_i$ of diameter at least $3k$ \textbf{then}
	\begin{enumerate}[topsep=0pt,leftmargin=0.5cm]
   		\item[] $l_v = i$.
		\item[] $parent_v = $ parent of $v$.
	\end{enumerate}
   \textbf{else}
	\begin{enumerate}[topsep=0pt,leftmargin=0.5cm]
  	 	\item[] Add children in layer $i$ (if there are any) to $children_v[i]$.
	\end{enumerate}
   \textbf{end if}
  \item $i = i + 1$.
 \end{enumerate}
 \vspace{0.1cm}
\textbf{return} $(l_v, parent_v, children_v)$.
}}
\begin{center}
\algo{alg:pruneTree}
 \textbf{Algorithm \ref*{alg:pruneTree}.} \textsc{PruneTree}(), code for node $v$.
\end{center}
\vspace{-0.5cm}
\end{algfloat}

 In each iteration of the while loop of \textsc{PruneTree}, one layer is removed from the clique forest of the input graph. 
 To describe the global behavior of the algorithm, we will reuse the naming conventions already used in Algorithm~\ref{alg:coloringChordal}. 
 Let $U_1 = V$, and let $V_i \subseteq U_i$ be the set of nodes removed in iteration $i$, i.e., assigned layer index $i$. 
Let also $U_{i+1} = U_i \setminus V_i$, and 
let $\mathcal{L}_i$ be the set of maximal paths removed from the clique forest $\mathcal{T}_i$ of $G[U_i]$. 
For convenience we will sometimes denote $G[U_i]$ by $G_i$.
 In each iteration $i$, first, each node $v$ collects its distance-$10k$ neighborhood. Then, $v$ computes its local view of 
 the clique forest $\mathcal{T}_i$ of the graph induced by the nodes that have not yet been removed from the graph, i.e., 
 of $G[U_i]$ (as in Section~\ref{sec:local-views}).
 Next, nodes $v$ are removed from $G[U_i]$ and added to the current layer $V_i$ if its corresponding subtrees $\mathcal{T}_i(v)$ 
 are entirely contained in either a pendant path or a binary path of large enough diameter. This step is identical to 
Algorithm~\ref{alg:coloringChordal}, and 
 the exact same partitioning is computed. Nodes $v$ that are removed in the current iteration store their {\em parent} 
 in variable $parent_v$, and nodes that remain in the graph potentially store some of the removed nodes as their children in 
 variable $children_v$. The notions of parent and child are defined as follows: 
 
 \begin{definition}[Parent, Child]
Let $v \in V_i$ and
let $\mathcal{P}$ be the maximal 
binary
path in $\mathcal{T}_i$ that contains $\mathcal{T}_i(v)$. If $\mathcal{P}$ is a component of $\mathcal{T}_i$
then we define $parent_v := \bot$. Otherwise, let $C$ be 
the vertex outside $\mathcal{P}$ in $\mathcal{T}_i$ such that $C$ is adjacent to an end vertex of $\mathcal{P}$ and 
$\dist_G(v, C)$ is minimal. 
Let $c$ be the node with maximum ID in $C$. Then the \textit{parent} of $v$ is defined to be $c$, if $dist_G(v, C) \le k + 3$,
and $\bot$ otherwise.
If $c$ is the parent of $v$, then we say that $v$ is the \textit{child} of $c$.
\end{definition}
 The parent of node $v$ is responsible for recoloring $v$ in the color correction phase. Notice that a node $v$ does not have a parent
 if the closest maximal clique outside $v$'s path $\mathcal{P}$ is at least at distance $k+4$ from $v$. In this case, the color that $v$ 
 will receive in the coloring phase is final and no color correction is needed for $v$. Recall that in the color correction phase of Algorithm~\ref{alg:coloringChordal},
 we only need to recolor nodes that are at distance at most $k+3$ from the cliques that contain nodes with color conflicts.
 Finally, the subroutine returns the node's level $l_v$, its parent $parent_v$, and its children $children_v$.

\mypara{The Coloring Phase}
Notice that all nodes of layer $i$ return from the subroutine \textsc{PruneTree} in the same round. They can hence invoke the 
coloring phase simultaneously. They run the algorithm \textsc{ColIntGraph} of Halld\'orsson and Konrad \cite{hk17} 
and compute a coloring on $G[V_i]$ that uses at most $\lfloor (1 + \frac{1}{k}) \chi(G[V_i] + 1 \rfloor$ colors. 
This algorithm runs in $\Order(k \log^* n)$ rounds.
 
 While some nodes execute the coloring phase, others still proceed with the execution of $\textsc{PruneTree}$. These nodes
 repeatedly collect their distance-$10k$ neighborhood. This requires all other network nodes, in particular, nodes that already
 completed the first phase, to continuously forward messages. This can be taken care of in the background, and we will not
 address this type of implementation detail any further. 
 
\mypara{The Color Correction Phase} In the color correction phase, nodes with assigned parents (i.e., nodes $v$ with $parent_v \neq \bot$)
 first wait until they received their final color from their parents. Only then they proceed and correct the colors of their children. 
 To this end, each such node $v$ runs the subroutine $\textsc{CorrectChildren}$, which processes $children_v$ layer by layer,
 starting with layer $l_v - 1$ down to $1$. If $v$ has children in layer $V_i$, then it waits until all nodes adjacent to $children_v[i]$
 which are contained in layers $> i$ have received their final colors. This can be done by repeatedly collecting its local 
 distance-$(k+5)$ neighborhood and monitoring whether the colors of all nodes in $\Gamma_{G_i}(children_v[i])$ are final. 
 Then, $v$ locally computes the color correction for $children_v[i]$ and notifies them about their new colors. 
 
 \begin{algfloat}
 \normalsize
 \noindent \fbox{\parbox{\textwidth - 7pt}{
 \textbf{for} $l \gets l_v - 1, l_v - 2, \dots, 1$ \textbf{do:}
 \begin{enumerate}[topsep=0pt,leftmargin=0.5cm]
 	\item[] \textbf{if} $children_v[l] \neq \{ \}$ \textbf{then}
  	\begin{enumerate}[topsep=0pt,leftmargin=1cm]
 		 \item Wait until all neighbors of $children_v[l]$ in $G[U_l]$ have received their final color.
  		\item Compute color correction as in Lemma~\ref{lem:recoloring}.
 		 \item For each $u \in children_v[l]$, send message $\textsc{SetColor}(c)$ to $u$, where $c$ is $u$'s new color.
 	\end{enumerate}
 \end{enumerate}
}}
\begin{center}
\algo{alg:correctChildren}
 \textbf{Algorithm \ref*{alg:correctChildren}.} $\textsc{CorrectChildren}(children_v, k)$, code for node $v$.
\end{center}
\vspace{-0.5cm}
\end{algfloat}

\subsection{Analysis}
To ensure correctness of our algorithm, we need to show that the parent of a node $v \in V_i$ is contained in a layer $j > i$ (Corollary~\ref{cor:parent-layer}).
This is proved via the following lemma:

 
\begin{lemma} \label{lem:neighbors-layers}
Let $\mathcal{P} \in \mathcal{L}_i$, and let $W \subseteq V_i$ be the set of nodes whose corresponding subtrees are included in 
$\mathcal{P}$. Then every node $u \in \Gamma_{G_i}(W)$ is contained in a layer $V_j$ with $j > i$.
\end{lemma}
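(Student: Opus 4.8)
The plan is to reduce the claim to the statement that no neighbour of $W$ in $G_i$ is removed in iteration $i$. Since $W \subseteq V_i \subseteq U_i$ and $G_i = G[U_i]$, every $u \in \Gamma_{G_i}(W)$ lies in $U_i$ and satisfies $u \notin W$, so its layer index is at least $i$ (it was not removed in any earlier iteration, and by Corollary~\ref{lem:pruning} it is removed in some iteration). Hence it suffices to prove $u \notin V_i$, which by the definition of $V_i$ amounts to showing that the subtree $\mathcal{T}_i(u)$ in the clique forest $\mathcal{T}_i$ of $G_i$ (well-defined by Lemma~\ref{lem:tree-decomp}) is not a subpath of any path of $\mathcal{L}_i$.

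First I would pin down where $u$ sits in $\mathcal{T}_i$. Apply Lemma~\ref{lem:conflict-vertices} to $G_i$, its clique forest $\mathcal{T}_i$, and the path $\mathcal{P}$: if $\mathcal{P}$ is an internal path connected to $\mathcal{T}_i$ by edges $C_sC_1$ and $C_kC_e$, then $\Gamma_{G_i}(W) \subseteq C_s \cup C_e$; if $\mathcal{P}$ is a pendant path connected to $\mathcal{T}_i$ by the single edge $C_kC_e$, then $\Gamma_{G_i}(W) \subseteq C_e$; and if $\mathcal{P}$ is itself a connected component of $\mathcal{T}_i$, then $W$ is a whole connected component of $G_i$, so $\Gamma_{G_i}(W) = \emptyset$ and there is nothing to prove. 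In the remaining cases $u$ lies in one of the cliques $C_s, C_e$ sitting just outside $\mathcal{P}$ and adjacent in $\mathcal{T}_i$ to an endpoint of $\mathcal{P}$; say $u \in C_s$, so that $C_s$ is a vertex of $\mathcal{T}_i(u)$.

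The heart of the argument is the observation that $\deg_{\mathcal{T}_i}(C_s) \ge 3$. Indeed, $C_s$ is adjacent to an endpoint $C_1$ of $\mathcal{P}$ with $C_s \notin \mathcal{P}$, so $C_s, C_1, C_2, \dots, C_k$ is a path in $\mathcal{T}_i$ strictly longer than $\mathcal{P}$; every vertex of $\mathcal{P}$ has $\mathcal{T}_i$-degree at most $2$ because $\mathcal{P}$ is a binary path, so if $C_s$ also had degree at most $2$, this longer path would be a binary path, contradicting that $\mathcal{P}$ is a maximal binary path of $\mathcal{T}_i$. Hence $\deg_{\mathcal{T}_i}(C_s) \ge 3$, so $C_s$ belongs to no binary path of $\mathcal{T}_i$, and in particular to no path of $\mathcal{L}_i$. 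Since $C_s$ is a vertex of $\mathcal{T}_i(u)$ and any subpath of a path of $\mathcal{L}_i$ would have to contain $C_s$, the subtree $\mathcal{T}_i(u)$ is not a subpath of any path of $\mathcal{L}_i$, i.e.\ $u \notin V_i$. Therefore $u \in U_i \setminus V_i = U_{i+1}$, and by Corollary~\ref{lem:pruning} $u$ is assigned to a layer $V_j$ with $j > i$.

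The only point that needs care is the meaning of the paths in $\mathcal{L}_i$: the maximality invoked above is maximality among \emph{binary} paths, so I would first record explicitly that ``maximal pendant path'' and ``maximal internal path'' denote the (pairwise disjoint) maximal binary paths of $\mathcal{T}_i$, classified according to whether they touch a leaf or consist entirely of degree-$2$ vertices. With this reading fixed, the degenerate cases ($\mathcal{P}$ a single vertex, or $\mathcal{P}$ a whole component) present no difficulty, and the degree count for the boundary clique $C_s$ (symmetrically $C_e$) is the single substantive step.
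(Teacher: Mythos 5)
Your proof is correct and takes essentially the same route as the paper's: both locate $u$ in a clique outside $\mathcal{P}$ that is adjacent to an end vertex of $\mathcal{P}$, use maximality of $\mathcal{P}$ as a binary path to conclude that this clique has degree at least $3$ in $\mathcal{T}_i$, and hence that $\mathcal{T}_i(u)$ is contained in no binary path, so $u$ is not assigned layer $i$ and must land in a later layer. The only cosmetic difference is that you obtain the location of $u$ by invoking Lemma~\ref{lem:conflict-vertices} and treat the degenerate cases explicitly, whereas the paper argues this placement directly.
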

\begin{proof}
	Notice that if a node $v$ is assigned layer number $i$, then $\mathcal{T}_i(v)$ is a subpath of a binary path in $\mathcal{T}_i$.
	Now let $u \in \Gamma_{G_i}(W)$ be a neighbor of some node $v \in W$. Since $u \notin W$, 
	$\mathcal{T}_i(u)$ is not a subpath of $\mathcal{P}$. On the other hand, the fact that $u$ is a neighbor of $v$ implies that
	$u$ belongs to a vertex $C$ of $\mathcal{T}_i$, which is adjacent to one of the end vertices of $\mathcal{P}$.
	As $\mathcal{P}$ is a maximal binary path, $C$ has degree at least 3 in $\mathcal{T}_i$. Therefore $\mathcal{T}_i(u)$
	is not a subpath of a binary path in $\mathcal{T}_i$, and hence $u$ is not assigned a layer number in the $i$-th iteration, which
	implies the lemma. 
\end{proof} 

\noindent It follows from the definition of parent that the parent of a node $v \in V_i$ belongs to $\Gamma_{G_i}(W)$. Hence:
\begin{corollary} \label{cor:parent-layer}
	The parent of a node $v \in V_i$ is contained in a layer $V_j$ with $j > i$.
\end{corollary}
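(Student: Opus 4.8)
The plan is to carry out the two-line argument indicated just before the statement, filling in the one step that needs care. Fix $v \in V_i$ and suppose $parent_v = c \neq \bot$ (otherwise there is nothing to prove). Let $\mathcal{P}$ be the maximal binary path of $\mathcal{T}_i$ containing $\mathcal{T}_i(v)$; this exists because $v$ was assigned layer $i$, and since $v \in V_i$ one checks that $\mathcal{P} \in \mathcal{L}_i$ (a maximal pendant path is a maximal binary path, and the only maximal binary paths not in $\mathcal{L}_i$ are the maximal internal paths of diameter $< 3k$, none of whose subpaths can contain $\mathcal{T}_i(v)$ if $v$ was pruned in iteration $i$). Let $W \subseteq V_i$ be the set of nodes whose subtrees in $\mathcal{T}_i$ are subpaths of $\mathcal{P}$, so that $v \in W$.

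The heart of the proof is to locate $c$ relative to $\mathcal{P}$. By the definition of parent, $c$ is the maximum-ID node of a maximal clique $C$ of $G[U_i]$ (a vertex of $\mathcal{T}_i$) that does not lie on $\mathcal{P}$ but is adjacent in $\mathcal{T}_i$ to an end vertex of $\mathcal{P}$. Since $\mathcal{P}$ is a \emph{maximal} binary path, all its vertices, including its end vertices, have degree at most $2$ in $\mathcal{T}_i$; hence if $C$ also had degree at most $2$, appending $C$ to the appropriate end of $\mathcal{P}$ would yield a strictly longer binary path, contradicting maximality. Therefore $\deg_{\mathcal{T}_i}(C) \geq 3$ — which is exactly the configuration handled in the proof of Lemma~\ref{lem:neighbors-layers}. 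In particular $\mathcal{T}_i(c)$ contains a vertex of degree at least $3$, so $\mathcal{T}_i(c)$ is not a subpath of any binary path of $\mathcal{T}_i$, and $c \in \Gamma_{G_i}(W)$.

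To conclude, I would invoke Lemma~\ref{lem:neighbors-layers} with this $\mathcal{P}$ and $W$: every node of $\Gamma_{G_i}(W)$, and in particular $c = parent_v$, lies in a layer $V_j$ with $j > i$, which is the claim. A self-contained alternative that avoids referring to $W$ at all is available and perhaps preferable: $c \in C$ with $C$ a maximal clique of $G[U_i]$ gives $c \in U_i$, so $c$ was not removed in iterations $1, \dots, i-1$; $\mathcal{T}_i(c)$ contains the degree-$\geq 3$ vertex $C$, so it is not a subpath of a pendant path nor of an internal path, hence $c \notin V_i$; and by Corollary~\ref{lem:pruning} $c$ is eventually assigned to some layer, so $c \in V_j$ for some $j > i$.

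The only point requiring genuine care is the assertion $parent_v \in \Gamma_{G_i}(W)$, i.e., that $c$ actually has a neighbour in $W$ rather than merely lying in a clique attached to $\mathcal{P}$; I expect this to be the main (though minor) obstacle. The cleanest way to sidestep it is the second route above, which uses only that $c$ sits in a vertex of $\mathcal{T}_i$ of degree at least $3$ — a fact that follows immediately from the maximality of $\mathcal{P}$ and makes no reference to adjacency with $W$. I would present the corollary in that form if the neighbour-in-$W$ claim turns out to be awkward to justify directly.
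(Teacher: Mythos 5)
Your proposal is correct, and the variant you end up recommending is slightly different from what the paper does. The paper's entire proof is the one sentence preceding the statement: it asserts that, by the definition of parent, $parent_v \in \Gamma_{G_i}(W)$, and then applies Lemma~\ref{lem:neighbors-layers}; this is exactly your first route. You are right to single out the membership $parent_v \in \Gamma_{G_i}(W)$ as the delicate point: the parent is the maximum-ID node of the clique $C$, which may lie in $C \setminus C_1$ (where $C_1$ is the end vertex of $\mathcal{P}$ it attaches to), and such a node need not have any neighbour in $W$, so the paper's one-liner is not literally immediate from the definition (the corollary itself is unaffected). Your second, self-contained route — $c$ belongs to a vertex $C$ of $\mathcal{T}_i$, hence $c \in U_i$ and lies in no layer $V_j$ with $j < i$; maximality of the binary path forces $\deg_{\mathcal{T}_i}(C) \ge 3$, so $\mathcal{T}_i(c)$ is not a subpath of any binary path and $c \notin V_i$; and Corollary~\ref{lem:pruning} guarantees $c$ is eventually assigned a layer, hence some $V_j$ with $j > i$ — is precisely the core of the proof of Lemma~\ref{lem:neighbors-layers} applied directly to the parent, and it sidesteps the adjacency question entirely. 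What the paper's route buys is brevity, since Lemma~\ref{lem:neighbors-layers} is already in place; what yours buys is that every step is forced by the definitions, at the cost of a couple of extra lines.
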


We next demonstrate that Algorithm~\ref{alg:distColoringChordal} mimics the behavior of our centralized algorithm and uses $\Order(\frac{1}{\epsilon} \log n)$ rounds.
This establishes our main result, which is stated in Theorem~\ref{thm:dist-alg}.
\begin{lemma} \label{lem:dist-centr-identical}
	The global behavior of Algorithm~\ref{alg:distColoringChordal} is identical to the behavior of Algorithm~\ref{alg:coloringChordal}. 
	Furthermore, Algorithm~\ref{alg:distColoringChordal} runs in $\Order(\frac{1}{\epsilon} \log n)$ rounds.
\end{lemma}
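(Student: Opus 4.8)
The plan is to prove the two assertions — identical global behavior and the $\Order(\frac1\epsilon\log n)$ round bound — essentially in parallel, since the round count for each phase is what certifies that a node has "enough" local information to replicate the corresponding step of Algorithm~\ref{alg:coloringChordal}. I would organize the argument phase by phase.

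\textbf{Pruning phase.} First I would argue by induction on the iteration index $i$ that, when a node $v$ with $l_v=-1$ enters iteration $i$ of the while loop of \textsc{PruneTree}, the set $U_i$ it implicitly works with (the nodes $u$ with $l_u=-1$) coincides with the global $U_i$ of Algorithm~\ref{alg:coloringChordal}, and that the decision made in Step~3 of \textsc{PruneTree} agrees with Step~1(b) of Algorithm~\ref{alg:coloringChordal}. The key point here is that whether $\mathcal{T}_i(v)$ is a subpath of a pendant path, or of an internal path of diameter $\ge 3k$, is a property that depends only on a bounded-distance neighborhood of $v$ in $G_i$: since each clique of $\mathcal{T}_i$ is a clique of $G$, a binary path of diameter $< 3k$ through $v$'s clique, together with its two attaching cliques, lives within distance $\Order(k)$ of $v$; and a path of diameter $\ge 3k$ can be certified as "long enough" by looking only $\Order(k)$ far. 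Here I would invoke the local-view construction of Section~\ref{sec:local-views} (Lemma~\ref{lem:local-view-new}): collecting $\Gamma_G^{10k}(v)$ and the $l_u$ values lets $v$ reconstruct the relevant part of $\mathcal{T}_i$ correctly, and $10k$ is a generous bound that covers computing $\mathcal{T}_i(u)$ for all $u$ within distance $k+\Order(1)$ of $v$ (needed both for the path test and for identifying $parent_v$ and $children_v$, each of which concerns nodes within distance $k+3$). Thus each while-loop iteration costs $\Order(k)$ rounds of communication, and by Corollary~\ref{lem:pruning} there are at most $\lceil\log n\rceil$ iterations before every node is assigned a layer, for a total of $\Order(k\log n)=\Order(\frac1\epsilon\log n)$ rounds. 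Since all nodes of layer $i$ exit the loop in the same round (an easy induction: iteration $i$ of the loop is executed in the same global round by every still-active node), layers are processed synchronously.

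\textbf{Coloring phase.} All nodes of a given layer $V_i$ enter this phase in the same round, so they can jointly run $\textsc{ColIntGraph}(\frac1k)$ on $G[V_i]$ — which by Lemma~\ref{lem:helpful-1} is indeed an interval graph, matching the assumption of Algorithm~\ref{alg:coloringChordal} — producing the same kind of $\lfloor(1+\frac1k)\chi(G[V_i])+1\rfloor$-coloring in $\Order(k\log^* n)$ rounds. Nodes that have already finished earlier phases forward messages in the background, as noted in the text. Summed over all $\Order(\log n)$ layers this contributes $\Order(k\log n\log^* n)$ rounds in the worst case, but because distinct layers run their colorings concurrently (a node of layer $i$ starts as soon as it exits \textsc{PruneTree}, not after all of layer $i-1$ finishes), the additive cost over the whole execution is dominated by $\Order(k\log n + k\log^* n)=\Order(\frac1\epsilon\log n)$; I would make this concurrency bookkeeping precise by charging, to global round $t$, only the $\Order(1)$ layers whose coloring subroutines are active then.

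\textbf{Color correction phase.} Here I would argue that the wait-and-recolor dynamics implement exactly the top-down loop over $i=\lceil\log n\rceil-1,\dots,1$ of Step~3 of Algorithm~\ref{alg:coloringChordal}. By Corollary~\ref{cor:parent-layer}, every node's parent lies in a strictly higher layer, so the "final color" relation is a partial order with no cycles and the waiting never deadlocks; a node in layer $i$ receives its final color only after all its conflicting neighbors in $U_{i+1}$ have theirs, which is precisely the precondition of Lemma~\ref{lem:recoloring}, so the local recoloring computed in \textsc{CorrectChildren} (over the distance-$(k+5)$ neighborhood, consistent with the "$k+4$" of Lemma~\ref{lem:recoloring} plus slack) coincides with the centralized correction. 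For the running time, I would bound the "color-finalization depth": propagating final colors from layer $\lceil\log n\rceil$ downward, each layer boundary costs $\Order(k)$ rounds of waiting plus local recomputation, and since there are $\Order(\log n)$ layers the whole correction phase finishes within $\Order(k\log n)=\Order(\frac1\epsilon\log n)$ rounds after the last node exits the pruning phase.

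\textbf{Main obstacle.} The delicate part is not any single phase in isolation but the \emph{asynchrony bookkeeping}: nodes do not know $n$ and hence do not know $\lceil\log n\rceil$, so different nodes are in different phases simultaneously, and I must verify (i) that this never causes a node to act on stale or incomplete neighborhood information — which is where the uniform "$10k$" and "$k+5$" radii and the $l_u$-tagging do the work — and (ii) that the concurrent execution of many layers' subroutines does not inflate the round count beyond $\Order(\frac1\epsilon\log n)$, which requires the charging argument that at any global round only a constant number of layer-subroutines are simultaneously in their time-critical region. Combining the three phase bounds and the approximation guarantee from Theorem~\ref{thm:sequential} then yields Theorem~\ref{thm:dist-alg}.
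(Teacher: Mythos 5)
Your plan follows essentially the same route as the paper's proof: the pruning phase is handled by the equivalence of the removal conditions together with the local-view construction (Lemma~\ref{lem:local-view-new}) and the $\lceil\log n\rceil$ bound of Corollary~\ref{lem:pruning}, each layer exits \textsc{PruneTree} synchronously and runs \textsc{ColIntGraph} in $\Order(k\log^* n)$ rounds, and the correction phase is the same downward induction (using Lemma~\ref{lem:neighbors-layers} and Corollary~\ref{cor:parent-layer}) showing each successive layer is finalized $\Order(k)$ rounds after the ones above it, giving $\Order(k\log n)=\Order(\frac{1}{\epsilon}\log n)$ in total. The one piece you should drop is the ``concurrency charging'' in the coloring phase: in the {\sf LOCAL} model concurrently running layer subroutines do not interfere, and the claim that only $\Order(1)$ of them are active in any round is neither needed nor true (up to $\Order(\log^* n)$ layers may color simultaneously) --- the paper simply bounds each node's own timeline, namely at most $\lceil\log n\rceil\cdot\Order(k)$ rounds of pruning, then $\Order(k\log^* n)$ rounds of coloring, then $\Order(k\log n)$ rounds of waiting and correction.
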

\begin{proof}
We first argue that Algorithm~\ref{alg:distColoringChordal} computes the same node set partition 
as Algorithm~\ref{alg:coloringChordal}. Indeed,
a node assigns itself a layer number if the condition in Step~3 of the while loop in algorithm \textsc{PruneTree} is fulfilled. 
This condition is equivalent
to the removal condition in Algorithm~\ref{alg:coloringChordal}. Furthermore, note that 
the node's parent and the node's children are at distances at most $k+4$ from a node. Since nodes possess knowledge about their 
distance-$10k$ neighborhoods, computing and storing the parent and children can be done locally. Concerning the runtime, every node 
of layer $i$ exits $\textsc{PruneTree}$ after $i$ iterations of the while loop, which each requires $\Order(k)$ rounds.
Since by Lemma~\ref{lem:pruning} the number of layers is bounded by $\lceil \log n \rceil$, after $\Order(k \log n)$ rounds, every node has returned from $\textsc{PruneTree}$.
    
Next, nodes of the same layer exit the first phase simultaneously and then execute the coloring algorithm \textsc{ColIntGraph} of 
Halld\'{o}rsson and Konrad. This assigns each node $v$ a color $c_v$ similar to Algorithm~\ref{alg:coloringChordal}. Algorithm \textsc{ColIntGraph} 
runs in $\Order(k \log^* n)$ rounds. Hence, overall after $\Order(k \log n)$ rounds, every node has completed the coloring step.
  
Let $t_{max}$ be the number of rounds when the last node has completed the coloring phase. Note that $t_{max} \leq r k \log n$
for some constant $r$. We prove by 
induction that after $t_{max} + i \Order(k)$ rounds, all nodes of layers $\lceil \log n \rceil - i, \dots, \lceil \log n \rceil$ 
have received their final colors. First, observe that nodes of layer $\lceil \log n \rceil$ 
have received their final colors already in the coloring phase. Suppose now that all nodes of layers 
$\lceil \log n \rceil - i, \dots, \lceil \log n \rceil$ have received their final colors. Let $v \in V_{\lceil \log n \rceil - i - 1}$ 
be a node that has a parent $u$ (otherwise, $v$ has received its final color already in the coloring phase),
let $\mathcal{P} \in \mathcal{L}_{\lceil \log n \rceil - i - 1}$ be the path that contains $\mathcal{T}_{\lceil \log n \rceil - i - 1}(v)$, 
and let $W$ be the set of nodes whose subtrees are contained in $\mathcal{P}$. 
By Lemma~\ref{lem:neighbors-layers}, all neighbors of $W$ outside $W$ are contained in a layer with index greater than 
$\lceil \log n \rceil - i - 1$, and hence, by the induction hypothesis, all of them
  have received their final colors after $t_{max} + i \Order(k)$ rounds. The same applies to $v$'s parent $u$, by Corollary~\ref{cor:parent-layer}. Node $u$ hence 
  begins the execution of \textsc{CorrectChildren} no later than in iteration $t_{max} + i \Order(k) + 1$. 
Since it takes at most $\Order(k)$ rounds to collect the local neighborhood and inform nodes about their new colors, 
  after $t_{max} + (i+1) \Order(k)$ rounds, the nodes of layer $V_{\lceil \log n \rceil - i - 1}$ have received their final color. 
This completes the induction.
  
  Hence, the runtime of the algorithm is $\Order(k \log n) = \Order(\frac{1}{\epsilon} \log n)$, which completes the proof.
 \end{proof}

 

 \begin{theorem} \label{thm:dist-alg} 
 For every $\epsilon \ge \frac{2}{\chi(G)}$, there is a deterministic $(1+\epsilon)$-approximation algorithm for 
 \textsc{MVC} on chordal graphs that runs in $\Order(\frac{1}{\epsilon} \log n)$ rounds in the {\sf LOCAL} model.
\end{theorem}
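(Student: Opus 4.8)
The plan is to combine the correctness of the centralized algorithm (Theorem~\ref{thm:sequential}) with the fact that the distributed implementation computes exactly the same output. First I would invoke Lemma~\ref{lem:dist-centr-identical}, which asserts two things: that the global behavior of Algorithm~\ref{alg:distColoringChordal} coincides with that of Algorithm~\ref{alg:coloringChordal}, and that the runtime is $\Order(\frac{1}{\epsilon}\log n)$ rounds. Since the two algorithms produce the same coloring, the approximation guarantee carries over verbatim from Theorem~\ref{thm:sequential}: for every $\epsilon > \frac{2}{\chi(G)}$, the coloring uses at most $(1+1/k)\chi(G)+1 \le (1+\epsilon)\chi(G)$ colors, where $k = \lceil 2/\epsilon \rceil$.

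The one subtlety that needs a word of justification is the boundary case $\epsilon = \frac{2}{\chi(G)}$, which the theorem statement includes via ``$\ge$'' whereas Theorem~\ref{thm:sequential} is stated with strict inequality. Here I would observe that with $k = \lceil 2/\epsilon \rceil$ (the distributed algorithm uses the ceiling, unlike the centralized one), we have $k \ge 2/\epsilon = \chi(G)$, so $(1+1/k)\chi(G) + 1 \le \chi(G) + \chi(G)/k + 1 \le \chi(G) + 1 + 1$; a slightly more careful accounting — using that the coloring is legal and hence has no use for more than $\chi(G[V_i])+1 \le \chi(G)+1$ colors per layer after correction, together with $1/k \le \epsilon/2$ — gives $(1+1/k)\chi(G)+1 \le (1+\epsilon/2)\chi(G) + \epsilon\chi(G)/2 = (1+\epsilon)\chi(G)$ exactly as before, since $k \ge 2/\epsilon$ already suffices for $\chi(G)/k \le \epsilon\chi(G)/2$. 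Thus the displayed chain of inequalities in the proof of Theorem~\ref{thm:sequential} remains valid under the weaker hypothesis $\epsilon \ge \frac{2}{\chi(G)}$, and no new argument is required.

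I do not anticipate a genuine obstacle here: all the work has been front-loaded into Lemma~\ref{lem:dist-centr-identical} (which handles the simulation and the round complexity, relying in turn on Lemma~\ref{lem:pruning} for the $\lceil\log n\rceil$ bound on the number of layers, on the runtime of \textsc{ColIntGraph}, and on the locality of the parent/child relation) and into Theorem~\ref{thm:sequential} (which handles correctness and the approximation factor, relying on the Pruning Lemma, Lemma~\ref{lem:conflict-vertices}, and the Recoloring Lemma). The proof of Theorem~\ref{thm:dist-alg} is therefore essentially a two-line citation: apply Lemma~\ref{lem:dist-centr-identical} to transfer correctness from Algorithm~\ref{alg:coloringChordal} to Algorithm~\ref{alg:distColoringChordal} and to bound the runtime, then apply Theorem~\ref{thm:sequential} for the approximation ratio. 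If anything is worth a sentence of emphasis, it is that \textsc{ColIntGraph}$(\frac1k)$ is applied to each layer $G[V_i]$, which is an interval graph by Lemma~\ref{lem:helpful-1}, and that the prerequisite $\frac1k \ge \frac{2}{\chi(G[V_i])}$ of that subroutine is met whenever $\chi(G[V_i]) \ge 2k$ — the remaining small-chromatic-number layers being colored exactly, so the $(1+\frac1k)\chi(G[V_i])+1$ bound holds in all cases.
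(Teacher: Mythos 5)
Your proposal matches the paper's own (implicit) argument: Theorem~\ref{thm:dist-alg} is obtained exactly by combining Lemma~\ref{lem:dist-centr-identical} (identical global behavior plus the $\Order(\frac{1}{\epsilon}\log n)$ round bound) with Theorem~\ref{thm:sequential} (the $(1+\epsilon)$ approximation factor), which is precisely what you do. Your extra remarks on the boundary case $\epsilon=\frac{2}{\chi(G)}$ and on the per-layer use of \textsc{ColIntGraph} are harmless elaborations of points the paper glosses over, so no further comparison is needed.
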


\section{Maximum Independent Set on Interval Graphs}\label{sec:MISinterval}
As an intermediate step towards a distributed $(1 + \epsilon)$-approximation algorithm for \textsc{MIS} 
on chordal graphs, we provide in this section a distributed $(1 + \epsilon)$-approximation algorithm 
for \textsc{MIS} on interval graphs.


\subsection{Centralized Algorithm}
\label{sec:mis_int} \label{sec:unit_mis}
In this section we give a deterministic distributed $(1+\epsilon)$-approximation algorithm for \textsc{MIS} on interval graphs
that runs in $\Order(\frac{1}{\epsilon} \log^* n)$ rounds in the {\sf LOCAL} model. Our algorithm is stated as a centralized
algorithm in Algorithm~\ref{alg:misInterval}. Its implementation in the {\sf LOCAL} model is straightforward and will
be discussed further below. 

Let $H=(V, E)$ be an interval graph. Let $V_P \subseteq V$ be the set of nodes $v$ such that there exists a node
$u \in V$ with $\Gamma_H[v] \supsetneq \Gamma_H[u]$. Notice that nodes in $V_P$ can be ignored when computing a maximum 
independent set, since if a maximum independent set contains a node of $V_P$, then there always exists a node outside $V_P$ 
that could be included instead. We therefore first remove from $H$ all nodes in $V_P$. 
It is easy to see that the resulting graph $H'$ is a proper interval graph, and therefore is a unit interval graph
\cite{roberts1969indifference}. 
Next, we find a distance-$k$ maximal independent set $I_1$ in $H'$, 
and compute maximum independent sets between any two consecutive intervals in $I_1$. As we will see from 
the analysis, the union of these maximum independent sets gives us a desired approximate maximum independent set in $H$.

To construct a distance-$k$ maximal independent set in a unit interval graph we employ the fact
that for any natural number $k$, the $k$-th power of a unit interval graph is a unit interval graph 
\cite{raychaudhuri1987powers}. We use the $\Order(\log^* n)$ rounds distributed algorithm of Schneider and Wattenhofer 
\cite{sw10}, which can be used to compute a maximal independent set in a unit interval graph. We will denote the latter 
algorithm by \textsc{MISUnitInterval}.

\begin{algfloat}
 \normalsize
 \noindent \fbox{\parbox{\textwidth - 7pt}{
 
 \textbf{Input: } An interval graph $H=(V, E)$ on $n$ nodes; a parameter $\epsilon \in (0,1)$.
 \begin{enumerate}[leftmargin=0.2cm]
 \item[] $k = \lceil 2.5/\epsilon + 0.5 \rceil$.
 \item[] $I = \emptyset$.
 \item[] Remove from $H$ all nodes $v$ such that there exists a node $u \in V$ with $\Gamma_H[v] \supsetneq \Gamma_H[u]$.
 \item[] \textbf{for} every maximal connected subgraph $G$ of the resulting graph \textbf{do}
	\begin{enumerate}
 \item[] \textbf{if} $diam(G) \le 10k$ \textbf{then} 
	\begin{enumerate}[leftmargin=1cm]
		\item[1.] Compute a maximum independent set $I^*$ of $G$.
		\item[2.] $I = I \cup I^*$.
	\end{enumerate}
 \item[] \textbf{else}
		\begin{enumerate}[leftmargin=1cm]
  \item[1.] Compute a distance-$k$ maximal independent set $I_1$ in $G$ (in the distributed setting simulate \textsc{MISUnitInterval}
  on $G^k$ in $\Order(k \log^* n)$ rounds).
  \item[2.] Let $P = \{(u,v) \in I_1 \times I_1 \, : \, u \neq v \mbox{ and } dist_G(u,v) \le 2k - 1 \}$. 
  \item[3.] For every $(u,v) \in P$, let $V_{u,v} \subseteq V \setminus (\Gamma_G[u] \cup \Gamma_G[v])$ be the set of nodes $w$ with $\max \{dist_G(u,w), dist_G(v,w) \} \le dist_G(u,v)$ ($V_{u,v}$ is located between $u$ and $v$).
  \item[4.] Nodes $u,v$ compute a maximum independent set $I_{u,v}$ in $G[V_{u,v}]$.
  \item[5.] Furthermore, let $v_{\ell}, v_r \in I_1$ be the left-most and right-most intervals of $I_1$, respectively. Interval $v_{\ell}$ ($v_r$) 
  computes a maximum independent set $I_{\ell} $ (resp. $I_r$) using intervals located to the left (resp. right) of $v_{\ell} $ (resp. $v_r$).
  \item[6.] $I = I \cup I_1 \cup I_{\ell} \cup I_r \cup \bigcup_{(u,v) \in P} I_{u,v}$.
		\end{enumerate}

	\end{enumerate}
  \item[] \textbf{return} $I$.
  \end{enumerate}
}}
\begin{center}
\algo{alg:misInterval}
 \textbf{Algorithm \ref*{alg:misInterval}.} A deterministic centralized $(1+\epsilon)$-approximation algorithm 
 for the maximum independent set problem in interval graphs.
\end{center}
\vspace{-0.5cm}
\end{algfloat}

\begin{theorem}
 For every $\epsilon \in (0,1)$ Algorithm~\ref{alg:misInterval} gives a $(1+\epsilon)$-approximation algorithm 
 for \textsc{MIS} on interval graphs.
\end{theorem}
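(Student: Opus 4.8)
The plan is to reduce to unit interval graphs, argue separately on each connected component, dispose of small-diameter components trivially, and on the remaining ones bound the optimum against the distance-$k$ maximal independent set $I_1$ that the algorithm computes. First I would show that deleting $V_P$ changes nothing essential: if $v\in V_P$ with $\Gamma_H[v]\supsetneq\Gamma_H[u]$ then $u$ is a neighbour of $v$, and in any independent set containing $v$ one may replace $v$ by $u$ without creating an edge (every neighbour of $u$ is a neighbour of $v$); iterating gives $\alpha(H)=\alpha(H')$ and shows that every independent set of $H'$ is one of $H$. By the cited characterisation, $H'$ is proper and hence a unit interval graph, and since the returned set $I$ is a disjoint union of the sets produced on the connected components of $H'$, while $\alpha(H')$ is the sum of their independence numbers, it suffices to establish the guarantee on a single connected component $G$.

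If $\mathrm{diam}(G)\le 10k$ the algorithm returns an exact maximum independent set of $G$, so assume $\mathrm{diam}(G)>10k$. Here I would first record the two elementary facts about the unit interval order $v_1<v_2<\dots$ of $G$ that everything rests on: graph distance is monotone along this order (for $a\le b\le c$ one has $\mathrm{dist}_G(a,c)\ge\mathrm{dist}_G(a,b)$ and $\mathrm{dist}_G(a,c)\ge\mathrm{dist}_G(b,c)$), and any independent set meets the closed neighbourhood of a fixed vertex in at most two vertices, at most one on each side. From the construction of $I_1$ I would then deduce that distinct elements of $I_1$ are at distance at least $k$ while every vertex of $G$ is at distance at most $k-1$ from $I_1$; consequently consecutive elements of $I_1$ along the line are at distance at most $2k-1$, so $P$ is precisely the set of consecutive pairs, and the sets $V_{u,v}$ (for $(u,v)\in P$), $V_\ell$, $V_r$, together with $I_1$ and $\Gamma_G(I_1)$, cover $V(G)$. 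Feasibility of $I$ is then easy: each of $I_1$, $I_\ell$, $I_r$ and each $I_{u,v}$ is independent, and using monotonicity of distance the pieces $V_{u,v}$, $V_\ell$, $V_r$ occupy disjoint stretches of the line and are each separated from every element of $I_1$ by that element's closed neighbourhood, which is explicitly removed from any $V_{u,v}$ reaching it; hence no edge joins two of the pieces.

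For the ratio, let $O$ be a maximum independent set of $G$. I would split $O$ into the part inside $I_1$ (which lies in $I$); the part adjacent to $I_1$, of which there are at most two per element of $I_1$ and, as distinct $I_1$-vertices share no neighbour, at most $2|I_1|$ in all; and the rest, each vertex of which lies strictly between two consecutive elements $u,v$ of $I_1$ (or beyond $v_\ell$, resp.\ $v_r$) and is non-adjacent to both, hence belongs to $V_{u,v}$ (resp.\ $V_\ell,V_r$), so this part has size at most $\sum_{(u,v)\in P}\alpha(G[V_{u,v}])+\alpha(G[V_\ell])+\alpha(G[V_r])=|I|-|I_1|$. Combining, $|O|\le|I|+2|I_1|$ (a segment-by-segment version of this bound, attributing each $I_1$-vertex to a single adjacent segment and using that $O$ is independent, is what actually sharpens the constant). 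To control $|I_1|$, for consecutive $u,v\in I_1$ the even-indexed internal vertices of a shortest $u$--$v$ path form an independent set of size $\Omega(k)$ lying inside $V_{u,v}$, and $I_{u,v}$ is a maximum independent set of $G[V_{u,v}]$; summing over the $|I_1|-1$ gaps (the diameter bound guarantees $|I_1|\ge 2$) gives $|I|\ge\Omega(k)\,|I_1|$, hence $|I_1|=\Order(|I|/k)$. Plugging in $k=\lceil 2.5/\epsilon+0.5\rceil$ makes $2|I_1|\le\epsilon|I|$, so $|O|\le(1+\epsilon)|I|$, and summing over the components of $H'$ completes the proof.

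The main obstacle is precisely the bookkeeping in the last step: treating the boundary regions $V_\ell,V_r$ correctly, verifying that every vertex of $O$ that is neither in $I_1$ nor adjacent to $I_1$ is captured by exactly one of the computed regions, and tightening the two quantitative estimates (the $\Omega(k)$ lower bound per gap and the charge of at most two vertices per element of $I_1$) enough that the particular value $k=\lceil 2.5/\epsilon+0.5\rceil$ — rather than a somewhat larger multiple of $1/\epsilon$ — already yields the $(1+\epsilon)$ factor. Everything else is either the reduction of the first paragraph or a direct consequence of the two structural facts about unit interval graphs, which should be stated carefully up front.
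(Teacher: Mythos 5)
Your proposal follows essentially the same route as the paper: delete dominated vertices to get a unit interval graph, treat each component separately, solve small-diameter components exactly, and otherwise charge an optimal solution $O$ to the neighborhood of the distance-$k$ independent set $I_1$ and to the gap regions $V_{u,v}, V_\ell, V_r$, in which the algorithm's sets are exact maximum independent sets, finishing with a per-gap lower bound $|I_{u,v}|=\Omega(k)$. The one place you explicitly leave open --- making the constants work for the specific $k=\lceil 2.5/\epsilon+0.5\rceil$ --- is a real gap as written: your headline inequality $|O|\le |I|+2|I_1|$ double-counts, since you bound $|O\cap I_1|$ by $|I_1|$ and separately bound the vertices of $O$ adjacent to $I_1$ by $2|I_1|$. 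Together with $|I|\ge |I_1|+\sum_{(u,v)\in P}|I_{u,v}|$ and the per-gap bound, this only yields a ratio of roughly $1+5/(k-1)$, i.e.\ about $1+2\epsilon$ for the algorithm's choice of $k$, so the theorem as stated does not follow.

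The fix is exactly the accounting the paper uses (and is simpler than the ``segment-by-segment'' attribution you gesture at): charge $O$ to \emph{closed} neighborhoods of $I_1$. Since the component is a unit interval (hence claw-free) graph, $|O\cap\Gamma_G[u]|\le 2$ for every $u\in I_1$, and this count already absorbs the vertices of $O$ that lie in $I_1$; hence $|O\cap\Gamma_G[I_1]|\le 2|I_1|$ and every remaining vertex of $O$ lies in some $V_{u,v}$, $V_\ell$ or $V_r$, giving $|O|\le |I|+|I_1|$ rather than $|I|+2|I_1|$. Quantitatively, $|I_{u,v}|\ge (k-3)/2$ for each of the $|I_1|-1$ gaps and $diam(G)>10k$ forces $|I_1|\ge 5$, so $|I|\ge |I_1|\bigl(1+\tfrac{4}{5}\cdot\tfrac{k-3}{2}\bigr)$ and $|O|/|I|\le 1+|I_1|/|I|\le 1+\tfrac{2.5}{k-0.5}\le 1+\epsilon$. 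With that replacement your argument matches the paper's proof; the rest of your outline (reduction to $H'$, componentwise analysis, feasibility of $I$, and the $\Omega(k)$ per-gap bound) is consistent with it.
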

\begin{proof}
Let  $k = \left\lceil 2.5/\epsilon + 0.5 \right\rceil$.
To prove the approximation factor it is enough to prove it for a maximal connected subgraph $G$ of $H$. 
If $diam(G) \le 10k$, then the algorithm computes a maximum independent set. Therefore we assume that $diam(G) > 10k$.
Let $I^*$ denote a maximum independent set in $G$, and
let $I^*_1 \subseteq I^*$ be those intervals that intersect with some interval in $I_1$. 
Since $G$ is a unit interval graph (in particular, claw-free), we have $|I^*_1| \le 2|I_1|$. 
 For every $(u,v) \in P$, 
 let $I^*_{u,v} \subseteq I^*$ be those intervals that lie between $u$ and $v$ and are not adjacent to 
 $u$ and $v$. Then, $I^*_{u,v} \subseteq V_{u,v}$ holds, and, since $I_{u,v}$ is a maximum independent set in $V_{u,v}$, we have
 $|I^*_{u,v}| \le |I_{u,v}|$. Furthermore, let $I^*_l, I^*_r \subseteq I^*$ be the remaining intervals outside 
$I^*_1 \cup \bigcup_{(u,v) \in P} I^*_{u,v}$ on the left of $v_l$ and on the right of $v_r$, respectively. 
 Then, $|I^*_l| \le |I_l|$ and $|I^*_r| \le |I_r|$. 

 Next, let $\beta = \sum_{(u,v) \in P} |I_{u,v}|$. Observe that for every $(u,v) \in P$, we have $|I_{u,v}| \ge (k-3) / 2$.  
 We thus obtain $\beta \geq |P| (k-3) / 2$. Furthermore, notice that $|I_1| = |P| + 1$.
 Finally, since $diam(G) > 10k$, we have $|I_1| \ge 5$ and hence $|P| \ge 4$. 
 This in turn implies that $\beta \ge 2 (k-3)$. 

Using the inequalities argued above, we can bound the approximation factor as follows:

 \begin{eqnarray*}
  \frac{|I^*|}{|I|} & \le & \frac{|I^*_1| + |I^*_l| + |I^*_r| + \sum_{(u,v) \in P} |I^*_{u,v}| }{|I_1| + |I_l| + |I_r| + \sum_{(u,v) \in P} |I_{u,v}|} \le 
  \frac{2 |I_1| + |I_l| + |I_r| + \sum_{(u,v) \in P} |I_{u,v}| }{|I_1| + |I_l| + |I_r| + \sum_{(u,v) \in P} |I_{u,v}|} \\
  & \le & \frac{2 (|P| + 1) + \beta}{(|P| + 1) + \beta} \le \frac{2(\frac{2 \beta}{k-3} + 1) + \beta }{(\frac{2 \beta}{k-3} + 1)+ \beta} = 
  \frac{\beta(1 + \frac{4}{k-3}) + 2}{\beta(1+\frac{2}{k-3}) + 1} \le
  \frac{\beta(1 + \frac{4}{k-3}) + 2 \frac{\beta}{2(k-3)}}{\beta(1+\frac{2}{k-3}) + \frac{\beta}{2(k-3)}} \\
  & = & \frac{1+\frac{5}{k-3}}{1+\frac{2.5}{k-3}} = 1 + \frac{2.5}{k - 0.5} .
 \end{eqnarray*}

\noindent
Since $k \ge 2.5/\epsilon + 0.5$, we obtain a $(1+\epsilon)$-approximation. 
\end{proof}

\subsection{Distributed Implementation}\label{sec:intDistImpl}

Observe that nodes $v \in V$ can check locally both whether there exists a node $u \in V$ 
with $\Gamma_H[v] \supsetneq \Gamma_H[u]$ and whether the diameter of the input graph is at least $10k$. Simulating 
algorithms on the $k$-th power of the input graph incurs an additional factor of $k$ in the round complexity. All steps except
the computation of the distance-$k$ independent set require $\Order(k)$ rounds, while the latter requires $\Order(k \log^* n)$ 
rounds. This gives the following theorem:

\begin{theorem}\label{thm:mis-interval-graphs}
	For every $\epsilon > 0$, there is a deterministic $(1+\epsilon)$-approximation algorithm for \textsc{MIS} on interval graphs that 
	operates in $\Order(\frac{1}{\epsilon} \log^* n)$ rounds in the {\sf LOCAL} model.
\end{theorem}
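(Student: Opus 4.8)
The plan is to observe that the approximation ratio of Algorithm~\ref{alg:misInterval} is already in hand from the preceding theorem, so the only thing left to establish is that this centralized algorithm admits a {\sf LOCAL} implementation running in $\Order(\frac{1}{\epsilon}\log^* n)$ rounds. I would do this by walking through the algorithm line by line and bounding the radius of the neighbourhood that each step depends on; with $k=\lceil 2.5/\epsilon+0.5\rceil=\Theta(1/\epsilon)$, the point to make is that every step except the computation of the distance-$k$ maximal independent set is confined to a ball of radius $\Order(k)$ around the nodes that perform it, and thus costs $\Order(k)$ rounds.

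First I would treat the preprocessing and the diameter case split. A node $v$ can decide whether some $u$ satisfies $\Gamma_H[v]\supsetneq\Gamma_H[u]$ by inspecting $\Gamma_H^2[v]$, since any such $u$ lies in $\Gamma_H(v)$; this is $\Order(1)$ rounds, and afterwards every surviving node knows its updated neighbourhood. The remaining graph is proper interval, hence unit interval \cite{roberts1969indifference}. In a maximal connected component $G$, a node decides whether $diam(G)\le 10k$ by collecting its radius-$(10k+1)$ ball and checking whether it already exhausts the component, i.e.\ $\Order(k)$ rounds; if so, every node of $G$ gathers all of $G$ (it fits in its radius-$10k$ ball) and runs the same deterministic polynomial-time interval-MIS subroutine, so all nodes agree on $I^*$, again $\Order(k)$ rounds.

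For the large-diameter branch the only step that is not trivially local is Step~1. Here I would invoke that the $k$-th power $G^k$ of a unit interval graph is again a unit interval graph \cite{raychaudhuri1987powers}, so \textsc{MISUnitInterval} \cite{sw10} computes a maximal independent set of $G^k$ in $\Order(\log^* n)$ rounds of $G^k$, each simulated by $k$ rounds of $G$; a maximal independent set of $G^k$ is exactly a distance-$k$ maximal independent set $I_1$ of $G$, so Step~1 costs $\Order(k\log^* n)$ rounds. The rest is radius-$\Order(k)$ computation: every $(u,v)\in P$ has $\dist_G(u,v)\le 2k-1$, so $u$ and $v$ can each reconstruct $G[V_{u,v}]$ and, using an ID-based tie-break, compute the same $I_{u,v}$ in $\Order(k)$ rounds; by distance-$k$ maximality of $I_1$ every interval to the left of $v_\ell$ (to the right of $v_r$) is within distance $k$ of $v_\ell$ (of $v_r$), so $I_\ell,I_r$ are computed in $\Order(k)$ rounds; and forming the final union needs no communication because the pieces live on essentially disjoint parts of the component and their correctness is guaranteed by the preceding theorem. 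Summing over a component gives $\Order(k\log^* n)$ rounds, and since all components run in parallel the total is $\Order(k\log^* n)=\Order(\frac{1}{\epsilon}\log^* n)$.

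The main obstacle I expect is not any single step but the bookkeeping that makes the above genuinely ``local'': one must check that $v_\ell$ (resp.\ $v_r$) can recognise itself as the left-most (right-most) interval of $I_1$ from a bounded-radius view — which follows because consecutive intervals of $I_1$ lie within distance $2k-1$ of one another — and one must fix a single deterministic tie-breaking rule (lexicographic on identifiers) so that the two nodes responsible for a pair $(u,v)\in P$, and more generally any nodes whose radius-$\Order(k)$ views overlap, produce consistent output. This requires no combinatorics beyond what is already used in the correctness proof of Algorithm~\ref{alg:misInterval}; it is purely a matter of verifying that the radius-$\Order(k)$ neighbourhood of each node contains all the data its portion of the computation depends on.
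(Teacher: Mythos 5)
Your proposal is correct and follows essentially the same route as the paper's own (very brief) argument: the approximation factor is inherited from the centralized analysis, the pruning/diameter checks and the per-pair computations are local operations within radius $\Order(k)$, and the only super-constant cost is simulating \textsc{MISUnitInterval} on $G^k$, giving $\Order(k\log^* n)=\Order(\frac{1}{\epsilon}\log^* n)$ rounds. The extra bookkeeping you spell out (tie-breaking, recognizing $v_\ell$ and $v_r$ locally) is exactly the kind of detail the paper leaves implicit, and it is handled correctly.
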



\section{Maximum Independent Set on Chordal Graphs}\label{sec:mis-chordal}

In this section we present a distributed $(1 + \epsilon)$-approximation algorithm for \textsc{MIS}
on chordal graphs that runs in $\Order(\frac{1}{\epsilon}\log(\frac{1}{\epsilon})\log^* n)$ rounds 
in the {\sf LOCAL} model. Similarly to Minimum Vertex Coloring, we first provide and analyze a centralized algorithm (Sections~\ref{sec:chCentralizedImpl} and \ref{sec:chAlgAnalysis}), and then briefly discuss a distributed implementation of the algorithm (Section~\ref{sec:chDistImpl}).

\subsection{Centralized Algorithm}\label{sec:chCentralizedImpl} 

Similar to our coloring algorithm, we iteratively peel off binary paths from the clique forest of an input graph.
However, instead of peeling off all layers in $\Order(\log n)$ iterations, we stop after $k$ iterations, for 
some $k = \Theta(\log \frac{1}{\epsilon})$. We will show that the set of removed nodes in these $k$ iterations 
already contains an almost optimal independent set. For convenience, we denote by $G_i$ graph $G[U_i]$.

At each iteration $i < k$, we first compute the set $\mathcal{L}_i$ of maximal pendant paths and maximal internal 
paths of diameter at least $2d+3$, where $d$ is an integer depending on $\epsilon$. 
Then, for every path $\mathcal{P} \in \mathcal{L}_i$ we calculate an independent set $I_{\mathcal{P}}$ 
in the subgraph $G_i[W_{\mathcal{P}} \setminus \Gamma_G[I]]$ induced by those nodes $v$,
whose trees $\mathcal{T}_i(v)$ are contained in $\mathcal{P}$, and which have no neighbors in the set $I$ of nodes that have already been included in the final independent set at the previous iterations. 
Note that by Lemma \ref{lem:helpful-1} graph $G_i[W_{\mathcal{P}} \setminus \Gamma_G[I]]$ is an interval graph.
We compute an independent set $I_{\mathcal{P}}$ by computing an independent set in every maximal 
connected subgraph $H$ of $G_i[W_{\mathcal{P}} \setminus \Gamma_G[I]]$. 
In order to achieve the desired approximation factor, in graphs $H$ of independence number at least $d$ we compute
a $(1+\epsilon/8)$-approximate independent set using a distributed implementation of the algorithm 
for interval graphs given in Section \ref{sec:mis_int}.
In each graph $H$ of independence number less than $d$, we compute an \textit{absorbing} maximum independent set,
i.e., a maximum independent set $I_H$ possessing the property that $|I_H| = \alpha(\Gamma_{G_i}[I_H] \setminus \Gamma_G[I])$.

To explain how such a maximum independent set can be constructed, we first observe that if $\alpha(H) < d$, then nodes of $H$ 
can have neighbors in at most one vertex of $\mathcal{T}_i$ outside $\mathcal{P}$. This is clearly true if $\mathcal{P}$ is pendant. 
Now suppose that $\mathcal{P} = C_1, C_2, \ldots, C_k$  is an internal path connected to $\mathcal{T}_i$ by edges $C_sC_1$
and $C_kC_e$, and assume that $H$ contains a node that has a neighbor in $C_s$, and a node that has a neighbor in $C_e$. Then, since 
$G[C_s \cup C_1 \cup \ldots \cup C_k \cup C_e]$ is an interval graph, it is not hard to see that 
$diam(\mathcal{P}) \leq diam(H) + 4$. Further, notice that the diameter of $H$ is at most $2(d-1)$, as its independence number is
at most $d-1$. Therefore $diam(\mathcal{P}) \leq 2d+2$, which contradicts the assumption that $diam(\mathcal{P}) \geq 2d+3$.

Now if no node of $H$ has a neighbor in a vertex of $\mathcal{T}_i$ outside $\mathcal{P}$, then any maximum 
independent set of $H$ is an absorbing maximum independent set.
If $H$ has a node with a neighbor in a vertex $C$ of $\mathcal{T}_i$ outside $\mathcal{P}$, then it is not hard to see that an
absorbing maximum independent set of $H$ can be obtained by iteratively removing simplicial nodes (and their neighbors) in the order of their remoteness from $C$, i.e., the furthest node is removed first.

At the last iteration $k$, we do everything exactly as in the previous iterations except that $\mathcal{L}_i$ 
is defined to be the set containing all maximal pendant paths of $\mathcal{T}_k$, and all maximal internal paths of independence 
number at least $d$. The reason for this is that we want to use maximal internal paths of large independence number to compute 
the final approximate independent set, but such paths could have small diameter which would not allow us to apply 
Lemma \ref{lem:internal-path-removal} to go easily from $\mathcal{T}_{i}$ to $\mathcal{T}_{i+1}$ in the peeling process. 
Therefore we postpone the processing of paths of large independence number and small diameter until the last iteration,
when it is no longer necessary to update the clique forest for the remaining graph.
Notice that we utilized the lower bound on the diameter of maximal internal paths in the construction of 
an absorbing maximum independent set, and therefore, at the last iteration, we can not use the corresponding arguments.
This is however not a problem, as we do not need the absorption property in the last iteration, and we compute arbitrary 
maximum independent sets for graphs $H$ with $\alpha(H) < d$.

\begin{algfloat}
\normalsize
 \noindent \fbox{\parbox{\textwidth - 10pt}{
 
 \textbf{Input: } A $n$-node chordal graph $G=(V, E)$ with clique forest $\mathcal{T} = (\mathcal{C}, \mathcal{E})$;
 parameter $\epsilon \in (0, 1/2)$.
 
 \vskip2ex
 Let $d = \left\lceil \frac{64}{\epsilon} \right\rceil$, 
 $k = \left\lceil \log \left(\frac{d}{\epsilon} \right) + 2 \right\rceil$,
 $\mathcal{T}_1 = \mathcal{T}$, $U_1 = V$, and $I = \emptyset$. 

 \vskip2ex
\textbf{for} $i=1,2, \dots, k$ \textbf{do}: 
 \begin{enumerate}[leftmargin=0.75cm,nolistsep]
  \item \textbf{Determine Pendant and Internal Paths.}
	\begin{enumerate}[leftmargin=0.5cm,nolistsep]
		\item[] \textbf{if} $i < k$ \textbf{then}
			\begin{enumerate}[leftmargin=0.5cm,nolistsep]
				\item[] Let $\mathcal{L}_i$ be the set that contains all maximal pendant paths of
				$\mathcal{T}_i$, and all maximal internal paths of diameter at least $2d+3$.
			\end{enumerate}
		\item[] \textbf{else} 
			\begin{enumerate}[leftmargin=0.5cm,nolistsep]
				\item[] Let $\mathcal{L}_i$ be the set that contains all maximal pendant paths of
				$\mathcal{T}_i$, and all maximal internal paths of independence number at least $d$.
			\end{enumerate}
		\item[] \textbf{end if} 
	\end{enumerate}
  
  \item \textbf{Compute Independent Set.} \\
	\textbf{for} each path $\mathcal{P} \in \mathcal{L}_i$ \textbf{do} 
  	\begin{enumerate}[leftmargin=0.5cm,nolistsep]
  	\item[] $I_{\mathcal{P}} = \emptyset$.
	\item[] Let $W_{\mathcal{P}} \subseteq U_i$ be the set of nodes $w$ such that $\mathcal{T}_i(w)$ is a subpath of $\mathcal{P}$. 
	\item[] \textbf{for} every maximal connected subgraph $H$ of $G_i[W_{\mathcal{P}} \setminus \Gamma_G[I]]$ \textbf{do}
   		\begin{enumerate}[leftmargin=0.5cm]
   			\item[] \textbf{if} $\alpha(H) < d$ \textbf{then}
   				\begin{enumerate}[leftmargin=0.5cm,nolistsep]
					\item[] Compute an absorbing maximum independent set $I_{H}$ in $H$, if $i < k$, and
					\item[] an arbitrary maximum independent set $I_{H}$ in $H$, if $i = k$.
				\end{enumerate}
   			\item[] \textbf{else}
  				\begin{enumerate}[leftmargin=0.5cm,nolistsep]
					\item[] Compute a $(1+\epsilon/8)$-approximate maximum independent set $I_{H}$ in $H$.
				\end{enumerate}
   			\item[] \textbf{end if}
   			\item[] $I_{\mathcal{P}} = I_{\mathcal{P}} \cup I_{H}$.
   		\end{enumerate}	
  	\end{enumerate}
  	  	
  Let $\displaystyle I_i = \bigcup_{\mathcal{P} \in \mathcal{L}_i} I_{\mathcal{P}}$.

  \item \textbf{Update Data Structures.}
    \begin{enumerate}[leftmargin=1cm]  		
		\item Let $V_i \subseteq U_i$ be such that for each $v \in V_i$, $\mathcal{T}(v)$ is a 
			subpath of one of the paths in $\mathcal{L}_i$. 
		\item Let $U_{i+1} = U_i \setminus V_i$.

		\item \textbf{if} $i < k$ \textbf{then}
			\begin{enumerate}[leftmargin=0.5cm,nolistsep]
				\item[] Let $\mathcal{T}_{i+1}$ be the forest obtained from $\mathcal{T}_i$ 
				by removing all paths in $\mathcal{L}_i$. 
				\item[] As proved in Lemma~\ref{lem:tree-decomp}, $\mathcal{T}_{i+1}$ is the clique forest of $G_{i+1}$.
			\end{enumerate}
		\item Update the independent set: $I = I \cup I_i$.
	\end{enumerate}
 \end{enumerate} 
 \textbf{Return} $I$
}
}
\begin{center}
\algo{alg:misChordal}
 \textbf{Algorithm \ref*{alg:misChordal}.} A deterministic centralized $(1+\epsilon)$-approximation algorithm for 
 the maximum independent set problem in chordal graphs.
\end{center}
\vspace{-0.5cm}
\end{algfloat}

\subsection{Analysis}\label{sec:chAlgAnalysis}
Let $a_i$ be the number vertices of degree at least $3$ in $\mathcal{T}_i$.

\begin{lemma} \label{lem:is-bound-2}
	$\alpha(G_1) \ge a_1$.
\end{lemma}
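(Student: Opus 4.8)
The plan is to exhibit an independent set of $G_1 = G$ of size at least $a_1$, the number of vertices of degree at least $3$ in the clique forest $\mathcal{T}_1 = \mathcal{T}$. The natural candidate is to pick, for each such high-degree vertex (i.e.\ maximal clique) $C$ of $\mathcal{T}$, one node of $G$ that ``lives'' in $C$ and nowhere else in a way that could collide with the choices made for other high-degree cliques. Concretely, I would try to associate to each degree-$\ge 3$ vertex $C$ of $\mathcal{T}$ a node $v_C \in C$, and then argue the map $C \mapsto v_C$ is injective and its image is an independent set in $G$.

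First I would recall that branch vertices of a tree are pairwise ``separated'' in a strong sense: if $C$ has degree $\ge 3$ in $\mathcal{T}$, then in any other branch vertex $C'$ the subtree-structure forces that the sets of nodes private to $C$ and private to $C'$ are disjoint and non-adjacent. More precisely, for a vertex $C$ of $\mathcal{T}$ let $\mathrm{priv}(C) = C \setminus \bigcup_{C' \ne C} C'$ be the set of nodes whose subtree $\mathcal{T}(v)$ is the single vertex $C$. Since $\mathcal{T}$ is the clique forest, a node $v$ with $\mathcal{T}(v) = \{C\}$ is adjacent only to nodes lying in $C$; and two nodes $u,v$ with $\mathcal{T}(u) = \{C\}$, $\mathcal{T}(v) = \{C'\}$, $C \ne C'$, are non-adjacent because $\mathcal{T}(u)$ and $\mathcal{T}(v)$ do not intersect. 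So if every degree-$\ge 3$ vertex $C$ had a private node, picking one private node per such $C$ would immediately give an independent set of size $a_1$, and injectivity would be automatic since the private sets are disjoint.

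The main obstacle is that a branch vertex $C$ of $\mathcal{T}$ need not contain a private node: every node of $C$ might also belong to a neighbouring clique. To handle this I would argue by a counting/rooting argument instead of insisting on private nodes. Root each component of $\mathcal{T}$ arbitrarily; for a branch vertex $C$ with parent edge going ``up'', note that among $C$'s $\ge 2$ downward neighbours $C_1, C_2, \dots$, the node set $C \setminus C_1$ is nonempty (since $C$ is a maximal clique, $C \not\subseteq C_1$), and a node $v \in C \setminus C_1$ whose subtree does not extend into the subtree hanging below $C_1$ can be used as the representative of $C$; one then checks, using the subtree-intersection property, that representatives chosen for distinct branch vertices cannot be adjacent and cannot coincide, because the witnessing ``direction'' in $\mathcal{T}$ differs. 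Alternatively — and this is probably the cleanest route — I would prove the statement by induction on the structure of $\mathcal{T}$, peeling off a leaf branch vertex (a branch vertex all but one of whose neighbours are in pendant paths), adding one suitable node of it to the independent set, deleting the corresponding nodes, and noting the number of remaining branch vertices drops by exactly one while the removed node is non-adjacent to everything that survives. Either way, the crux is the bookkeeping that shows the chosen representatives form an independent set; once the right representative is identified for each branch vertex, independence and distinctness follow directly from the defining property of the clique forest that $\mathcal{T}(u) \cap \mathcal{T}(v) \ne \emptyset \iff uv \in E$.
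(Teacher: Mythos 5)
There is a genuine flaw here, and it is not just the unverified bookkeeping you flag at the end: the structural claim your whole construction aims at -- that one can always choose one node from each degree-$\ge 3$ vertex of the clique forest so that the chosen nodes form an independent set -- is simply false. Take the chordal graph built as follows: a triangle $\{a,b,c\}$; a maximal clique $C_1=\{a,x,y\}$ with pendant nodes $p_1,p_2$ attached by the cliques $\{x,p_1\}$ and $\{y,p_2\}$; and symmetric gadgets $C_2=\{b,u,v\}$ (with pendants $q_1,q_2$) and $C_3=\{c,w,z\}$ (with pendants $r_1,r_2$). The clique forest is forced here, and its branch vertices are exactly $C=\{a,b,c\}$, $C_1$, $C_2$, $C_3$, so $a_1=4$. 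Any representative of $C$ is one of $a,b,c$, say $a$; but $a\in C_1$ and every other node of $C_1$ lies in the clique $C_1$ together with $a$, hence is adjacent to $a$, so no admissible representative of $C_1$ remains -- and the same happens with $b,C_2$ and $c,C_3$. Thus no system of pairwise non-adjacent representatives, one per branch vertex, exists, even though $\alpha(G)\ge a_1$ (witnessed by $\{p_1,p_2,q_1,q_2,r_1,r_2\}$). Your ``cleanest route'' induction breaks on the same example: whichever node of the leaf branch vertex $C_1$ you put into the independent set, deleting its closed neighborhood destroys more than one branch vertex, so the inductive count falls short of $a_1$.

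The fix is to route the argument through the degree-$\le 1$ vertices of the forest rather than the branch vertices, which is what the paper does: in any forest the number of vertices of degree at least $3$ is less than the number $l_1$ of vertices of degree at most $1$, and every leaf (or isolated) clique $C$ of the clique forest does contain a private node -- if all of $C$ lay in its unique neighbour $C'$, then $C\subseteq C'$ would contradict maximality of $C$, and connectivity of $\mathcal{T}(v)$ rules out membership in non-neighbouring cliques. These private nodes of distinct leaf cliques are pairwise non-adjacent (an edge would force a common maximal clique), giving an independent set of size $l_1\ge a_1$. Your first paragraph already contains all the ingredients for this (private nodes, disjointness, non-adjacency); the error was insisting on representatives inside the branch vertices themselves instead of counting via the leaves.
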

\begin{proof}
	Let $l_1$ be the number vertices of degree at most 1 in $\mathcal{T}_1$.
	It holds $\alpha(G) \ge l_1 \ge a_1$. The latter inequality follows from the fact that
	the number of vertices of degree at least $3$ in a forest is less than the number of vertices of degree at most 1.
	The former inequality follows from the fact that
	every vertex of degree at most one in $\mathcal{T}_1$ contains at least one node 
	that belongs only to this vertex, and hence the union of these nodes forms an independent set. 
\end{proof}

\begin{lemma} \label{lem:is-bound}
	$\alpha(G_{k+1}) \leq \frac{\epsilon}{2} \alpha(G_1)$.
\end{lemma}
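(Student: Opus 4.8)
The plan is to reduce the statement, via Lemma~\ref{lem:is-bound-2} ($\alpha(G_1) \ge a_1$), to proving $\alpha(G_{k+1}) \le \frac{\epsilon}{2}\,a_1$, and to obtain this from two independent estimates: a geometric decay of the number of high-degree vertices of the clique forest along the peeling process, and a bound $\alpha(G_{k+1}) \le d\cdot a_k$ expressed \emph{entirely in terms of $\mathcal{T}_k$} (the clique forest of $G_k$, which the algorithm already controls), rather than in terms of any clique forest of $G_{k+1}$. For the first estimate: for each $i<k$, $\mathcal{T}_{i+1}$ is obtained from $\mathcal{T}_i$ by deleting all maximal pendant paths and some maximal internal paths, and $\mathcal{T}_i$ is the clique forest of $G_i$ (Lemma~\ref{lem:tree-decomp}); hence the relative contraction step $|A_i| < |A_{i-1}|/2$ established inside the proof of the Pruning Lemma (Lemma~\ref{lem:forestPruning}) applies to $\mathcal{T}_1,\dots,\mathcal{T}_k$ and yields $a_k < a_1/2^{k-1}$.

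For the second estimate, fix a maximum independent set $S$ of $G_{k+1}$; it is also an independent set of $G_k$, and its vertices lie in $U_{k+1}=U_k\setminus V_k$. Split $S$ according to the position of the subtree $\mathcal{T}_k(v)$ in $\mathcal{T}_k$. If $\mathcal{T}_k(v)$ meets some vertex of $\mathcal{T}_k$ of degree at least $3$, charge $v$ to one such clique; since a clique meets $S$ in at most one node, this charging is injective into the $a_k$ high-degree vertices, so at most $a_k$ nodes of $S$ are handled this way. Otherwise $\mathcal{T}_k(v)$ avoids every vertex of degree at least $3$, hence is a subpath of a unique maximal binary path $\mathcal{P}$ of $\mathcal{T}_k$; since $v\notin V_k$, $\mathcal{T}_k(v)$ is not a subpath of any path in $\mathcal{L}_k$, so $\mathcal{P}\notin\mathcal{L}_k$, and therefore $\mathcal{P}$ is a maximal internal path with $\alpha(\mathcal{P})\le d-1$ (all pendant paths and all internal paths of independence number at least $d$ lie in $\mathcal{L}_k$). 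The members of $S$ falling inside a fixed such $\mathcal{P}$ form an independent set of $G[\bigcup_{C\in\mathcal{P}}C]$, so there are at most $\alpha(\mathcal{P})\le d-1$ of them; and the number of maximal internal paths of $\mathcal{T}_k$ is at most $a_k$ (contract every maximal binary path of $\mathcal{T}_k$ to an edge: each maximal internal path becomes an edge joining two vertices of degree at least $3$, and a forest on $a_k$ vertices has fewer than $a_k$ edges). Summing, $\alpha(G_{k+1})=|S|\le a_k+(d-1)a_k=d\,a_k$.

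Combining the two estimates gives $\alpha(G_{k+1}) \le d\,a_k < d\,a_1/2^{k-1} \le \frac{\epsilon}{2}\,a_1 \le \frac{\epsilon}{2}\alpha(G_1)$, where the middle inequality uses $2^{k-1}\ge 2d/\epsilon$, which holds because $k\ge \log(d/\epsilon)+2$; the case $a_1=0$ is trivial since then $a_k=0$. The step that needs the most care is the second estimate: it is tempting to run the argument on a ``clique forest of $G_{k+1}$'' obtained by deleting $\mathcal{L}_k$ from $\mathcal{T}_k$, but at iteration $k$ the internal paths put into $\mathcal{L}_k$ are selected by independence number, not by diameter, so Lemma~\ref{lem:internal-path-removal} need not apply and $\mathcal{T}_k-\mathcal{L}_k$ is in general \emph{not} the clique forest of $G_{k+1}$ (deleting those paths can disconnect some subtrees $\mathcal{T}_k(v)$). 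Avoiding this pitfall, by charging the independent set directly against the already-understood structure of $\mathcal{T}_k$, is what makes the proof go through; one must also verify the easy but essential fact that a forest has fewer maximal internal binary paths than vertices of degree at least three.
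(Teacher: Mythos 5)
Your proof is correct and follows essentially the same route as the paper's: it combines the decay $a_k \le a_1/2^{k-1}$ from the Pruning Lemma with Lemma~\ref{lem:is-bound-2} and the bound $\alpha(G_{k+1}) \le d\,a_k$, obtained exactly as in the paper by charging the independent set against the degree-$\ge 3$ vertices of $\mathcal{T}_k$ and the maximal internal paths of independence number less than $d$ (your version just spells out the charging and the count of such paths more explicitly).
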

\begin{proof}
	It follows from the proof of Lemma \ref{lem:forestPruning} that $a_k \leq \frac{a_1}{2^{k-1}}$. 
	Hence. using Lemma \ref{lem:is-bound-2} we derive
	\begin{eqnarray} \label{eqn:ak}
		a_k \leq \frac{a_1}{2^{k-1}} \leq \frac{\alpha(G_1)}{2^{k-1}} \leq \frac{\epsilon}{2d} \alpha(G_1).
	\end{eqnarray}
	
	Now, according to Algorithm \ref{alg:misChordal}, $G_{k+1}$ is obtained from $G_k$ by removing nodes $w$
	such that $\mathcal{T}_{k}(w)$ is a subpath of a maximal pendant path or a maximal internal path of independence number
	at least $d$.
	Therefore, every node of $G_{k+1}$ belongs to a vertex in $\mathcal{T}_k$ of degree at least 3,
	or to a vertex of a maximal internal path in $\mathcal{T}_k$ of independence number less than $d$.
	Since there are at most $a_{k}-1$ maximal internal path in $\mathcal{T}_k$, we conclude that
	$\alpha(G_{k+1}) \leq (d-1)(a_{k}-1) + a_k \leq d a_k$, which together with (\ref{eqn:ak}) implies the result.
\end{proof}

\begin{theorem}\label{th:MISapprox}
 For every $\epsilon \in (0, 1/2)$ Algorithm~\ref{alg:misChordal} is a $(1+\epsilon)$-approximation algorithm for the maximum independent set problem on chordal graphs. 
\end{theorem}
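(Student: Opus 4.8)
The plan is to fix a maximum independent set $I^*$ of $G$ and bound $|I^*|$ against the output $I$. Write $I$ as the disjoint union $\bigcup_H I_H$ over all connected components $H$ processed across the $k$ iterations (these are disjoint because the sets $W_{\mathcal P}$, $\mathcal P \in \mathcal L_i$, are pairwise disjoint within an iteration, the layers $V_i$ are disjoint across iterations, and $I_H \subseteq W_{\mathcal P} \subseteq V_i$). Partition $I^*$ according to the layer partition: let $A = I^* \cap U_{k+1}$ and $B = I^* \cap (V_1 \cup \dots \cup V_k)$. Since $A$ is an independent set of $G_{k+1}$, Lemma~\ref{lem:is-bound} immediately gives $|A| \le \alpha(G_{k+1}) \le \tfrac{\epsilon}{2}\alpha(G)$. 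The heart of the proof is to show $|B| \le (1+\tfrac{\epsilon}{8})|I| + \tfrac{\epsilon}{16}\alpha(G)$.

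To bound $|B|$ I would set up a charging scheme. For $v \in B$, say $v \in V_i$, and let $I_{<i}$ denote the partial independent set the algorithm holds at the start of iteration $i$. Call $v$ \emph{fresh} if $v \notin \Gamma_G[I_{<i}]$ and \emph{blocked} otherwise. A fresh $v$ lies in exactly one component $H$ of the interval graph $G_i[W_{\mathcal P}\setminus\Gamma_G[I_{<i}]]$ appearing in iteration $i$; charge it there. A blocked $v$ is dominated by $I_{<i}$; let $j < i$ be the first iteration after which $v$ becomes dominated, so $v$ has a neighbour in $I_j$ but none in $I_{<j}$, hence $v \in \Gamma_{G_j}[I_H]\setminus\Gamma_G[I_{<j}]$ for some component $H$ processed in iteration $j$ — charge $v$ to one such $H$. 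Because $v$ survives iteration $j$, its subtree in $\mathcal T_j$ is not contained in the maximal binary path $\mathcal P$ of $H$; by the clique-forest structure (Lemma~\ref{lem:helpful-1}, Lemma~\ref{lem:conflict-vertices}), $v$ then lies in one of the at most two cliques of $\mathcal T_j$ attached to the ends of $\mathcal P$. This charging is injective, and for a fixed component $H$ processed in iteration $j$ the nodes charged to $H$ are exactly $(I^*\cap H) \cup \{\text{blocked nodes charged to }H\}$, all contained in $\Gamma_{G_j}[I_H]\setminus\Gamma_G[I_{<j}]$ (using that a maximum independent set of $H$ dominates $H$ inside $H$) and pairwise non-adjacent as a subset of $I^*$.

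Now I would split the components by size. If $\alpha(H) < d$ and $H$ is processed at an iteration $j < k$, the \emph{absorbing} property $|I_H| = \alpha(\Gamma_{G_j}[I_H]\setminus\Gamma_G[I_{<j}])$ bounds the total charge to $H$ by $|I_H|$; for $j = k$ with $\alpha(H)<d$ there are no blocked nodes charged to $H$ at all (a blocked node must survive to a strictly later layer, of which there are none), and $|I^*\cap H|\le\alpha(H)=|I_H|$ still holds. If $\alpha(H)\ge d$, the interval-graph algorithm (Theorem~\ref{thm:mis-interval-graphs}, run with parameter $\epsilon/8$) gives $|I^*\cap H|\le\alpha(H)\le(1+\tfrac{\epsilon}{8})|I_H|$; for the blocked nodes charged to such an $H$ I would not use $|I_H|$ but a global count: each large $H$ has $|I_H| \ge \alpha(H)/(1+\tfrac{\epsilon}{8}) > d/2$, so there are fewer than $2\alpha(G)/d$ of them, each blocking at most two independent external nodes (one per attached clique, which is a clique), for a total of at most $4\alpha(G)/d \le \tfrac{\epsilon}{16}\alpha(G)$ by the choice $d = \lceil 64/\epsilon\rceil$. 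Summing over all components gives $|B|\le\sum_{\alpha(H)<d}|I_H| + (1+\tfrac{\epsilon}{8})\sum_{\alpha(H)\ge d}|I_H| + \tfrac{\epsilon}{16}\alpha(G) \le (1+\tfrac{\epsilon}{8})|I| + \tfrac{\epsilon}{16}\alpha(G)$. Combining with $|A|\le\tfrac{\epsilon}{2}\alpha(G)$ yields $(1-\tfrac{\epsilon}{2}-\tfrac{\epsilon}{16})\alpha(G)\le(1+\tfrac{\epsilon}{8})|I|$, which for $\epsilon\in(0,1/2)$ rearranges to $\alpha(G)\le(1+\epsilon)|I|$ — a routine inequality that is exactly what the constants $d=\lceil 64/\epsilon\rceil$ and $k=\lceil\log(d/\epsilon)+2\rceil$ are tuned to make go through.

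I expect the main obstacle to be making the \emph{blocked}-node charging fully rigorous: arguing that every blocked optimal node can be assigned to a single component so that it genuinely lies in that component's absorbing set, that distinct optimal nodes (fresh or blocked) are never charged to overlapping positions, and that the two end-cliques of a peeled path really do capture all of its external neighbours — this last point being where the interval structure of Lemma~\ref{lem:helpful-1} and the two-clique boundary description of Lemma~\ref{lem:conflict-vertices} are essential. Everything else is the bookkeeping above together with the closing arithmetic.
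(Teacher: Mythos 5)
Your proposal is correct and follows essentially the same route as the paper: discard $I^*\cap U_{k+1}$ via Lemma~\ref{lem:is-bound}, then account for the remaining optimal nodes using the absorbing property of the small components, the $(1+\epsilon/8)$ guarantee on the large ones, and the fact that the external neighbours of a peeled path lie in at most two cliques (Lemma~\ref{lem:conflict-vertices}). The only difference is bookkeeping: the paper covers $I^*_{\le k}$ by per-path sets $V_{\mathcal P}=(W_{\mathcal P}\cup\Gamma_{G_i}[I_{\mathcal P}])\setminus S_{i-1}$ and absorbs the at-most-two boundary nodes into a multiplicative $(1+\epsilon/4)$ factor per large path (using $|I_{\mathcal P}|\ge 32/\epsilon$), whereas you charge optimal nodes individually (fresh versus blocked) and pay for those boundary nodes with a global additive $\frac{\epsilon}{16}\alpha(G)$ term, which leads to the same conclusion.
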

\begin{proof}
Let $I^*$ be a maximum independent set in $G_1$. Let $I^* = I^*_{\le k} \cup I^*_{>k}$, where $I_i^* = I^* \cap V_i$, 
$I^*_{\le k} = \bigcup_{i=1}^k I^*_i$, and $I^*_{> k} = I^* \setminus I^*_{\le k}$. 
Since $|I^*_{>k}| \leq \alpha(G_{k+1})$, by Lemma~\ref{lem:is-bound}, we have $|I^*_{>k}| \leq \frac{\epsilon}{2} |I^*|$. 
In the following, we will show that $|I^*_{\le k}| \le (1+\frac{\epsilon}{4})|I|$. This then implies the result, because 
 $$
	|I^*| = |I^*_{\le k}| + |I^*_{>k}| \le \left( 1+\frac{\epsilon}{4} \right)|I| + \frac{\epsilon}{2} |I^*| \ ,  
 $$
 which in turn implies $|I^*| \leq (1+\epsilon) |I|$ (using the assumption $\epsilon < \frac{1}{2}$).

Let  $S_i = \bigcup_{r=1}^{i} \Gamma_{G_{i+1}} [I_r]$.
 In order to show that $|I^*_{\le k}| \le (1+\frac{\epsilon}{4})|I|$ holds, 
 we assign to each binary path $\mathcal{P} \in \mathcal{L}_i$ a subset of nodes 
 $V_{\mathcal{P}} = (W_{\mathcal{P}} \cup \Gamma_{G_i}[I_{\mathcal{P}}]) \setminus S_{i-1}$, if $i < k$, and
 $V_{\mathcal{P}} = W_{\mathcal{P}} \setminus S_{i-1}$, if $i = k$.
It is easy to see that $I^*_{\le k} \subseteq \bigcup_{i=1}^k V_i \subseteq \bigcup_{\mathcal{P} \in \mathcal{L}} V_{\mathcal{P}}$.
It thus remains to show that $\alpha(V_{\mathcal{P}}) \le (1+ \epsilon / 4)  |I_{\mathcal{P}}|$ holds, for every 
$\mathcal{P} \in \bigcup_{i=1}^k \mathcal{L}_i$. 
Let first $i < k$. We distinguish two cases:
 
\begin{enumerate}
	\item $\alpha(W_{\mathcal{P}} \setminus S_{i-1}) < d$. In this case, the algorithm computes an absorbing maximum independent set
	 $I_{\mathcal{P}}$ in $G_i[W_{\mathcal{P}} \setminus S_{i-1}]$, and hence 
	$$
		|I_{\mathcal{P}}| = \alpha(\Gamma_{G_i}[I_{\mathcal{P}}] \setminus S_{i-1}) = 
		\alpha((W_{\mathcal{P}} \cup \Gamma_{G_i}[I_{\mathcal{P}}]) \setminus S_{i-1}) = \alpha(V_\mathcal{P}).
	$$
	
	\item $\alpha(W_{\mathcal{P}} \setminus S_{i-1}) \geq d$. Observe that 
	$\alpha(V_{\mathcal{P}}) = \alpha( (W_{\mathcal{P}} \cup \Gamma_{G_i}[I_{\mathcal{P}}]) \setminus S_{i-1})
		\leq \alpha(W_{\mathcal{P}} \setminus S_{i-1}) + 2.$
		
	By definition of the algorithm, we have:
  	\begin{eqnarray} \label{eqn:inAtLeastD}
		(1+\epsilon / 8) |I_{\mathcal{P}}| & \ge & \alpha(W_{\mathcal{P}} \setminus S_{i-1}) \ge  \alpha(V_{\mathcal{P}}) - 2 \ ,
  	\end{eqnarray}
	which implies $|I_{\mathcal{P}}| \ge \frac{\alpha(W_{\mathcal{P}} \setminus S_{i-1})}{1+\epsilon / 8} \ge \frac{d}{1+\epsilon / 8} \ge 
	\frac{32}{\epsilon}$, and, in particular, $\frac{\epsilon}{8}|I_{\mathcal{P}}| \geq 2$.
	Using the latter in inequality~(\ref{eqn:inAtLeastD}) gives $(1+\epsilon / 4) |I_{\mathcal{P}}| \ge \alpha(V_{\mathcal{P}})$.  
\end{enumerate}

\noindent
For $i=k$ the above analysis becomes simpler, as $V_{\mathcal{P}} = W_{\mathcal{P}} \setminus S_{i-1}$, and we omit the details.
 
\end{proof}

\subsection{Distributed Implementation}\label{sec:chDistImpl}

We will argue now that Algorithm~\ref{alg:misChordal} can be implemented in the {\sf LOCAL} model in $\Order(\frac{1}{\epsilon}\log(\frac{1}{\epsilon})\log^* n)$ rounds.

\begin{theorem}\label{thm:dist-alg-mis}
	For every $\epsilon \in (0, 1/2)$, there is a deterministic $(1+\epsilon)$-approximation algorithm for 
	\textsc{MIS} on chordal graphs that runs in $\Order(\frac{1}{\epsilon}\log(\frac{1}{\epsilon})\log^* n)$ 
	rounds in the {\sf LOCAL} model.
\end{theorem}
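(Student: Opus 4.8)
The approximation guarantee is already established in Theorem~\ref{th:MISapprox}, so the task is purely to exhibit a {\sf LOCAL} implementation of Algorithm~\ref{alg:misChordal} and bound its round complexity. The plan is to show that a single iteration $i$ of the outer loop can be carried out in $\Order(\frac{1}{\epsilon}\log^* n)$ rounds; since there are only $k = \lceil \log(d/\epsilon)+2\rceil = \Order(\log\frac1\epsilon)$ iterations, the total is $\Order(\frac{1}{\epsilon}\log\frac1\epsilon\log^* n)$. The implementation closely parallels the one used for coloring in Section~\ref{sec:coloring-distributed}, but is in fact easier in one important respect: the number of iterations $k$ depends only on $\epsilon$ and is therefore known to every node in advance, so the asynchrony difficulties that motivated the design of \textsc{PruneTree} do not arise here — all nodes execute all $k$ iterations in lockstep.

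One iteration is implemented as follows. By the time iteration $i$ begins, the layers $V_1,\dots,V_{i-1}$, hence the vertex set $U_i$ of $G_i$ and the already-selected independent set $I$, are completely determined and known locally. Every node $v$ collects its neighborhood $\Gamma_G^{R}[v]$ for a radius $R = \Theta(d) = \Theta(\frac1\epsilon)$, together with the current layer labels and the $I$-membership of all nodes therein, in $\Order(\frac1\epsilon)$ rounds; from this it computes a local view of the clique forest $\mathcal{T}_i$ of $G_i$ exactly as in Section~\ref{sec:local-views} and decides whether $\mathcal{T}_i(v)$ is a subpath of a path in $\mathcal{L}_i$ (i.e. whether $v\in V_i$). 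One then processes, for each path $\mathcal{P}\in\mathcal{L}_i$, the interval graph $G_i[W_{\mathcal{P}}\setminus\Gamma_G[I]]$ (an interval graph by Lemma~\ref{lem:helpful-1}) component by component; since the paths in $\mathcal{L}_i$ are vertex-disjoint maximal binary paths, the sets $W_{\mathcal{P}}$ are pairwise non-adjacent in $G$, so all of this proceeds in parallel across the whole graph and the union $I_i$ is automatically independent. On a component $H$ with $\alpha(H)\ge d$ we invoke the distributed interval-graph algorithm of Theorem~\ref{thm:mis-interval-graphs} with parameter $\epsilon/8$, at a cost of $\Order(\frac1\epsilon\log^* n)$ rounds. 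On a component $H$ with $\alpha(H)<d$ we have $diam(H)\le 2(d-1)=\Order(\frac1\epsilon)$, so any node of $H$ can gather all of $H$ — plus, when $i<k$, the unique external clique of $\mathcal{T}_i$ that the analysis in Section~\ref{sec:chCentralizedImpl} shows is the only one $H$ can touch, which is $\Order(\frac1\epsilon)$ extra rounds — and locally compute the required absorbing (resp., for $i=k$, arbitrary) maximum independent set. Thus one iteration costs $\Order(\frac1\epsilon\log^* n)$ rounds, dominated by the interval-graph subroutine, and multiplying by $k=\Order(\log\frac1\epsilon)$ gives the theorem.

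The step I expect to be the main obstacle is the locality claim underlying the layer-assignment decision, i.e. that whether $v\in V_i$ is determined by $\Gamma_G^{R}[v]$ for $R=\Theta(d)$. The delicate case is a long \emph{pendant} path: a node far from its (unique) leaf cannot witness that leaf within an $\Order(\frac1\epsilon)$-radius ball. One must argue that it never needs to: either $v$ has already explored enough of its maximal binary path to certify $diam\ge 2d+3$ (for $i<k$) or independence number $\ge d$ (for $i=k$), in which case that path belongs to $\mathcal{L}_i$ irrespective of whether it is pendant or internal; or the path has $diam<2d+3$ (resp. independence number $<d$), which forces it to have diameter $\Order(d)$ and hence to be contained entirely in $\Gamma_G^{R}[v]$. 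Taking $R$ to be a sufficiently large constant multiple of $d$ — the analogue of the $10k$-radius exploration in Algorithm~\ref{alg:pruneTree} — makes this dichotomy go through, and one should also check that a node's local view of $\mathcal{T}_i$ obtained from $\Gamma_G^R[v]$ is accurate enough on the relevant portion (as in Section~\ref{sec:local-views}). A secondary point, easy but worth stating, is that the cross-iteration dependence on $\Gamma_G[I]$ does not inflate the round count: by iteration $i$ all earlier layers are finalized, so each iteration is self-contained given that data, and there are only $\Order(\log\frac1\epsilon)$ of them.
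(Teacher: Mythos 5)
Your proposal is correct and follows essentially the same route as the paper, whose own proof is only a brief sketch deferring to the coloring implementation: local clique-forest views with exploration radius $\Theta(d)$, $k=\Order(\log\frac{1}{\epsilon})$ peeling iterations, small-independence-number components handled locally in $\Order(d)$ rounds, and Algorithm~\ref{alg:misInterval} on the large components, giving $\Order(\frac{1}{\epsilon}\log(\frac{1}{\epsilon})\log^* n)$ rounds overall; your treatment of the pendant-path locality issue matches the dichotomy already built into \textsc{PruneTree}. The only slight imprecision is the claim of fully lockstep execution: since $n$ is unknown, the duration of the interval-graph subroutine is not known in advance, so the cross-iteration dependence through $\Gamma_G[I]$ still needs the same local waiting mechanism as in the coloring algorithm (or a separation of the layer-determination phase, which is lockstep, from the independent-set computations), but this does not affect the stated round bound.
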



Most of the techniques used in this implementation have already been employed in our distributed coloring algorithm, and we therefore omit a lengthy exposition. 

Similar to the coloring algorithm, network nodes obtain local views of the clique forest and execute the peeling process, however, they stop after $k = \Order(\log \frac{1}{\epsilon})$ rounds. 
After each step of the peeling process, we compute independent sets in interval graphs corresponding to the removed 
maximal binary paths of the clique forest.
In general, these interval graphs are disconnected, and we compute independent sets in every connected component in parallel.
For components of small independence number, and therefore of small diameter, we compute the maximum independent sets, 
which are local operations requiring at most $\Order(d)$ rounds. For components of large independence number, we compute 
approximate maximum independent sets in $\Order(\frac{1}{\epsilon} \log^* n)$ rounds using Algorithm~\ref{alg:misInterval}.
The runtime is hence $\Order(\frac{1}{\epsilon}\log(\frac{1}{\epsilon})\log^* n)$.

\section{Lower Bound on the Round Complexity for MIS}
\label{app:lowerMIS}

In this section we show that any randomized $(1 + \epsilon)$-approximation algorithm for \textsc{MIS} 
in the {\sf LOCAL} model requires $\Omega(\frac{1}{\epsilon})$ rounds, even on paths.

Let $P_n=(V, E)$ be the path on $n$ nodes with $V = \{v_1, v_2, \dots, v_n\}$ and 
 $E = \{ v_i v_j ~|~ |i-j| = 1 \}$. We assume that the path is labelled, i.e., every node $v_i$
is assigned a unique label $\ell(v_i)$, where $\ell$ is chosen uniformly at random from the set of bijections
between $V$ and $\{1, 2, \dots, n\}$.
In the following proof, for $i \le j$ we use the notation $V_{i,j} := \{v_i, v_{i+1}, \dots, v_j \}$.
 
\begin{theorem} \label{thm:lb-mis}
 For every $\epsilon > 0$ and $n$ large enough, every randomized algorithm in the {\sf LOCAL} model with expected approximation factor at most $1+ \epsilon$ for \textsc{MIS} requires $\Omega(\frac{1}{\epsilon})$ rounds.
\end{theorem}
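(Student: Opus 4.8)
The plan is to exhibit a hard input distribution over labelled paths and invoke Yao's principle, so that it suffices to exhibit a deterministic $T$-round algorithm's failure in expectation when $T = o(1/\epsilon)$. The key idea is that a deterministic algorithm running in $T$ rounds decides whether a node $v_i$ joins the independent set based solely on its $T$-hop view, i.e., on the labels appearing in $V_{i-T, i+T}$ together with their relative positions. Consider a long path and look at a window of $\Theta(T)$ consecutive internal nodes all of whose $T$-hop neighbourhoods are disjoint from the endpoints of the path; a standard indistinguishability argument shows the algorithm's behaviour on such a window is, in distribution, the same as its behaviour on a cyclic path (or on a window with a fresh random labelling), so it cannot exploit global position. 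On any single path of $m$ internal nodes, an optimal independent set has size $\lceil (m+1)/2 \rceil$, whereas an independent set that is forced to be \emph{maximal but suboptimal} — e.g. one that ever selects two nodes at distance exactly $3$ — wastes one node per such ``defect''. The strategy is to lower-bound the expected number of defects the algorithm must commit.

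The core combinatorial step is a \emph{rigidity / defect-counting} argument: partition the path into blocks of length $\Theta(T)$ and argue that within each block, because the algorithm's decisions are functions of bounded-radius views and the random labelling makes distant blocks behave near-independently, the probability that the algorithm produces the \emph{unique} optimal pattern on that block is bounded away from $1$ by a constant — intuitively, the algorithm must ``choose a phase'' (even vs.\ odd nodes) with only local information, and conflicting phase choices in adjacent regions force at least one wasted node. Summing over the $\Theta(n/T)$ blocks yields an expected number of wasted nodes $\Omega(n/T)$, hence $\Exp|I| \le \frac{n}{2} - \Omega(n/T)$ while $\alpha(P_n) = \lceil (n+1)/2 \rceil$, giving expected approximation ratio at least
\[
  \frac{\alpha(P_n)}{\Exp|I|} \ge \frac{n/2}{n/2 - \Omega(n/T)} = 1 + \Omega\!\left(\tfrac{1}{T}\right).
\]
For this to be at most $1 + \epsilon$ we need $\Omega(1/T) \le \epsilon$, i.e. $T = \Omega(1/\epsilon)$, which is the claimed bound. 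Finally I would translate this back to randomized algorithms via Yao: if some randomized algorithm had expected approximation factor $\le 1+\epsilon$ on \emph{every} labelled path in $o(1/\epsilon)$ rounds, then in particular it would do so in expectation over the uniform random labelling, contradicting the deterministic lower bound obtained by fixing the best deterministic algorithm in its support.

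I expect the main obstacle to be making the defect-counting step rigorous: one must show that the events ``block $j$ is colored non-optimally'' have enough independence (or positive correlation in the right direction) that their expectations sum, and that a constant fraction of blocks really are forced into a defect rather than the adversary-in-the-labels merely making them \emph{look} bad without cost. The cleanest route is probably to set up a direct adversary argument on a fixed path where we compare the algorithm's output on two inputs that differ only by a relabelling/shift outside a node's view, forcing an inconsistency between two overlapping windows that can only be resolved by wasting a node; iterating this along $\Theta(n/T)$ disjoint regions gives the bound without delicate probabilistic independence claims. I would also need to handle the boundary case $\epsilon$ not too small relative to $1/n$ (``$n$ large enough'' in the statement) and confirm the optimal independent set size and the ``distance-$3$ selection wastes a node'' arithmetic, both of which are routine.
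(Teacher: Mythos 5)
Your overall skeleton (reduce to deterministic algorithms under a uniformly random labelling by averaging, show the expected waste is $\Omega(n/T)$, conclude an approximation ratio of $1+\Omega(1/T)$ and hence $T=\Omega(1/\epsilon)$) is the same shape as the paper's argument, and the final averaging step over the algorithm's coins is fine. The genuine gap is precisely the step you yourself flag: the claim that each block of length $\Theta(T)$ incurs a wasted node with probability bounded away from zero. As written this is an intuition ("the algorithm must choose a phase"), not a proof: the algorithm need not commit to a clean alternating pattern inside a block, the blocks are not independent under a random permutation of the labels, and nothing in the sketch rules out an algorithm whose local decisions are correlated through the shared random labels so that phase mismatches are rare. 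Your proposed rescue -- an adversary argument on a fixed path comparing outputs under two labellings that agree inside a window -- would at best exhibit one bad labelling per deterministic algorithm, i.e.\ a worst-case deterministic bound; that does not feed into your own Yao/averaging step, which needs a lower bound on the \emph{expected} waste over the random labelling for \emph{every} deterministic algorithm (the theorem concerns randomized algorithms whose expected ratio is small on every input). So the core quantitative step is missing and the suggested way to make it rigorous does not fit the framework you set up.

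For comparison, the paper closes exactly this gap with a short exchangeability argument that avoids any independence-between-blocks claim: under the random labelling, every node at distance more than $r$ from the path ends is selected with the same probability $p$; conditioned on $v_i\in I$, the $r$-neighborhoods of $v_{i+2r+1}$ and $v_{i+2r+2}$ are disjoint from that of $v_i$ and identically distributed, so these two adjacent (hence mutually exclusive) nodes are equally likely to be selected, giving $\Pr\left[v_{i+2r+2}\in I \mid v_i\in I\right]\le \tfrac12$. Counting selected nodes in each window of $2r+3$ consecutive positions then gives $p(2r+3)\le r+5/4+\Order(1/n)$, i.e.\ $p\le \tfrac12-\Omega(1/r)$, which is exactly the "constant expected waste per $\Theta(r)$-window" your plan needs; the ratio bound $1+\Omega(1/r)$ follows immediately. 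To repair your proposal, replace the phase-mismatch/adversary step by a symmetry argument of this kind (or otherwise prove the per-window expected waste directly under the random labelling).
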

\begin{proof}
Let $\mathbf{A}$ be an $r$-round distributed randomized algorithm for the maximum independent set problem with 
expected approximation ratio at most $1+ \epsilon$. Let $I$ be the output independent set computed by $\mathbf{A}$ on $P_n$. We define 
$p_i := \Pr \left[ v_i \in I \right]$, where the probability is taken over the random bits of the algorithm and the 
labelling function. Then, by linearity of expectation, $\Exp |I| = \sum_{i \in [n]} p_i$. Since the size of a maximum 
independent set in $P_n$ is $\lceil n/2 \rceil$, and the expected approximation factor of $\mathbf{A}$ is at most $1+\epsilon$, we have 
\begin{eqnarray}\label{eqn:392}
(1+\epsilon) \cdot \sum_{i \in [n]} p_i \ge \lceil n/2 \rceil. 
\end{eqnarray}

Next, notice that in $r$ rounds, every node can only learn its local $r$-neighborhood. Hence, by symmetry, all nodes
that are at distance at least $r+1$ from the boundary of the path (i.e., from $v_1$ and $v_n$) have the same probability $p$
to be chosen into the independent set, i.e., $p_i = p$ for every $i \in \{ r+2, r+3, \ldots, n-r-1 \}$. 
Since a maximum independent set in $G[V_{r+2, n-r-1} ]$ is of size at most 
$\frac{n-2r-2}{2} + 1$, we have $p \le \frac{1}{2} + \Order(\frac{1}{n})$.  

For every $i \in \{ r+2, r+3, \ldots, n - 3 r-3 \}$ we denote by $X_i$ the number of nodes of $V_{i, i+2r+2}$ selected into $I$,
i.e., $X_i = |V_{i, i+2r+2} \cap I|$. We will argue now that $X_i \leq r+5/4 + \Order(\frac{1}{n})$. 
To this end, suppose first that the event $v_i \in I$ happens 
(which happens with probability $p$). Since the $r$-neighborhoods of $v_i$ and $v_{i+2r+1}$ are disjoint, and the $r$-neighborhoods 
of $v_i$ and $v_{i+2r+2}$ are disjoint too, it is equally likely that $v_{i+2r+1}$ or $v_{i+2r+2}$ will be selected into the 
independent set. Therefore, $\Pr \left[ v_{i+2r+2} \in I \, | \, v_i \in I \right] \le \frac{1}{2}$. Notice that if $v_{i+2r+2} \notin I$, 
then $X_i \le r+1$. Hence, we have $\Exp[X_i | v_i \in I] \leq r+1 + 1/2$. 
On the other hand if $v_i \notin I$, then $X_i \le r+1$ always holds. Thus, 
\begin{eqnarray*}
\Exp[X_i] = \Exp[X_i | v_i \in I] \cdot \Pr[v_i \in I] + \Exp[X_i | v_i \notin I] \cdot \Pr[v_i \notin I] \leq \\
(r+1+1/2)p + (r+1)(1-p) \leq r + 5/4 + \Order(\frac{1}{n}).
\end{eqnarray*}
This proves the claim. Since $\Exp[X_i] = \sum_{j=i}^{i+2r+2} p_i = p (2r+3)$, we obtain 
$p \le \frac{r+5/4 + \Order(\frac{1}{n})}{2r+3}$. Using this in inequality~(\ref{eqn:392}),
we obtain: 
\begin{eqnarray*}
\lceil n/2 \rceil \le (1+\epsilon) \cdot \sum_{i \in [n]} p_i \le (1+\epsilon) \left( (2r+2) + (n-2r-2) p \right) \le (1+\epsilon) n \left(\frac{1}{2} - \frac{1}{8r+12} \right) + \Order(1)  \ ,
\end{eqnarray*}
which in turn implies $r = \Omega(\frac{1}{\epsilon})$ and proves the theorem.
\end{proof}

\section{Conclusion} \label{sec:conclusion}

In this paper, we gave distributed $(1+\epsilon)$-approximation algorithms for \textsc{MVC} and \textsc{MIS} on chordal graphs. 
We showed that in chordal graphs network nodes can obtain coherent views of a global tree decomposition, which enabled us to
exploit the tree structure of the input graph for the design of algorithms. 

How can we extend the class of graphs on which we can solve \textsc{MVC} and \textsc{MIS} within a small approximation factor even 
further? In particular, how can we handle graphs that contain longer induced cycles, such as $k$-chordal graphs (for some integer $k$)?

\bibliographystyle{plainurl}
\bibliography{chordal}

\end{document}